\newcommand{\pE}{\widetilde{\E}}
\newcommand{\Bin}{\mathrm{Bin}}
\newcommand{\Ber}{\mathrm{Ber}}
\newcommand{\OBJ}{\mathrm{OBJ}}
\newcommand{\cI}{\mathcal{I}}
\newcommand{\Ind}{\mathbb{I}}
\newcommand{\cW}{\mathcal{W}}
\newcommand{\maxcut}{{\sc max-cut}\xspace}
\newcommand{\minbisection}{{\sc min-bisection}\xspace}
\newcommand{\ER}{Erd\H{o}s--R\'enyi\xspace}
\newcommand{\txor}{{\sc 2-xor}\xspace}
\newcommand{\xor}[1]{{\sc {#1}-xor}\xspace}
\newcommand{\sat}[1]{{\sc {#1}-sat}\xspace}
\newcommand{\ippi}[2]{\langle #1, #2\rangle_{\pi}}
\newcommand{\X}{\mathsf{X}}
\newcommand{\Y}{\mathsf{Y}}
\newcommand{\ZZ}{\mathsf{Z}}
\newcommand{\psdone}{\widetilde{\Psi}}
\newcommand{\psdtwo}{\Psi}
\newcommand{\specrad}{\rho}
\begin{document}

\title{Sherali--Adams Strikes Back}
\author{Ryan O'Donnell\thanks{Computer Science Department, Carnegie Mellon University.  \texttt{odonnell@cs.cmu.edu}. Supported by NSF grants CCF-1618679, CCF-1717606. This material is based upon work supported by the National Science Foundation under grant numbers listed above. Any opinions, findings and conclusions or recommendations expressed in this material are those of the author and do not necessarily reflect the views of the National Science Foundation (NSF).} \and Tselil Schramm\thanks{Harvard University and MIT. \texttt{tselil@seas.harvard.edu}. This work was supported by NSF grants CCF 1565264 and CNS 1618026}. }

\maketitle

\begin{abstract}
    Let $G$ be any $n$-vertex graph whose random walk matrix has its nontrivial eigenvalues bounded in magnitude by $1/\sqrt{\Delta}$ (for example, a random graph $G$ of average degree~$\Theta(\Delta)$ typically has this property).  We show that the $\exp\parens*{c \frac{\log n}{\log \Delta}}$-round Sherali--Adams linear programming hierarchy certifies that the maximum cut in such a~$G$ is at most $50.1\%$ (in fact, at most $\tfrac12 + 2^{-\Omega(c)}$).  For example, in random graphs with $n^{1.01}$ edges, $O(1)$ rounds suffice; in random graphs with $n \cdot \polylog(n)$ edges, $n^{O(1/\log \log n)} = n^{o(1)}$ rounds suffice.  
    
    Our results stand in contrast to the conventional beliefs that linear programming hierarchies perform poorly for \maxcut and other CSPs, and that eigenvalue/SDP methods are needed for effective refutation.  Indeed, our results imply that constant-round Sherali--Adams can strongly refute random Boolean $k$-CSP instances with $n^{\lceil k/2 \rceil + \delta}$ constraints; previously this had only been done with spectral algorithms or the SOS SDP hierarchy.
\end{abstract}
\thispagestyle{empty}
\clearpage

\setcounter{page}{1}
\section{Introduction}

Linear programming (LP) is a fundamental algorithmic primitive, and is the method of choice for a huge number of optimization and approximation problems.  Still, there are some very basic tasks where it performs poorly.  A classic example is the simplest of all constraint satisfaction problems (CSPs), the \maxcut problem: Given a graph $G = (V,E)$, partition $V$ into two parts so as to maximize the fraction of `cut' (crossing) edges.  The standard LP relaxation for this problem~\cite{BM86,Pol92} involves optimizing over the \emph{metric polytope}. Using ``$\pm 1$ notation'', we have a variable $\Y_{uv}$ for each pair of vertices $\{u,v\}$ (with $\Y_{uv}$ supposed to be $-1$ if the edge is cut, $+1$ otherwise); the LP is:
\begin{align*}
    &&  -1 \leq \Y_{uv} \leq 1 & \quad \text{(for all $u,v \in V$)}\\
    \text{\maxcut}(G) \ \leq \ \max\ \smash{\frac12 - \frac12 \cdot \frac{1}{|E|}\sum_{uv \in E} \Y_{uv}}\quad\quad \text{s.t.}&&-\Y_{uv} - \Y_{vw} - \Y_{wu} \leq 1 &\quad  \text{(for all $u,v,w \in V$)}\\
    &&-\Y_{uv} + \Y_{vw} + \Y_{wu} \leq 1 &\quad \text{(for all $u,v,w \in V$)}
\end{align*}
While this LP gives the optimal bound for  some graphs (precisely, all graphs not contractible to~$K_5$~\cite{BM86}), it can give a very poor bound in general.  Indeed, although there are graphs with maximum cut arbitrarily close to $1/2$ (e.g.,~$K_n$), the above LP bound is at least $2/3$ for every graph, since $\Y_{uv} \equiv -1/3$ is always a valid solution.  Worse, there are graphs $G$ with $\text{\maxcut}(G)$ arbitrarily close to~$1/2$ but with LP value arbitrarily close to~$1$ --- i.e., graphs where the \emph{integrality ratio} is $2-o(1)$.  For example, this is true~\cite{PT94} of an \ER $\calG(n,\Delta/n)$ random graph with high probability (whp) when $\Delta = \Delta(n)$ satisfies $\omega(1) < \Delta < n^{o(1)}$.

There have been two main strategies employed for overcoming this deficiency: strengthened LPs, and eigenvalue methods.

\paragraph{Strengthened LPs.} One way to try to improve the performance of LPs on \maxcut is to add more valid inequalities to the LP relaxation, beyond just the ``triangle inequalities''.  Innumerable valid inequalities have been considered: $(2k+1)$-gonal, hypermetric, negative type, gap, clique-web, suspended tree, as well as inequalities from the Lov\'asz--Schrijver hierarchy;  see Deza and Laurent~\cite[Ch.~28--30]{DL97} for a review.

It is now known that the most principled and general form of this strategy is the \emph{Sherali--Adams LP hierarchy}~\cite{SA90}, reviewed in \Cref{sec:proof-systems}.
At a high level, the Sherali--Adams LP hierarchy gives a standardized way to tighten LP relaxations of Boolean integer programs, by adding variables and constraints.
The number of new variables/constraints is parameterized by a positive integer $R$, called the number of ``rounds''.  Given a Boolean optimization problem with $n$ variables, the $R$-round Sherali--Adams LP has variables and constraints corresponding to monomials of degree up to~$R$, and thus has size $O(n)^{R}$.
A remarkable recent line of work~\cite{CLRS16,KothariMR17} has shown that for any CSP (such as \maxcut), the $R$-round Sherali--Adams LP relaxation achieves essentially the tightest integrality ratio among all LPs of its size. 
Nevertheless, even this most powerful of LPs arguably struggles to certify good bounds for \maxcut.
In a line of work \cite{dlVM07,SchoenebeckTT07} concluding in a result of Charikar--Makarychev--Makarychev~\cite{CMM09}, it was demonstrated that for any constant $\eps > 0$, there are graphs (random $\Delta$-regular ones, $\Delta = O(1)$) for which the $n^{\Omega(1)}$-round Sherali--Adams LP has a \maxcut integrality gap of $2 - \eps$.  As a consequence, \emph{every} \maxcut LP relaxation of size up to $2^{n^{\Omega(1)}}$ has such an integrality gap.

\paragraph{Eigenvalue and SDP methods.}  But for \maxcut, there is a simple, non-LP, algorithm that works very well to certify that random graphs have maximum cut close to~$1/2$: the \emph{eigenvalue} bounds.  There are two slight variants here (that coincide in the case of regular graphs):  Given graph $G = (V,E)$ with adjacency matrix~$A$ and diagonal degree matrix~$D$, the eigenvalue bounds are
\begin{align}
    \text{\maxcut}(G) &\leq \frac{|V|}{4|E|} \lambda_{\text{max}}(D - A) \label[ineq]{eqn:eigval1}\\
    \text{\maxcut}(G) &\leq \frac12 +\frac12 \lambda_{\text{max}}(-D^{-1}A). \label[ineq]{eqn:eigval2}
\end{align}
Here $D-A$ and $D^{-1}A$ are the \emph{Laplacian} matrix and the \emph{random walk} matrix, respectively. The use of eigenvalues to bound various cut values in graphs (problems like \maxcut, \minbisection, \txor, expansion, etc.)\ has a long history dating back to Fieldler and Donath--Hoffman~\cite{fiedler1973algebraic,donath2003lower} among others (\Cref{eqn:eigval1} is specifically from Mohar--Poljak~\cite{mohar1993eigenvalues}).  It was recognized early on that eigenvalue methods work particularly well for solving planted-random instances (e.g., of \txor~\cite{Hastad84} and \minbisection~\cite{Boppana87}) and for certifying \maxcut values near $1/2$ for truly random instances.  Indeed, as soon as one knows (as we now do~\cite{TY16,FO05}) that $D^{-1}A$ has all nontrivial eigenvalues bounded in magnitude by $O(1/\sqrt{\Delta})$  (whp) for a random $\Delta$-regular graph (or an \ER $\mathcal{G}(n,\Delta/n)$ graph with $\Delta \gtrsim \log n$), the eigenvalue bound \Cref{eqn:eigval2} certifies that $\text{\maxcut}(G) \leq 1/2 + O(1/\sqrt{\Delta})$.  This implies an integrality ratio tending to~$1$;  indeed, $\text{\maxcut}(G) = 1/2 + \Theta(1/\sqrt{\Delta})$ in such random graphs (whp).

Furthermore, if one extends the  eigenvalue bound \Cref{eqn:eigval1} above to
\[
    \text{\maxcut}(G) \leq \min_{\substack{U \text{ diagonal} \\ \mathrm{tr}(U) = 0}} \frac{|V|}{4|E|} \lambda_{\text{max}}(D - A + U)
\]
(as suggested by Delorme and Poljak \cite{DP93}, following \cite{donath2003lower,Boppana87}), one obtains the polynomial-time computable \emph{semidefinite programming (SDP)} bound.  Goemans and Williamson \cite{GoemansW95} showed this bound has integrality ratio less than $1.14 \approx 1/.88$ for \emph{worst-case}~$G$, and it was subsequently shown~\cite{Zwick1999, feige2001rpr, charikar2004maximizing} that the SDP bound is $1/2 + o(1)$ whenever $\text{\maxcut}(G) \leq 1/2 + o(1)$.

\paragraph{LPs cannot compete with eigenvalues/SDPs?}  This seemingly striking separation between the performance of LPs and SDPs  in the context of random \maxcut instances is now taken as a matter of course.  To quote, e.g., \cite{Trevisan09},
 \begin{quotation}
    [E]xcept for semidefinite programming, we know of no technique that can provide, for every graph of max cut optimum $\leq .501$, a certificate that its optimum is $\leq .99$. Indeed, the results of \cite{dlVM07,SchoenebeckTT07}[\cite{CMM09}] show that large classes of Linear Programming relaxations of max cut are unable to distinguish such instances.
\end{quotation}
Specifically, the last statement here is true for $\Delta$-regular random graphs when $\Delta$ is a certain large constant. 
The conventional wisdom is that for such graphs, linear programs cannot compete with semidefinite programs, and cannot certify even the eigenvalue bound.\\

Our main result challenges this conception.

\subsection{Our results}
We show that whenever the eigenvalue bound \Cref{eqn:eigval2} certifies the bound ${\text \maxcut}(G) \leq 1/2 + o(1)$, then $n^{o(1)}$-round Sherali--Adams can certify this as well.\footnote{Actually, there is a slight mismatch between our result and \Cref{eqn:eigval2}: in \Cref{thm:main} we need the maximum eigenvalue \emph{in magnitude} to be small; i.e., we need $\lambda_{\text{min}}(-D^{-1}A)$ to be not too negative.  This may well just be an artifact of our proof.}
\begin{theorem}[Simplified version of \Cref{thm:main-2xor} and \Cref{cor:main-2xor}]\label{thm:main}
Let $G$ be a simple $n$-vertex graph and assume that $|\lambda| < \rho$ for all eigenvalues $\lambda$ of $G$'s random walk matrix $D^{-1}A$ (excluding the trivial eigenvalue of $1$). 
Then for any $1 \leq c \leq \Omega(\log(1/\rho))$, Sherali--Adams with $n^{O(c/\log(1/\rho))}$ rounds certifies that ${\text \maxcut}(G) \leq 1/2 + 2^{-c}$.
\end{theorem}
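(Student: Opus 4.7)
The plan is to prove the theorem in its Sherali--Adams dual form: I want to show that every valid degree-$R$ Sherali--Adams pseudo-expectation $\pE$ on $\{\pm 1\}^n$, with $R = n^{O(c/\log(1/\rho))}$, satisfies $\pE[\OBJ] \ge -2^{-c+1}$, where $\OBJ(Y) = \tfrac{1}{|E|}\sum_{uv\in E}Y_uY_v$; equivalently, I will construct a non-negative combination of products of literals (of degree $\le R$) summing to $\OBJ(Y) + 2^{-c+1}$ modulo $\{Y_i^2-1\}_i$.

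The starting point is a pointwise spectral identity. Write $N = D^{-1/2}AD^{-1/2} = \pi^{1/2}(\pi^{1/2})^T + M$ with $\|M\|\le\rho$, and set $y_u = \sqrt{d_u}\,Y_u$; then
\[
  Y^TA^{2k+1}Y \;=\; \tfrac{1}{2|E|}\bigl(\textstyle\sum_u d_uY_u\bigr)^{\!2} + y^T M^{2k+1} y \;\ge\; -2|E|\cdot\rho^{2k+1},
\]
and choosing $k \approx c/\log(1/\rho)$ produces a spectral error $\le 2|E|\cdot 2^{-c}$. To translate this pointwise bound into Sherali--Adams, I expand $Y^TA^{2k+1}Y = \sum_{(v_0,\ldots,v_{2k+1})}Y_{v_0}Y_{v_{2k+1}}$ as a sum over length-$(2k+1)$ walks in $G$, and use the telescoping identity $Y_{v_0}Y_{v_{2k+1}} = \prod_{i=0}^{2k}Y_{v_i}Y_{v_{i+1}}$ (valid modulo $Y_i^2=1$) to rewrite each walk-summand as a product of $2k+1$ edge-variables supported on $\le 2k+2$ vertices. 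Each such product then expands, via $Y_aY_b = 1 - 2\cdot\tfrac{1-Y_aY_b}{2}$, into a signed combination of degree-$O(k)$ literal-products that Sherali--Adams at degree $O(k)$ can express. Combining this with a Chebyshev-like approximation of the identity map $t\mapsto t$ on $\mathrm{spec}(W)$ by an odd polynomial of degree $\le 2k+1$ with residual $\lesssim\rho^{2k+1}$, I aim to express $\OBJ(Y) + 2^{-c+1}$ as a non-negative sum of such literal-products.

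The main obstacle I anticipate is the ``trivial eigenvector'' contribution $\tfrac{1}{2|E|}(\sum_u d_uY_u)^2$: it is pointwise non-negative, but being a generic sum of squares it is not directly certifiable by low-degree Sherali--Adams (this is precisely the weakness of LPs relative to SDPs). My plan to circumvent this is to avoid certifying $(\sum_u d_uY_u)^2\ge 0$ directly, and instead to use the random-walk mixing estimate $W^{2\ell} \approx \tfrac{1}{n}\mathbf{1}\mathbf{1}^T$ (with spectral error $\le\rho^{2\ell}$) to replace the trivial term by the degree-$2$ polynomial $\tfrac{1}{2|E|}\sum_{u,v}d_ud_v(W^{2\ell})_{uv}Y_uY_v$ plus a small residual; the replacement is once again a walk-sum amenable to the same literal-product machinery, applied to walks of length $2\ell$. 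The round budget $R = n^{O(c/\log(1/\rho))}$ reflects the need to coordinate joint distributions over the vertex sets of all walks of length $\lesssim c/\log(1/\rho)$, which can have cumulative cardinality up to $\bar\Delta^{O(c/\log(1/\rho))} = n^{O(c/\log(1/\rho))}$ when $\bar\Delta\le n^{o(1)}$; ensuring the aggregate literal-product combination has non-negative coefficients is the main technical hurdle.
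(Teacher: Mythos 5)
Your proposal correctly identifies the central obstacle -- converting a global spectral inequality into a Sherali--Adams derivation, where all coefficients on the LP side must be non-negative -- but it does not actually resolve it; you explicitly defer this as ``the main technical hurdle,'' and this hurdle is precisely the content of the paper's proof. The telescoping identity $Y_{v_0}Y_{v_{2k+1}} = \prod_i Y_{v_i}Y_{v_{i+1}}$ and the re-expansion into literal products are valid manipulations modulo $\{Y_i^2 - 1\}$, but the resulting signed combination has no reason to decompose as a non-negative sum of SA axioms. The pointwise inequality $Y^\top A^{2k+1}Y \ge -2|E|\rho^{2k+1}$ (or its Chebyshev variant) is a genuinely \emph{global} spectral fact: it can be certified by a degree-$2$ SOS/SDP, but not by an LP operating on literal products of bounded support, and nothing in the proposal bridges that gap.

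The paper's resolution is to work only with \emph{local} PSD constraints -- i.e., constraints of the form $\mathsf{x}^\top P\mathsf{x}\ge 0$ where $\mathsf{x}$ is a subvector of $\X$ of length at most $R$ and $P$ is an explicit PSD matrix -- and to aggregate many such constraints by taking an expectation over random homomorphisms $\bphi: T \to V$ from a fixed tree $T$ into $G$. The tree is the $(k,\ell)$-spider on $S = k\ell+1$ vertices, and the key technical step (\Cref{thm:spider-technical}) is constructing a PSD matrix $\psdtwo \in \R^{S\times S}$ whose inner products $\la\psdtwo, A^{(d)}\ra$ with the distance-$d$ adjacency matrices of the spider vanish for all $1 < d < 2\ell$. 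Since $\E_{\bphi}[\X_{\bphi(i)}\X_{\bphi(j)}] = \ippi{\X}{\ol{K}^{\dist_S(i,j)}\X}$ under the stationary tree-indexed walk, averaging the local constraint $0 \le \la\psdtwo, \bm{\ol{\mathsf{x}}}\bm{\ol{\mathsf{x}}}^\top\ra$ produces exactly
\[
0 \le c_0\ippi{\X}{\X} + k^{1/2\ell}\ippi{\X}{\ol{K}\X} + \tfrac12(k-1)\ippi{\X}{\ol{K}^{2\ell}\X},
\]
and only the $d=2\ell$ ``tail'' term needs a spectral bound (handled locally via \Cref{lem:wack}). This careful cancellation is what your proposal is missing; without an analogue of $\psdtwo$, your plan would leave uncontrolled cross terms with unconstrained signs. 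Two further discrepancies: (i) the \maxcut case (trivial eigenvalue $1$) is handled in the paper not by explicitly subtracting $W^{2\ell}\approx \tfrac1n\mathbf{1}\mathbf{1}^\top$ but by taking \emph{two independent} spider walks and a block PSD matrix $\dot\psdtwo$ that cancels the $J$-contribution algebraically; and (ii) your account of the round budget is off -- the $n^{O(c/\log(1/\rho))}$ comes from the number of \emph{legs} $k$ of the spider (which must satisfy $k\ge 3^\ell$ and is taken to be $(1/\eps)^{2\ell}$), not from enumerating vertex sets of walks.
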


For example, if $G$'s random walk matrix has its nontrivial eigenvalues bounded in magnitude by $n^{-.001}$, as is the case (whp) for random graphs with about $n^{1.002}$ edges, then Sherali--Adams can certify ${\text \maxcut}(G) \leq 50.1\%$ with \emph{constantly} many rounds.  We find this result surprising, and in defiance of the common belief that polynomial-sized LPs cannot take advantage of spectral properties of the underlying graph.\footnote{We should emphasize that it is not concomitant $n^{O(1)}$ running time that is surprising, since the eigenvalue bound itself already achieves that.}

(As an aside, in \Cref{prop:lb}, we show that the $R$-round Sherali--Adams relaxation for \maxcut has value at least $1/2 + \Omega(1/R)$ for \emph{every} graph~$G$.  This at least demonstrates that some dependence of the refutation strength on the number of Sherali--Adams rounds is always necessary.)

One might ask whether \Cref{thm:main} even requires the assumption of small eigenvalues.  That is, perhaps $n^{o(1)}$-round Sherali--Adams can certify $\text{\maxcut} \leq 1/2 + o(1)$ whenever this is \emph{true}.  As far as we know, this may be possible; after all, the basic SDP relaxation has this property~\cite{Zwick1999, feige2001rpr, charikar2004maximizing}.  On the other hand, the eigenvalue bound itself does \emph{not} have this property; there exist graphs with large (nontrivial) eigenvalues even though the maximum cut is close to~$1/2$.\footnote{Consider, for example, a graph given by the union of a $\Delta$-regular random graph on $n$ vertices and a $\Delta$-regular bipartite graph on $\sqrt{n}$ vertices. This will have \maxcut value close to $1/2$, but will also have large negative eigenvalues coming from the  bipartite component.}

\subsubsection{Subexponential-sized LPs for \maxcut in sparse random graphs}
One setting in which the spectral radius $\rho$ is understood concretely is in random regular graphs.
Building upon~\cite{FKS89,BFSU99,CGJ18}, the following was recently shown:
\begin{theorem*}[\cite{TY16}]\label{thm:TY} There is a fixed constant $C$ such that for all $3 \leq \Delta \leq n/2$ with $\Delta n$ even, it holds that a uniformly random $n$-vertex $\Delta$-regular simple graph $\bG$ satisfies the following with high probability: all eigenvalues of $\bG$'s normalized adjacency matrix, other than~$1$, are at most $C/\sqrt{\Delta}$ in magnitude.
\end{theorem*}
Combining the above with \Cref{thm:main}, we have the following consequence for \maxcut on random regular graphs:
\begin{corollary}
Let $n$, $3 \le \Delta\le n/2$, and $1 \le c \le \Omega(\log \Delta)$ be positive integers.
Then if $G$ is a random $\Delta$-regular $n$-vertex graph, with high probability $n^{O(c/\log \Delta)}$-round Sherali--Adams can certify that ${\text \maxcut}(G) \le \frac{1}{2} + 2^{-c}$.

\end{corollary}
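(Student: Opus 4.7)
The plan is to apply Theorem~\ref{thm:main} directly, using the theorem of Tikhomirov--Youssef to supply the spectral hypothesis.

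First, I would observe that for a $\Delta$-regular graph the degree matrix is $D = \Delta I$, so the random walk matrix $D^{-1}A$ equals $\tfrac{1}{\Delta}A$, which is exactly the normalized adjacency matrix considered in the cited theorem of~\cite{TY16}. Therefore, with high probability over the choice of $\bG$, every eigenvalue of $D^{-1}A$ other than the trivial eigenvalue $1$ is bounded in magnitude by $\rho := C/\sqrt{\Delta}$ for the absolute constant $C$ from~\cite{TY16}.

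Next, I would plug this $\rho$ into Theorem~\ref{thm:main}. Since $\log(1/\rho) = \tfrac{1}{2}\log\Delta - \log C = \Theta(\log \Delta)$, the range $1 \le c \le \Omega(\log(1/\rho))$ demanded by Theorem~\ref{thm:main} coincides, up to the hidden constant, with the range $1 \le c \le \Omega(\log \Delta)$ in the statement of the corollary. The round count $n^{O(c/\log(1/\rho))}$ becomes $n^{O(c/\log \Delta)}$, and the certified upper bound on $\maxcut(\bG)$ is $\tfrac{1}{2} + 2^{-c}$, as desired.

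Finally, a small bookkeeping remark: both the event that the spectral bound from~\cite{TY16} holds and the resulting certification by Sherali--Adams are statements that hold with high probability over the random graph. There is no real obstacle here; the only minor subtlety is absorbing the constant $C$ into the $\Omega(\cdot)$ and $O(\cdot)$ in the exponent, which is immediate. No further work is required beyond composing these two statements.
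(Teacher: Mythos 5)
Your proposal is correct and matches the paper's (implicit) argument: the corollary is stated as a direct combination of the Tikhomirov--Youssef spectral bound with Theorem~\ref{thm:main}, exactly as you do. The bookkeeping about $\log(1/\rho) = \Theta(\log\Delta)$ and absorbing the constant $C$ is the right and only nontrivial step.
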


For example, if $\Delta \ge C \cdot 10^6$ (for $C$ the constant in the bound on $\lambda(G)$), then $n^{1/3}$-rounds of Sherali--Adams can certify $\text{\maxcut}(G) \le .51$.
This result serves as a partial converse to \cite{CMM09}:
\begin{theorem*}(\cite[Theorem 5.3]{CMM09})
For every fixed integer $\Delta \ge 3$, with high probability over the choice of an $n$-vertex $\Delta$-regular random graph $G$,\footnote{In \cite{CMM09}, the graph is actually a {\em pruned} random graph, in which $o(n)$ edges are removed; this does not affect compatibility with our results, as the LP value is Lipschitz and so the pruning changes the LP value by $o(1)$.} 
the $n^{\Theta(1/f(\Delta))}$-round Sherali--Adams relaxation for \maxcut has value at least ${\text \maxcut}(G) \ge 0.99$, where $f(\Delta)$ is a function that grows with $\Delta$.\footnote{Though $f(\Delta)$ is not specified in \cite{CMM09} (and in their proof $f(\Delta)$ is dictated by a combination of lemmas in prior works) it appears we can take $f(\Delta) = \log \Delta$.}
\end{theorem*}

While \cite{CMM09} show that $\Delta$-regular random graphs require Sherali--Adams (and by \cite{KothariMR17} any LP) relaxations of at least subexponential size, our result implies that subexponential LPs are sufficient. 
Further, though the function $f(\Delta)$ is not specified in \cite{CMM09}, by tracing back through citations (e.g. \cite{AroraBLT06,AroraLNRRV12,CharikarMM10}) to extract a dependence, it appears we may take $f(\Delta) = \log\Delta$. 
So our upper bound is {\em tight} as a function of $\Delta$, up to constant factors.

Prior to our result, it was unclear whether even $(n/\polylog n)$-round Sherali--Adams could certify that the \maxcut value was $< 0.9$ for sparse random regular graphs.
Indeed, it was equally if not more conceivable that Charikar et al.'s result was not tight, and could be extended to $\tilde\Omega(n)$-rounds.
In light of our result, we are left to wonder whether there are instances of \maxcut which have truly exponential extension complexity.

\subsubsection{Refuting Random CSPs with linear programs}
With minor modifications, our argument extends as well to \txor.
Then following the framework in \cite{AOW15}, we have the following consequence for certifying bounds on the value of random $k$-CSPs:
\begin{theorem}[Simplified version of \Cref{thm:csps}]
Suppose that $P:\{\pm 1\}^k \to \{0,1\}$ is a $k$-ary Boolean predicate, and that $\delta,\epsilon > 0$.
Let $\E[P]$ be the probability that a random $x \in \{\pm 1\}^k$ satisfies $P$.
Then for a random instance $\cI$ of $P$ on $n$ variables with $m \ge n^{\lceil k/2\rceil + \delta}$ expected clauses, with high probability Sherali--Adams can certify that $\OBJ_{\cI}(x) \le \E[P] + \epsilon$ using $R =O_{\epsilon,\delta,k}(1)$ rounds.
\end{theorem}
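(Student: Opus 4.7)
The plan is to reduce refutation of the random $k$-CSP to strong refutation of a family of 2-XOR instances---one per nonempty Fourier character of the predicate $P$---and then invoke the 2-XOR Sherali--Adams bound supplied by \Cref{thm:main-2xor} and \Cref{cor:main-2xor}. Conceptually this is the spectral refutation framework of Allen--O'Donnell--Witmer~\cite{AOW15}, carried out entirely inside the SA proof system. Writing $P(x) = \E[P] + \sum_{\emptyset \ne S \subseteq [k]} \hat{P}(S) \prod_{i \in S} x_i$, the objective expands as
\[
\OBJ_{\cI}(x) - \E[P] \;=\; \sum_{\emptyset \ne S \subseteq [k]} \hat{P}(S)\,\mu_S(x),
\]
where $\mu_S(x)$ is the normalized bias of the $|S|$-XOR sub-instance extracted from $\cI$ by keeping only the $S$-coordinates of each clause (with signs determined by the clause's negations). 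Parseval followed by Cauchy--Schwarz gives $\sum_S |\hat{P}(S)| \le 2^{k/2}$, so it suffices to SA-derive $|\mu_S(x)| \le \epsilon / 2^{k/2}$ for each nonempty $S$ at constant degree.

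Fix such an $S$ with $t := |S|$. If $t$ is even, I would rewrite each clause $(i_1,\dots,i_t,b)$ as a 2-XOR edge $y_I\,y_J = b$ between the product monomials $y_I := \prod_{i \in I} x_i$ and $y_J := \prod_{i \in J} x_i$ for a balanced split $I \sqcup J = \{i_1,\dots,i_t\}$; this produces a signed 2-XOR instance on $N \sim n^{t/2}$ vertices with $m$ edges. For odd $t$, first introduce a dummy variable $x_0$ with the SA axiom $x_0 = 1$ and append it to every clause, lifting to even arity $t+1$ without changing the bias. For $m \ge n^{\lceil k/2\rceil + \delta}$, this substitution graph has $N \le n^{\lceil k/2\rceil}$ vertices and average degree $\Delta \ge n^{\Omega_k(\delta)}$; the trace-method computation of~\cite{AOW15}---combined with a standard $o(m)$-edge pruning of the $o(1)$ fraction of high-degree vertices, which perturbs $\mu_S$ by $o(1)$---shows that its nontrivial random-walk eigenvalues are bounded in magnitude by $\rho \le n^{-\Omega_k(\delta)}$ with high probability. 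Applying \Cref{thm:main-2xor} with target $2^{-c} = \epsilon/2^{k/2}$ then yields an SA proof of $|\mu_S(y)| \le \epsilon/2^{k/2}$ using
\[
R_0 \;=\; N^{O(c/\log(1/\rho))} \;=\; 2^{O_k(c/\delta)}
\]
rounds over the $y$ variables. Since each $y_I$ is a monomial of degree $\le k/2$ in $x$, re-expanding multiplies the SA degree by at most $k$, giving an SA proof in the original variables with $R = k R_0 = O_{\epsilon,\delta,k}(1)$ rounds; summing over the $\le 2^k$ Fourier terms preserves the round count.

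The main obstacle is to execute every step of the reduction \emph{inside} Sherali--Adams rather than as an external spectral argument. The substitution $y_I \leftarrow \prod_{i \in I} x_i$ and the dummy-variable lift are polynomial identities that SA captures transparently once the round count exceeds $k R_0$, but one must verify that the certificate produced by \Cref{thm:main-2xor} really is a nonnegative combination of SA axioms (not a sum-of-squares argument that would need the stronger SOS hierarchy), so that pulling it back through the substitution preserves its validity. A secondary subtlety is that the substitution graph is not an independent \ER graph---it inherits dependencies from the predicate template (several edges per original clause, correlated signs across distinct characters $S$)---and so one must rely on the moment bounds of \cite{AOW15}, rather than off-the-shelf random-graph spectra, to certify the required expansion in this correlated setting.
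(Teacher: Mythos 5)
Your high-level plan is the same as the paper's: Fourier-expand $P$, decompose the objective into $|\alpha|$-XOR sub-instances, reduce each to weighted \txor via a ``flattening,'' certify $|\mu_S|$ with the spider-walk result (\Cref{thm:main-2xor}/\Cref{cor:main-2xor}), pull the certificate back through the monomial substitution at a $k$-fold degree blowup, and combine via Parseval/Cauchy--Schwarz. So the route is essentially identical. That said, you diverge from the paper on two implementation points worth flagging.

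\emph{Odd arity.} For odd $t$ you introduce a single dummy $x_0=1$ and append it to every clause. The paper instead (\Cref{def:lift}) introduces $n$ fresh variables $w_1,\dots,w_n$, assigns a \emph{random} index $i_U\in[n]$ to each constraint, and flattens to a bipartite graph with $n^{\lceil t/2\rceil}$ vertices \emph{on each side}. Your lift yields an unbalanced bipartite graph $n^{\lceil t/2\rceil}\times n^{\lfloor t/2\rfloor}$; the right side then has degree larger than the left by a factor of $n$. This is not automatically fatal, but the paper's spectral analysis (which uses the crude bound $\rho(\ol K)\le \|B\|/d_{\min}$, see \Cref{prop:small-refute-odd}) would lose a $\sqrt{n}$ factor on your unbalanced graph, since $\|B\|\sim\sigma\sqrt{n^{\lceil t/2\rceil}\log n}$ while $d_{\min}\sim n^{\lfloor t/2\rfloor}\E|w|$. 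You would need to replace it with the tighter normalization $\rho(\ol K)=\|D_1^{-1/2}BD_2^{-1/2}\|$, which (after degree concentration) recovers the same $\rho$ as the balanced lift. So your construction works, but you should not then quote the paper's $\|B\|/d_{\min}$ estimate unchanged. The paper's balanced random lift avoids this subtlety entirely.

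\emph{Spectral input.} You defer the spectral radius bound to ``the trace-method computation of \cite{AOW15} combined with a standard $o(m)$-edge pruning.'' The paper does not prune; it works directly with the \emph{weighted} multigraph (integer edge multiplicities from the distribution $\cW_N$) and bounds $\|B\|$, $d_{\min}$, and $\pi_*$ by matrix and scalar Bernstein inequalities (\Cref{thm:spectral}). This is cleaner for two reasons. First, in the regime $m\ge n^{\lceil k/2\rceil+\delta}$ the minimum degree is $n^{\Omega_k(\delta)}\gg\log n$, so degrees concentrate and pruning is unnecessary. Second, invoking an external pruning-based spectral bound creates an extra step you would need to internalize into the proof system: removing clauses changes the objective, and you must SA-certify that the removed mass is $o(1)$ before transferring the certificate. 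This can be done (the pruned edges contribute $\le(\text{pruned weight})/m$ via the trivial axiom $\X_u\X_v\le 1$), but it's an extra argument the paper sidesteps.

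Finally, your worry about ``correlated signs across distinct characters $S$'' is real as stated but harmless: the paper observes that for a \emph{fixed} $\alpha$, the coefficients $w_T$ of $\cI^\alpha$ are i.i.d.\ from $\cW_{n^{k-|\alpha|}}$, because each $w_T$ depends on a disjoint set of clause indicators and negation signs. Since you bound each $|\cI^\alpha|$ separately and add the bounds, cross-$\alpha$ correlations never enter. Your other worry --- that the \txor certificate must literally be a nonnegative combination of SA axioms, not a genuine SDP certificate --- is addressed by the fact that \Cref{thm:main-2xor} produces an $R$-\emph{local} degree-$2$ SOS proof, whose axioms $\mathsf{x}^\top P\mathsf{x}\ge 0$ on $\le R$ Boolean variables are derivable in $R$-round Sherali--Adams.
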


This almost matches the comparable performance of sum-of-squares and spectral algorithms \cite{AOW15}, which are known to require $m \ge n^{k/2}$ clauses to certify comparable bounds in polynomial time \cite{Grigoriev01,Schoenebeck08,KMOW17}.\footnote{
The expert may notice that we require the number of clauses $m \gg n^{\lceil k/2 \rceil}$, whereas the best sum-of-squares and spectral algorithms require only $m \gg n^{k/2}$.
This is because we do not know how to relate the Sherali--Adams value of the objective function to its square (local versions of the Cauchy-Schwarz argument result in a loss).
Such a relation would allow us to apply our techniques immediately to prove that Sherali Adams matches the SOS and spectral performance for odd as well as even $k$.}
Prior to our work it was known that Sherali--Adams admits {\em weak} refutations (i.e. a certificate that $\OBJ \le 1 - o(1)$) when $m \ge n^{k/2}$, but it was conceivable (and even conjectured) that $O(1)$-rounds could not certify $\OBJ \le 1 - \delta$ for constant $\delta$ at $m = o(n^k)$.

The result above also extends to $t$-wise independent predicates as in \cite{AOW15} (see \Cref{sec:csps}).
Also, one may extract the dependence on the parameters $\epsilon,\delta$ to give nontrivial results when these parameters depend on $n$.\footnote{Though for \txor and \maxcut we have done this explicitly, for higher-arity random CSPs we have left this for the interested reader.}

\subsection{Prior work}
It is a folklore result that in random graphs with average degree $n^{\delta}$, $3$-round Sherali--Adams (SA) certifies a \maxcut value of at most $\max(1-\Omega(\delta),\frac{2}{3})$ (observed for the special case of $\delta > \frac{1}{2}$ in \cite{BarakHHS11,PT94}); this is simply because of concentration phenomena, since most edges participate in roughly the same number of odd cycles of length $O(\frac{1}{\delta}) \ge 3$, after which one can apply the triangle inequality.
However this observation does not allow one to take the refutation strength independent of the average degree.

There are some prior works examining the performance of Sherali--Adams hierarchies on random CSPs.
The work of de~la~Vega and Mathieu \cite{dlVM07} shows that in {\em dense} graphs, with average degree $\Omega(n)$, $O(1)$-rounds of Sherali--Adams certifies tight bounds on \maxcut.
This result heavily leverages the density of the graph, and is consistent with the fact that dense \maxcut is known to admit a PTAS \cite{FK96}.
However, their work does not give $O(1)$-round approximations better than $2-o(1)$ for graphs with $o(n^{2})$ edges.

Another relevant line of work is a series of LP hierarchy lower bounds (both for Sherali--Adams and for the weaker Lov\'asz-Schrijver hierarchy) for problems such as \maxcut, vertex cover, and sparsest cut, including \cite{AAT11,AroraBLT06,dlVM07,SchoenebeckTT07}, and culminating in the already mentioned result of Charikar, Makarychev and Makarychev; in \cite{CMM09}, they give subexponential lower bounds on the number of rounds of Sherali--Adams required to strongly refute \maxcut in random regular graphs.
Initially, one might expect that this result could be strengthened to prove that sparse random graphs require almost-exponential-sized LPs to refute \maxcut; our result demonstrates instead that \cite{CMM09} is almost tight.

We also mention the technique of {\em global correlation rounding} in the sum-of-squares hierarchy, which was used to give subexponential time algorithms for unique games \cite{BarakRS11} and polynomial-time approximations to max-bisection \cite{RaghavendraT12}.
One philosophical similarity between these algorithms and ours is that both relate local properties (correlation among edges) to global properties (correlation of uniformly random pairs).
But \cite{BarakRS11,RaghavendraT12} use the fact that the relaxation is an SDP (whereas our result is interesting {\em because} it is in the LP-only setting), and the ``conditioning'' steps that drive their algorithm are a fundamentally different approach.

There are many prior works concerned with certifying bounds on random CSPs, and we survey only some of them here, referring the interested reader to the discussion in \cite{AOW15}.
The sequence of works \cite{Grigoriev01,Schoenebeck08,KMOW17} establishes sum-of-squares lower bounds for refuting any random constraint satisfaction problem, and these results are tight via the SOS algorithms of \cite{AOW15,RaghavendraRS17}. 
The upshot is that for \sat{$k$} and \xor{$k$},\footnote{This is more generally true for any CSP that supports a $k$-wise independent distribution over satisfying assignments.} $\omega(1)$ rounds of SOS are necessary to strongly refute an instance with $m = o(n^{k/2})$ clauses, and $O(1)$ rounds of SOS suffice when $m = \tilde\Omega(n^{k/2})$.
Because SOS is a tighter relaxation than Sherali--Adams, the lower bounds \cite{Grigoriev01,Schoenebeck08,KMOW17} apply; our work can be seen to demonstrate that SA does not lag far behind SOS, strongly refuting with $O(1)$ rounds as soon as $m = \Omega(n^{\lceil k/2\rceil + \delta})$ for any $\delta > 0$.

In a way, our result is part of a trend in anti-separation results for SDPs and simpler methods for pseudorandom and structured instances.
For example, we have for planted clique that the SOS hierarchy performs no better than the Lov\'asz-Schrijver+ hierarchy \cite{FeigeK03,BarakHKKMP16}, and also no better than a more primitive class of estimation methods based on local statistics (see e.g. \cite{RaghavendraSS18} for a discussion).
Similar results hold for problems relating to estimating the norms of random tensors \cite{HopkinsKPRSS17}.
Further, in \cite{HopkinsKPRSS17} an equivalence is shown between SOS and spectral algorithms for a large class of average-case problems.
Our result shows that for random CSPs, the guarantees of linear programs are surprisingly not far from the guarantees of SOS.

Finally, we mention related works in extended formulations.
The sequence of works \cite{CLRS16,KothariMR17} show that SA lower bounds for CSPs imply lower bounds for any LP relaxation; the stronger (and later) statement is due to \cite{KothariMR17}, who show that subexponential-round integrality gaps for CSPs in the Sherali--Adams hierarchy imply subexponential-size lower bounds for any LP.
These works are then applied in conjunction with \cite{Grigoriev01,Schoenebeck08,CMM09} to give subexponential lower bounds against CSPs for any LP; our results give an upper limit to the mileage one can get from these lower bounds in the case of \maxcut, as we show that the specific construction of \cite{CMM09} cannot be strengthened much further.

\subsection{Techniques}
Our primary insight is that while Sherali--Adams is unable to reason about spectral properties {\em globally}, it does enforce that every set of $R$ variables behave {\em locally} according to the marginals of a valid distribution, which induces {\em local} spectral constraints on every subset of up to $R$ variables.

At first, it is unclear how one harnesses such local spectral constraints.
But now, suppose that we are in a graph with a small spectral radius.
This implies that random walks mix rapidly, in say $t$ steps, to a close-to-uniform distribution.  
Because a typical pair of vertices at distance $t$ is distributed roughly as a uniformly random pair of vertices, any subset of $R$ vertices which contains a path of length $t$ already allows us to relate global and local graph properties.

To see why this helps, we take for a moment the ``pseudoexpectation'' view, in which we think of the $R$-round Sherali--Adams as providing a proxy for the degree-$R$ moments of a distribution over \maxcut solutions $x \in \{\pm 1\}^n$, with \maxcut value
\begin{equation}\label{eq:obj-overview}
{\text\maxcut}(G) = \tfrac{1}{2} -\tfrac{1}{2} \E_{(u,v) \in E(G)} \pE[x_u x_v],
\end{equation}
where $\pE[x_ux_v]$ is the ``pseudo-correlation'' of variables $x_u,x_v$.
Because there is no globally consistent assignment, the pseudo-correlation $\pE[x_ux_v]$ for vertices $u,v$ sampled uniformly at random will be close to $0$.\footnote{This is implicit in our proof, but intuitively it should be true because e.g. $u,v$ should be connected by equally many even- and odd-length paths.}
But in any fixed subgraph of size $\Omega(t)$, enforcing $\pE[x_ux_v] \approx 0$ for pairs $u,v$ at distance $t$ has consequences, and limits the magnitude of correlation between pairs of adjacent vertices as well.
In particular, because the pseudo-second moment matrix $\pE [x_S x_S^\top]$ for $x_S$ the restriction of $x$ to a set $S$ of up to $R$ vertices must be PSD, forcing some entries to $0$ gives a constraint on the magnitude of edge correlations.

For example, suppose for a moment that we are in a graph $G$ with $t = 2$, and that $S$ is a star graph in $G$, given by one ``root'' vertex $r$ with $k \le R-1$ children $U = \{u_1,\ldots,u_k\}$, and call $X = \pE[x_S x_S^\top] \succeq 0$.  
Notice that pairs of distinct children $u_i,u_j$ are at distance $t = 2$ in $S$.
If we then require $\pE[x_{u_i}x_{u_j}] = 0$ for every $u_i \neq u_j$, the only nonzero entries of $X$ are the diagonals (which are all $\pE[ x_i^2] = 1$), and the entries corresponding to edges from the root to its children, $(r,u_i)$, which are $\pE[x_r x_{u_i}]$.
Now defining the vector $c \in \R^S$ with a $1$ at the root $r$, $c_r = 1$ and $\alpha$ on each child $u \in U$, $c_{u} = \alpha$, we have from the PSDness of $X$ that 
\[
0 
\le c^\top X c 
= \|c\|_2^2 + \sum_{u \in U} 2 c_r c_u \cdot \pE[x_r x_u]
= (1 + \alpha^2 k) + 2\alpha k \E_{(u,v) \in E(S)} \pE[x_ux_v].
\]
Choosing $\alpha = k^{-1/2}$, this implies that within $S$, the average edge correlation is lower bounded by $ \E_{(u,v) \in E[S]} \pE[x_u x_v] \ge -k^{-1/2}$.
Of course, for a given star $S$ we cannot know that $\pE[x_{u_i} x_{u_j}] = 0$, but if we take a well-chosen weighted average over {\em all} stars, this will (approximately) hold on average.

Our strategy is to take a carefully-chosen average over specific subgraphs $S$ of $G$ with $|S| = \Omega(t)$.
By our choice of distribution and subgraph, the fact that the subgraphs {\em locally} have PSD pseudocorrelation matrices has consequences for the {\em global} average pseudocorrelation across edges, which in turn gives a bound on the objective value \cref{eq:obj-overview}.
This allows us to show that Sherali--Adams certifies much better bounds than we previously thought possible, by aggregating local spectral information across many small subgraphs.

\subsection*{Organization}
We begin with technical preliminaries in \Cref{sec:prelims}.
In \Cref{sec:spiders} we prove our main result.
\Cref{sec:lbs} establishes a mild lower bound for arbitrary graphs.
Finally, \Cref{sec:csps} applies \Cref{thm:main} to the refutation of arbitrary Boolean CSPs.

\section{Setup and preliminaries}\label{sec:prelims}
We begin by recalling preliminaries and introducing definitions that we will rely upon later.

\subsection{Random walks on undirected graphs}
Here, we recall some properties of random walks in undirected graphs that will be of use to us.
\begin{definition}
    Let $G = (V,E)$ be an undirected finite graph, with parallel edges and self-loops allowed\footnote{Self-loops count as ``half an edge'', and contribute~$1$ to a vertex's degree.}, and with no isolated vertices.  The \emph{standard random walk on~$G$} is the Markov chain on $V$ in which at each step one follows a uniformly random edge out of the current vertex.  For $u \in V$, we use the notation $\bv \sim u$ to denote that $\bv$ is the result of taking one random step from~$u$.
\end{definition}
\begin{definition}
    We write $K$ for the transition operator of the standard random walk on~$G$.  That is, $K$ is obtained from the adjacency matrix of~$G$ by normalizing the $u$th row by a factor of $1/\deg(u)$.
\end{definition}
\begin{definition}
    We write $\pi$ for the probability distribution on $V$ defined by $\pi(v) = \frac{\deg(v)}{2|E|}$.  As is well known, this is an invariant distribution for the standard random walk on~$G$, and this Markov chain is reversible with respect to~$\pi$.  For $\bu \sim \pi$ and $\bv \sim \bu$, the distribution of $(\bu, \bv)$ is that of a uniformly random (directed) edge from~$E$.  We will also use the notation $\pi_* = \min_{v \in V}\{\pi(v)\}$.
\end{definition}
\begin{definition}
    For $f, g : V \to \R$ we use the notation $\ippi{f}{g}$ for $\E_{\bu \sim \pi}[f(\bu)g(\bu)]$.  This is an inner product on the vector space~$\R^V$; in case $G$ is regular and hence $\pi$ is the uniform distribution, it is the usual inner product scaled by a factor of $1/|V|$.  It holds that
    \begin{equation} \label{eqn:K1}
        \ippi{f}{Kg} = \ippi{Kf}{g} = \E_{(\bu,\bv) \sim E}[f(\bu)g(\bv)].
    \end{equation}
\end{definition}
\begin{definition}
    A \emph{stationary $d$-step walk} is defined to be a sequence $(\bu_0, \bu_1, \dots, \bu_d)$ formed by choosing an initial vertex $\bu_0 \sim \pi$, and then taking a standard random walk, with $\bu_{t} \sim \bu_{t-1}$. Generalizing \Cref{eqn:K1}, it holds in this case that
    \[
        \E[f(\bu_0)g(\bu_d)] = \ippi{f}{K^d g}.
    \]
\end{definition}

\subsection{Tree-indexed random walks}

To prove our main theorem we define a class of homomorphisms we call {\em tree-indexed random walks}.

\begin{definition}  \label{def:T-walk}
    Suppose we have a finite undirected tree with vertex set~$T$.  A \emph{stationary $T$-indexed random walk in~$G$} is a random homomorphism $\bphi : T \to V$ defined as follows: First, root the tree at an arbitrary vertex $i_0 \in T$.  Next, define $\bphi(i_0) \sim \pi$.  Then, independently for each ``child'' $j$ of~$i_0$ in the tree, define $\bphi(j) \sim \bphi(i_0)$; that is, define $\bphi(j) \in V$ to be the result of taking a random walk step from $\bphi(i_0)$.  Recursively repeat this process for all children of $i_0$'s children, etc., until each vertex $k \in T$ has been assigned a vertex $\bphi(k) \in V$.
\end{definition}
We note that the homomorphism $\bphi$ defining the $T$-indexed random walk need not be injective.
Consequently, if $T$ is a tree with maximum degree $D$, we can still have a $T$-indexed random walk in a $d$-regular graph with $d < D$.

The following fact is simple to prove; see, e.g.,~\cite{LP17}.
\begin{fact}                                        \label{fact:indep-of-root}
    The definition of $\bphi$ does not depend on the initially selected root $i_0 \in T$. Further, for any two vertices $i,j \in T$ at tree-distance~$d$, if $i = i_0, i_1, \dots, i_d = j$ is the unique path in the tree between them, then the sequence $(\bphi(i_0), \bphi(i_1), \dots, \bphi(i_d))$ is distributed as a stationary $d$-step walk in~$G$.
\end{fact}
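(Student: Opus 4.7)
The plan is to prove part~(1) by a one-edge re-rooting argument built on reversibility, and to read off part~(2) as an immediate consequence.

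\textbf{Re-rooting along one edge.} For part~(1), the key step is to show that for any edge $\{i_0,i_1\}$ of~$T$, defining $\bphi$ with root~$i_0$ yields the same joint distribution on $V^T$ as defining it with root~$i_1$; part~(1) then follows because the tree is connected and any two rootings can be connected by a sequence of single-edge re-rootings. To prove the single-edge statement I would delete $\{i_0,i_1\}$ from $T$ to obtain two subtrees $T_0\ni i_0$ and $T_1\ni i_1$, and write out the joint density of~$\bphi$ under each rooting as a product over the edges of~$T$ of transition probabilities $K(\bphi(\text{parent}),\bphi(\text{child}))$, times $\pi$ evaluated at the root. This factorization shows that in both rootings the conditional law of $\bphi|_{T_0\setminus\{i_0\}}$ given $\bphi(i_0)$ is a $T_0$-indexed walk started from the fixed vertex $\bphi(i_0)$, and symmetrically for $T_1$; neither depends on which endpoint of the deleted edge was the root. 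So the two rootings differ only in the joint law of $(\bphi(i_0),\bphi(i_1))$, which is $\pi(u)K(u,v)$ under root $i_0$ and $\pi(v)K(v,u)$ under root~$i_1$. These are equal by detailed balance (reversibility of the standard random walk with respect to~$\pi$), so the re-rooting preserves the distribution.

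\textbf{Deducing part~(2).} Having established part~(1), I re-root $T$ at $i_0=i$; then the path $i=i_0,i_1,\dots,i_d=j$ becomes a descending chain with $i_{t+1}$ a child of~$i_t$. The definition of the $T$-indexed walk gives $\bphi(i_0)\sim\pi$, and then for each $t=0,1,\dots,d-1$ the vertex $\bphi(i_{t+1})$ is obtained by taking a single random-walk step from $\bphi(i_t)$, independently of everything else generated so far. Marginalizing over the vertices of~$T$ off the path (which only involves further independent random walks hanging off of the path and cannot alter the marginal on the path), the tuple $(\bphi(i_0),\dots,\bphi(i_d))$ is by construction a stationary $d$-step walk in~$G$.

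\textbf{Main obstacle.} Nothing here is deep; the only place that needs care is the subtree-factorization in the re-rooting step, where one must verify that the product-over-edges form of the joint density really does decouple into a $(\bphi(i_0),\bphi(i_1))$-factor times two conditionally independent subtree walks that are insensitive to the orientation of the deleted edge. Once that product form is written down, detailed balance $\pi(u)K(u,v)=\pi(v)K(v,u)$ for the simple random walk on an undirected graph finishes the argument, and part~(2) is then a one-line consequence.
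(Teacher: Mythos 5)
Your proposal is correct. The paper gives no proof of this fact, deferring instead to Lyons--Peres, so there is no in-text argument to compare against; but your approach is the standard one and every step is sound. In the re-rooting step, moving the root from $i_0$ to the adjacent $i_1$ reverses the orientation of only the single edge $\{i_0,i_1\}$: edges inside $T_0$ stay oriented away from $i_0$, and edges inside $T_1$ stay oriented away from $i_1$, under both rootings, so the product-over-edges densities agree up to the factor $\pi(v_{i_0})K(v_{i_0},v_{i_1})$ versus $\pi(v_{i_1})K(v_{i_1},v_{i_0})$, which are equal by detailed balance (here $\pi(u)K(u,v)=A_{uv}/(2|E|)$ is manifestly symmetric in $u,v$). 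Chaining single-edge moves over the connected tree gives root-independence, and part~(2) is then immediate upon re-rooting at $i$, since the branches hanging off each $i_t$ are conditionally independent of the rest of the path given $\bphi(i_t)$ and marginalize away.
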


\subsection{2XOR and signed random walks}

The \txor constraint satisfaction problem is defined by instances of linear equations in $\F_2^n$.
For us it will be convenient to associate with these instances a graph with signed edges, and on such graphs we perform a slightly modified random walk.

\begin{definition}
    We assume that for each vertex pair $(u,v)$ where $G$ has edge, there is an associated \emph{sign} $\xi_{uv} = \xi_{vu} \in \{\pm 1\}$.\footnote{If $G$ has multiple $(u,v)$ edges, we think of them as all having the same sign.}  We arrange these signs into a symmetric matrix~$\Xi = (\xi_{uv})_{uv}$.  If $G$ has no $(u,v)$ edge then the entry $\Xi_{uv}$ will not matter; we can take it to be~$0$.
\end{definition}
\begin{definition}
    We write $\ol{K} = \Xi \circ K$ for the \emph{signed transition operator}.  The operator $\ol{K}$ is self-adjoint with respect to $\ippi{\cdot}{\cdot}$, and hence has real eigenvalues.  It also holds that
    \begin{equation} \label{eqn:Kbar1}
        \ippi{f}{\ol{K}g} = \ippi{\ol{K}f}{g} = \E_{(\bu,\bv) \sim E}[\xi_{\bu\bv} f(\bu)g(\bv)].
    \end{equation}
\end{definition}
\begin{definition}
    We may think of $G$ and $\Xi$ as defining a \txor constraint satisfaction problem (CSP), in which the task is to find a labeling $f : V \to \{\pm 1\}$ so as to maximize the fraction of edges $(u,v) \in E$ for which the constraint $f(u)f(v) = \xi_{uv}$ is satisfied.  The fraction of satisfied constraints is
    \begin{equation}
        \E_{(\bu, \bv) \sim E} \left[\tfrac12 + \tfrac12 \xi_{\bu\bv} f(\bu) f(\bv)\right] = \tfrac12 + \tfrac12 \ippi{f}{\ol{K}f}.\label{eqn:KbarSigned}
    \end{equation}
    We will typically ignore the $\frac12$'s and think of the \txor CSP as maximizing the quadratic form $\ippi{f}{\ol{K}f}$.  When all signs in the matrix $\Xi$ are $-1$, we refer to this as the \maxcut CSP.
\end{definition}
\begin{definition}  \label{def:ssw}
    We say that a \emph{signed} stationary $d$-step walk is a sequence of pairs $(\bu_t, \bsigma_t) \in \{\pm 1\} \times V$ for $0 \leq t \leq d$, chosen as follows: first, we choose a stationary $d$-step walk $(\bu_0, \bu_1, \dots, \bu_d)$ in~$G$; second, we choose $\bsigma_0 \in \{\pm 1\}$ uniformly at random; finally, we define $\bsigma_{t} = \bsigma_{t-1} \xi_{\bsigma_{t-1} \bsigma_t}$.    Generalizing \Cref{eqn:Kbar1}, it holds in this case that
    \[
        \E[\bsigma_0 f(\bu_0) \bsigma_d g(\bu_d)] = \ippi{f}{\ol{K}^d g}.
    \]
\end{definition}
\begin{definition}  \label{def:sst}
    We extend the notion from \Cref{def:T-walk} to that of a \emph{signed} stationary $T$-indexed random walk in~$G$.  Together with the random homomorphism $\bphi : T \to V$, we also choose a random signing $\bsigma : T \to \{\pm 1\}$ as follows: for the root $i_0$, the sign $\bsigma(i_0) \in \{\pm 1\}$ is chosen uniformly at random; then, all other signs are deterministically chosen --- for each $j$ of $i_0$ we set $\bsigma(j) = \xi_{i_0 j} \bsigma(i_0)$, and in general $\bsigma(k) = \xi_{k'k} \bsigma(k)$ where $k'$ is the parent of~$k$.  Again, it is not hard to show that the definition of $(\bphi, \bsigma)$ does not depend on the choice of root~$i_0$, and that for any path $i_0, i_1, \dots, i_d$ of vertices in the tree, the distribution of $(\bphi(i_0), \bsigma(i_0)), (\bphi(i_1), \bsigma(i_1)), \dots (\bphi(i_d), \bsigma(i_d))$ is that of a signed  stationary $d$-step walk in~$G$.
\end{definition}

\subsection{Proof systems}\label{sec:proof-systems}
Our central object of study is the Sherali--Adams proof system, but our results also apply to a weaker proof system which we introduce below.
\begin{definition}
    We define the \emph{$R$-local, degree-$D$ (static) sum of squares (SOS) proof system} over indeterminates $\X_1, \dots, \X_n$ as follows.  The ``lines'' of the proof are real polynomial inequalities in $\X = (\X_1, \dots, \X_n)$.  The ``default axioms'' are any real inequalities of the form $p(\X_{u_1}, \dots, \X_{u_R})^2 \geq 0$, where $p$ is a polynomial in at most $R$ variables and of degree at most $D/2$.  The ``deduction rules'' allow one to derive any nonnegative linear combination of previous lines.  This is a sound proof system for inequalities about $n$ real numbers $\X_1, \dots, \X_n$.

    In addition to the default axioms, one may also sometimes include problem-specific ``equalities'' of the form $q(\X) = 0$. In this case, one is allowed additional axioms of the form $q(\X)s(\X) \gtreqless 0$ the polynomial $q(\X)s(\X)$ depends on at most $R$ indeterminates and has degree at most~$D$.
\end{definition}
\begin{fact}
    The case of $R = \infty$ (equivalently, $R = n$) corresponds to the well-known \emph{degree-$D$ SOS proof system}.
\end{fact}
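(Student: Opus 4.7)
The plan is to verify the equivalence by directly comparing the two definitions, since the statement is essentially a definitional observation with no hidden content. First I would confirm the parenthetical ``equivalently, $R = n$'': when the underlying indeterminate set is $\{\X_1, \dots, \X_n\}$, any polynomial over these indeterminates depends on at most $n$ variables, so requiring a polynomial to have support of size at most $R$ is vacuous as soon as $R \geq n$. Hence the value $R = \infty$ and the value $R = n$ produce the same proof system.

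Next I would check that dropping the locality constraint recovers the standard degree-$D$ SOS system. In the $R = n$ version, the default axioms are exactly the inequalities $p(\X)^2 \geq 0$ for polynomials $p$ of degree at most $D/2$, and the deduction rule permits arbitrary nonnegative linear combinations. Iterating this rule, the set of derivable inequalities is precisely
\[
\left\{\, \sum_{i} \lambda_i\, p_i(\X)^2 \;\geq\; 0 \;:\; \lambda_i \geq 0,\ \deg p_i \leq D/2 \,\right\},
\]
which is the standard degree-$D$ SOS cone. For problem-specific equalities $q(\X) = 0$, the $R = n$ rule yielding $q(\X)s(\X) \gtreqless 0$ whenever $\deg(qs) \leq D$ matches the standard SOS practice of multiplying equalities by arbitrary polynomials subject to the degree budget.

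The main ``obstacle,'' such as it is, is purely notational: one must reconcile the line-by-line, axiom-plus-deduction-rule presentation used here with the more familiar single-identity formulation of SOS certificates $f = \sum_i \sigma_i p_i^2 + \sum_j \tau_j q_j$. The two are interchangeable by collecting all derived lines into one nonnegative combination, and by the standard fact that derivations from a static axiom set of this form are equivalent to the existence of such an identity. I would therefore expect this ``proof'' to occupy at most a sentence in the paper, serving only to pin down the convention that the $R$-local SOS system specializes as claimed.
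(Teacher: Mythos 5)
Your proposal is correct and matches the paper's (implicit) treatment: the Fact is stated without proof precisely because it is a definitional observation, and your unfolding of the definitions — noting that the $R$-locality restriction becomes vacuous once $R \ge n$, so the axioms and deduction rules reduce to those of the standard degree-$D$ SOS cone — is exactly the check the authors leave to the reader.
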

\begin{definition}
    Suppose one includes the \emph{Boolean equalities}, meaning $\X_u^2 - 1 = 0$ for all $1 \leq i \leq n$.\footnote{Or alternatively, $\X_u^2 - \X_u = 0$ for all $i$.}  In this case $D = \infty$ is equivalent to $D = R$, and the corresponding proof system is the well-known \emph{$R$-round Sherali--Adams proof system}.  It is well known that every inequality $p(\X_{u_1}, \dots, \X_{u_R}) \geq 0$ that is true for $\pm 1$-valued $\X_{u_1}, \dots, \X_{u_R}$ is derivable in this system.
\end{definition}
\begin{fact}
    There is a $\poly(n^R,L)$-time algorithm based on Linear Programming  for determining whether a given polynomial inequality $p(\X) \geq 0$ of degree at most~$R$ (and rational coefficients of total bit-complexity~$L$) is derivable in the $R$-round Sherali--Adams proof system.
\end{fact}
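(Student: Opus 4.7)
The plan is to reduce Sherali--Adams derivability to feasibility of an explicit linear program of size $\poly(n^R, L)$, and then invoke a standard polynomial-time LP solver.

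First, I would set up the dual ``pseudoexpectation'' picture. Using the Boolean equalities $\X_u^2 - 1 = 0$ to restrict attention to multilinear monomials $\X_S = \prod_{u \in S}\X_u$ with $|S| \leq R$ (of which there are at most $\sum_{k=0}^{R}\binom{n}{k} = O(n^R)$), a pseudoexpectation is a linear functional $\pE$ on degree-$\leq R$ multilinear polynomials, parametrized by these $O(n^R)$ real values $\pE[\X_S]$. By LP duality (Farkas' lemma), $p(\X) \geq 0$ is derivable in the $R$-round Sherali--Adams system if and only if $\pE[p] \geq 0$ holds for every $\pE$ satisfying $\pE[1] = 1$ together with the axiom inequalities $\pE\bigl[\prod_{u \in A}(1+\X_u)\prod_{v \in B}(1-\X_v)\bigr] \geq 0$ for every disjoint $A, B \subseteq [n]$ with $|A| + |B| \leq R$.

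Second, I would observe that after expansion in the multilinear basis, this constitutes an explicit linear program: $O(n^R)$ variables, at most $3^R\binom{n}{R} = O(n^R)$ inequality constraints (the $3^R$ counts, for each selected index, the three choices ``$+$'', ``$-$'', or ``omit''), and constraint coefficients (after Boolean reduction) that are integers of absolute value at most $2^R$. The objective is the coefficient vector of $p$ reduced modulo the Boolean equalities, of bit-complexity $\poly(L)$. The entire LP therefore has size $\poly(n^R, L)$, so the ellipsoid or interior-point method computes $\min \pE[p]$ subject to these constraints in time $\poly(n^R, L)$; the sign of the optimum determines derivability.

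The step requiring the most care is verifying the exactness of this dual characterization --- that every $R$-round Sherali--Adams axiom in the paper's formulation (a square $p^2\geq 0$ with $\deg p \leq R/2$, or a product $q\cdot s = 0$ with $q$ a Boolean equality) can be expressed, modulo the Boolean equalities, as a nonnegative combination of the sign-pattern axioms listed above. This follows from the standard observation that the sign-pattern products are, up to scaling, the indicators of partial assignments to $\leq R$ variables, and they span the cone of nonnegative multilinear polynomials on $\{\pm 1\}^{\leq R}$ (in particular, each square $p^2$ reduces modulo Boolean equalities to such a nonnegative polynomial). I would write this verification out explicitly but do not expect it to present a real technical obstacle, since all the constants introduced are at most $2^R$ and hence absorbed into the $\poly(n^R, L)$ bound.
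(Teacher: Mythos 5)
The paper states this as a folklore \emph{Fact} and offers no proof, so there is no ``paper approach'' to compare against; your argument is the standard one and is correct. The dual pseudoexpectation picture, the reduction to an explicit LP over the $O(n^R)$ multilinear moments $\pE[\X_S]$ with the sign-pattern (junta) inequality constraints, and the bit-complexity accounting are all right.

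One small point worth making explicit when you ``write this verification out'': the equivalence you invoke needs \emph{both} inclusions between the cone generated by the paper's axioms and the cone generated by sign-pattern juntas (modulo the Boolean ideal). You argue one direction (every square $p^2$ reduces mod $\X_u^2 = 1$ to a nonnegative $R$-junta, hence to a nonnegative combination of indicators). For the converse --- needed so that the Farkas certificate ``$p$ is a nonnegative combination of juntas plus a nonnegative constant'' actually yields an SA derivation --- you should note that each junta is itself an SA square modulo the Boolean equalities: since $(1 \pm \X_u)^2 \equiv 2(1 \pm \X_u)$, one has $\prod_{u\in A}(1+\X_u)\prod_{v\in B}(1-\X_v) \equiv 2^{-|A|-|B|}\bigl(\prod_{u\in A}(1+\X_u)\prod_{v\in B}(1-\X_v)\bigr)^2$. (Alternatively, cite the paper's preceding assertion that every true inequality on $\leq R$ Boolean variables is derivable.) With that observation in place the proof is complete.
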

\begin{fact}
    We will often be concerned with the $R$-local, degree-$2$ SOS proof system, where all lines are quadratic inequalities.  In this case, we could equivalently state that the default axioms are all those inequalities of the form
    \begin{equation}    \label[ineq]{eqn:sa-form}
        \mathsf{x}^\top P \mathsf{x} \geq 0,
    \end{equation}
    where $\mathsf{x} = (\X_{u_1}, \dots, \X_{u_R})$ is a length-$R$ subvector of $\X$, and $P$ is an $R \times R$ positive semidefinite (PSD) matrix.
\end{fact}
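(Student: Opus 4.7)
The plan is to prove the equivalence of the two families of default axioms via the spectral theorem, arguing in both directions.

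For the $(\Leftarrow)$ direction, given a PSD $R \times R$ matrix $P$, the spectral theorem supplies a decomposition $P = \sum_{i=1}^{R} \lambda_i v_i v_i^\top$ with all $\lambda_i \geq 0$. Then $\mathsf{x}^\top P \mathsf{x} = \sum_i \lambda_i (v_i^\top \mathsf{x})^2$, so $\mathsf{x}^\top P \mathsf{x} \geq 0$ is a nonnegative linear combination of the default axioms $(v_i^\top \mathsf{x})^2 \geq 0$ (each being the square of a degree-$1$ polynomial in the $R$ indeterminates of $\mathsf{x}$), and is derivable in one step via the ``nonnegative linear combination'' deduction rule.

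For the $(\Rightarrow)$ direction, take any polynomial $p$ in at most $R$ indeterminates of degree at most $1$; writing $p = a_0 + a^\top \mathsf{x}$ and $\mathbf{b} = (a_0, a^\top)^\top$ gives $p^2 = \tilde{\mathsf{x}}^\top (\mathbf{b}\mathbf{b}^\top) \tilde{\mathsf{x}}$, where $\tilde{\mathsf{x}} = (1, \mathsf{x}^\top)^\top$ and the rank-$1$ matrix $\mathbf{b}\mathbf{b}^\top$ is PSD. Hence $p^2$ fits the claimed PSD form once one permits a constant-$1$ coordinate in $\mathsf{x}$. The only subtle point is this homogenization: a degree-$\leq 1$ polynomial may carry a nonzero constant term while $\mathsf{x}^\top P \mathsf{x}$ as literally written is homogeneous, so the ``equivalent statement'' implicitly allows a constant ``$1$'' among the coordinates of $\mathsf{x}$ (equivalently, one adjoins a dummy variable fixed to $1$, or replaces $R$ by $R+1$). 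With this convention the two cones of derivable quadratic inequalities coincide, and I expect no further difficulty.
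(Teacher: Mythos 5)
The paper states this as an unproved \emph{Fact}, so there is no internal proof to compare against; your argument is the natural one and is correct. The $(\Leftarrow)$ direction via the spectral theorem --- writing $P=\sum_i \lambda_i v_i v_i^\top$ with $\lambda_i\ge 0$ so that $\mathsf{x}^\top P\mathsf{x}=\sum_i\lambda_i (v_i^\top\mathsf{x})^2$ is a nonnegative combination of squared linear forms --- is exactly the observation the paper relies on (e.g.\ in the derivation of \Cref{eqn:ded1}, where they take $P=\diag(\bsigma)\psdtwo\diag(\bsigma)$). The $(\Rightarrow)$ direction via the rank-one factorization $p^2 = \tilde{\mathsf{x}}^\top(\mathbf{b}\mathbf{b}^\top)\tilde{\mathsf{x}}$ is also correct.

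The homogenization issue you flag is a genuine small imprecision in the statement of the Fact: the default axioms $p(\X_{u_1},\dots,\X_{u_R})^2\ge 0$ allow $p$ to be affine (degree $\le D/2 = 1$), so $p^2$ can carry constant and linear terms, while $\mathsf{x}^\top P\mathsf{x}$ with $\mathsf{x}=(\X_{u_1},\dots,\X_{u_R})$ is purely quadratic. Your suggested fixes --- adjoining a coordinate fixed to $1$ (making $P$ an $(R+1)\times(R+1)$ PSD matrix), or alternatively reading the statement as describing only the homogeneous squares --- are both standard. It is worth noting that in this paper's applications only the $(\Leftarrow)$ direction is ever invoked, and when they do need an inhomogeneous square (as in \Cref{lem:wack}) they derive it directly as a perfect square rather than through the PSD-matrix form, so the imprecision is harmless in context. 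Your proof is fine as written; I would only suggest tightening the last sentence to say explicitly which convention you adopt so that the equivalence is literal rather than ``up to homogenization.''
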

\begin{remark}
    In fact, we will often be concerned with the $R$-round, degree-$2$ Sherali--Adams proof system.  Despite the restriction to $D = 2$, we only know the $\poly(n^R, L)$-time algorithm for deciding derivability of a given quadratic polynomial $p(\X) \geq 0$ (of bit-complexity~$L$).
\end{remark}

\section{2XOR certifications from spider walks}\label{sec:spiders}
In this section, we prove our main theorem: 
given a \txor or \maxcut instance on a graph $G$ with small spectral radius, we will show that the $R$-local degree-$2$ SOS proof system gives nontrivial refutations with $R$ not too large.

Our strategy is as follows: we select a specific tree $T$ of size $\propto R$, and we consider the distribution over copies of $T$ in our graph given by the $T$-indexed stationary random walk.
We will use this distribution to define the coefficients for a degree-$2$, $R$-local proof that bounds the objective value of the CSP. 
We will do this by exploiting the uniformity of the graph guaranteed by the small spectral radius, and the fact that degree-$2$ $R$-local SOS proofs can certify positivity of quadratic forms $c^\top \X|_S\X|_S^\top c$, where $X|_S$ is the restriction of $\X$ to a set $S$ of variables with $|S| \le R$ and $c \in \R^{|S|}$.

Intuitively, in the ``pseudoexpectation'' view, the idea of our proof is as follows.
When there is no globally consistent assignment, a uniformly random pair of vertices $u,v \in V$ will have pseudocorrelation close to zero.
On the other hand, if $t$-step random walks mix to a roughly uniform distribution over vertices in the graph, then pairs of vertices at distance $t$ will also have pseudocorrelation close to zero.
But also, in our proof system the degree-$2$ pseudomoments of up to $R$ variables obey a positive-semidefiniteness constraint.
By choosing the tree $T$ with diameter at least $t$, while also choosing $T$ to propagate the effect of the low-pseudocorrelation at the diameter to give low-pseudocorrelation on signed edges, we show that the proof system can certify that the objective value is small.
Specifically, we will choose $T$ to be a {\em spider graph}:

\begin{definition}
    For integers $k, \ell \in \N^+$, we define a \emph{$(k,\ell)$-spider graph} to be the tree formed by gluing together $k$ paths of length~$\ell$ at a common endpoint called the \emph{root}.  This spider has $k\ell+1$ vertices and diameter~$2\ell$.
\end{definition}
While we were not able to formally prove that the spider is the optimal choice of tree, intuitively, we want to choose a tree that maximizes the ratio of the number of pairs at maximum distance (since such pairs relate the local properties to the global structure) to the number of vertices in the tree (because we need to take our number of rounds $R$ to be at least the size of the tree).
Among trees, the spider is the graph that maximizes this ratio.

Let us henceforth fix a $(k,\ell)$-spider graph, where the parameters $k$ and $\ell$ will be chosen later.  We write $S$ for the vertex set of this tree (and sometimes identify~$S$ with the tree itself).
\begin{definition}  \label{def:Ad}
    For $0 \leq d \leq 2\ell$, we define the matrix $A^{(d)} \in \R^{S \times S}$ to be the ``distance-$d$'' adjacency matrix of the spider; i.e., $A^{(d)}_{ij}$ is~$1$ if $\dist_S(i,j) = d$ and is~$0$ otherwise.  (We remark that $A^{(0)}$ is the identity matrix.)
\end{definition}
The following key technical theorem establishes the existence of a matrix $\psdtwo$ which will allow us to define the coefficients in our $R$-local degree-$2$ SOS proof. 
It will be proven in \Cref{sec:spider-technical}:
\begin{theorem}                                     \label{thm:spider-technical}
    For any parameter $\alpha \in R$, there is a PSD matrix $\psdtwo = \psdtwo_\alpha \in \R^{S \times S}$ with the following properties:
    \begin{align*}
        \la \psdtwo, A^{(0)} \ra &= 1 + \frac{1}{2k} \alpha^{2\ell} +  \frac{1}{k-1} \frac{\alpha^{2\ell} - \alpha^2}{\alpha^2-1}, \\
        \la \psdtwo, A^{(1)} \ra &= \alpha, \\
        \la \psdtwo, A^{(d)} \ra &= 0 \quad \text{for } 1 < d < 2\ell, \\
        \la \psdtwo, A^{(2\ell)} \ra &=  \frac{1-1/k}{2}\alpha^{2\ell}.
    \end{align*}
    Here we are using the notation $\la B,C \ra$ for the ``matrix (Frobenius) inner product'' $\Tr(B^\top C)$.
\end{theorem}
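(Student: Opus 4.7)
The plan is to construct $\psdtwo$ explicitly as a sum of three types of PSD matrices, each supported on a different portion of the spider and each contributing to only a controlled set of distance sums $\la \psdtwo, A^{(d)} \ra$. The key structural observation I will exploit is that within the spider, two distinct same-level vertices in different legs are always at spider-distance exactly $2s$ (where $s$ is their common level), so any matrix supported only on level-$s$ vertices affects only the distance sums at $d = 0$ and $d = 2s$; and by taking such a matrix to be a multiple of the identity on its level, even the $d = 2s$ contribution vanishes.

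Concretely, with $\Pi_s$ denoting the identity on $\mathrm{span}\{e_{v_{j,s}}\}_{j \in [k]}$, $w_s := \sum_j e_{v_{j,s}}$, and $e_r$ the root indicator, I will set $\psdtwo := \psdtwo^{\mathrm{edge}} + \sum_{s=1}^{\ell - 1} \psdtwo^{\mathrm{level}}_s + \psdtwo^{\mathrm{tail}}$ where
\begin{align*}
\psdtwo^{\mathrm{edge}} &:= \tfrac{1}{2} e_r e_r^\top + \tfrac{\alpha}{2k}\bigl(e_r w_1^\top + w_1 e_r^\top\bigr) + \tfrac{1}{2k} \Pi_1, \\
\psdtwo^{\mathrm{level}}_s &:= \tfrac{\alpha^{2s}}{k(k-1)} \Pi_s \qquad (1 \le s \le \ell - 1), \\
\psdtwo^{\mathrm{tail}} &:= \tfrac{\alpha^{2\ell}}{2k^2} w_\ell w_\ell^\top.
\end{align*}
The pieces $\psdtwo^{\mathrm{level}}_s$, $\psdtwo^{\mathrm{tail}}$, and the first and third summands of $\psdtwo^{\mathrm{edge}}$ are nonnegative multiples of manifestly-PSD matrices. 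To check that $\psdtwo^{\mathrm{edge}}$ itself is PSD, I will decompose $\R^S$ under the $S_k$-action permuting legs: on the ``standard-rep'' subspace at level 1, only $\Pi_1$ acts, contributing eigenvalue $\tfrac{1}{2k} \ge 0$; on the $2$-dimensional ``trivial-rep'' subspace spanned by $e_r$ and $w_1/\sqrt{k}$, $\psdtwo^{\mathrm{edge}}$ acts as a $2 \times 2$ matrix with diagonal entries $\tfrac12, \tfrac{1}{2k}$ and off-diagonal entry $\tfrac{\alpha}{2\sqrt{k}}$, whose determinant is $\tfrac{1 - \alpha^2}{4k} \ge 0$ provided $|\alpha| \le 1$.

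To verify the distance sums I will add up each piece's contribution. The piece $\psdtwo^{\mathrm{level}}_s$ is diagonal and contributes only $\tfrac{\alpha^{2s}}{k-1}$ to $\la \psdtwo, A^{(0)} \ra$. The tail piece $\psdtwo^{\mathrm{tail}}$ contributes $\tfrac{\alpha^{2\ell}}{2k}$ at $d = 0$ (from the $k$ leaf-diagonal entries) and $\tfrac{(k-1)\alpha^{2\ell}}{2k}$ at $d = 2\ell$ (from the $k(k-1)$ ordered cross-leg leaf pairs, all at distance $2\ell$), and nothing else. The edge piece $\psdtwo^{\mathrm{edge}}$ contributes $\tfrac12 + k \cdot \tfrac{1}{2k} = 1$ at $d = 0$ and $2k \cdot \tfrac{\alpha}{2k} = \alpha$ at $d = 1$; crucially, $\Pi_1$ has no off-diagonal entries, so the level-1 cross-leg pairs receive zero weight, giving no contribution at $d = 2$ (nor at any higher distance, since $\psdtwo^{\mathrm{edge}}$ is supported on vertices of spider-diameter $2$). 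Summing, the diagonal matches $1 + \tfrac{\alpha^{2\ell}}{2k} + \tfrac{1}{k-1}\sum_{s=1}^{\ell-1} \alpha^{2s}$, and the geometric series identity $\sum_{s=1}^{\ell-1} \alpha^{2s} = \tfrac{\alpha^{2\ell} - \alpha^2}{\alpha^2 - 1}$ yields the claimed form.

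The main conceptual obstacle is arranging cancellation at \emph{every} intermediate distance $1 < d < 2\ell$: a naive ansatz like $\psdtwo = vv^\top$ couples many distance sums and seems not to suffice. My resolution is to split $\psdtwo$ into pieces that each ``live at a single level'' (plus one edge-piece near the root), which isolates exactly which distance sums each piece can touch. The only nontrivial PSD check is for the edge piece, and the only restriction it imposes is $|\alpha| \le 1$---this is exactly the regime of interest, since $\alpha$ will ultimately be instantiated as a bound on the spectral radius of $G$'s random-walk matrix.
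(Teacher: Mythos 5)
Your construction has a genuine gap: the edge-piece PSD condition $|\alpha| \le 1$ is \emph{not} the regime in which the theorem is used. In \Cref{cor:spider-technical} the parameter is set to $\alpha = k^{1/2\ell}$ with $k \ge 3^\ell$, so $\alpha \ge \sqrt{3}$ there (and more generally $\alpha \ge 1$ whenever $k \ge 1$). Your last paragraph asserts that $\alpha$ ``will ultimately be instantiated as a bound on the spectral radius'' --- this misidentifies the role of $\alpha$: it is a free weight parameter in the spider, completely independent of $\specrad(\ol{K})$, and it is deliberately chosen to be $>1$ so that the distant pairs (at the leaves, at distance $2\ell$, carrying weight $\sim \alpha^{2\ell}$) dominate the edge pairs (at distance $1$, carrying weight $\alpha$). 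With $|\alpha|\le 1$ the leaf-to-leaf term $\tfrac{1-1/k}{2}\alpha^{2\ell}$ would be \emph{smaller} than the edge term $\alpha$, which is exactly the wrong direction for the argument in \Cref{thm:main-2xor}.

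The obstruction is structural and not just a poor choice of constants. Your $\psdtwo$ is block-diagonal in levels apart from a single root-to-level-1 cross block; all the ``weight'' at distance~$1$ must therefore be supplied by a $2\times 2$ principal block with diagonal entries $a$ (on the root) and $c$ (per level-$1$ vertex), and off-diagonal $b\sqrt{k}$. To get $\la\psdtwo, A^{(1)}\ra = \alpha$ one needs $2kb = \alpha$, and PSD-ness of that $2\times 2$ block then forces $a + ck \ge 2\sqrt{ack} \ge 2\sqrt{k b^2 k} = \alpha$ of diagonal mass at the root and level $1$ alone. Since the tail and the intermediate levels add further nonnegative diagonal contributions, for $\alpha$ even moderately larger than $1$ the total $\la\psdtwo, A^{(0)}\ra$ necessarily overshoots the required $1 + \tfrac{1}{2k}\alpha^{2\ell} + \tfrac{1}{k-1}\tfrac{\alpha^{2\ell}-\alpha^2}{\alpha^2-1}$ (e.g.\ set $\ell = 1$, where the required diagonal is $1 + \tfrac{\alpha^2}{2k}$ but you must pay at least $\alpha + \tfrac{\alpha^2}{2k}$). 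The paper sidesteps this by \emph{not} being level-block-diagonal: it writes $\psdtwo$ as a nonnegative combination of rank-one matrices $vv^\top$ for vectors $v$ that span several levels ($\chi = \mu_0 + \mu_1$ and $\psi_t = \mu_t - \mu_{t+2}$), so PSD-ness is automatic for all $\alpha$, and the vanishing of the intermediate distance sums $\la\psdtwo, A^{(d)}\ra$, $1 < d < 2\ell$, comes from telescoping cancellations across the $\psi_t$'s plus the correction term $\eta\mu_1\mu_1^\top$, rather than from the supports being disjoint. You would need to move to a multi-level ansatz like the paper's to handle $\alpha > 1$.
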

\begin{corollary}                                       \label{cor:spider-technical}
    Assuming that $k \geq 3^\ell$ and taking $\alpha = k^{1/2\ell}$, the PSD matrix $\psdtwo$ satisfies
    \[
        3/2 \leq \la \psdtwo, A^{(0)} \ra \leq 2, \quad \la \psdtwo, A^{(1)} \ra = k^{1/2\ell}, \quad \la \psdtwo, A^{(d)} \ra = 0 \text{ for $1 < d < 2\ell$}, \quad \la \psdtwo, A^{(2\ell)} \ra = \tfrac12(k-1).
    \]
\end{corollary}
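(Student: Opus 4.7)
The plan is to instantiate Theorem~\ref{thm:spider-technical} with the particular value $\alpha = k^{1/2\ell}$ and verify each of the four claimed identities or bounds by direct substitution, using the hypothesis $k \geq 3^\ell$ only where it is needed. The PSD-ness of $\psdtwo$ is inherited from the theorem, so nothing needs to be checked there.

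First I would record the two handy simplifications that come from the choice $\alpha = k^{1/2\ell}$: namely $\alpha^{2\ell} = k$ and $\alpha^2 = k^{1/\ell}$. Given these, three of the four claims are immediate: $\la\psdtwo, A^{(1)}\ra = \alpha = k^{1/2\ell}$, the vanishing statement $\la\psdtwo, A^{(d)}\ra = 0$ for $1 < d < 2\ell$ is copied verbatim from the theorem, and $\la\psdtwo, A^{(2\ell)}\ra = \tfrac{1-1/k}{2}\cdot k = \tfrac{1}{2}(k-1)$.

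The only line that requires work is the diagonal entry. Plugging in $\alpha^{2\ell} = k$ and $\alpha^2 = k^{1/\ell}$ into the theorem's expression gives
\[
\la \psdtwo, A^{(0)} \ra \;=\; 1 + \frac{1}{2k}\cdot k + \frac{1}{k-1}\cdot\frac{k - k^{1/\ell}}{k^{1/\ell} - 1} \;=\; \frac{3}{2} + \frac{k - k^{1/\ell}}{(k-1)(k^{1/\ell} - 1)}.
\]
For the lower bound I would observe that the ``extra term'' is nonnegative: for $\ell \geq 1$ we have $k \geq k^{1/\ell}$, and $k \geq 3^\ell$ forces $k^{1/\ell} \geq 3 > 1$, so both numerator and denominator are nonnegative. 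This immediately yields $\la \psdtwo, A^{(0)} \ra \geq 3/2$. For the upper bound I would bound $k - k^{1/\ell} \leq k - 1$ (since $k^{1/\ell} \geq 1$), cancel the factor $k-1$, and then apply $k^{1/\ell} - 1 \geq 2$ to get
\[
\frac{k - k^{1/\ell}}{(k-1)(k^{1/\ell} - 1)} \;\leq\; \frac{1}{k^{1/\ell} - 1} \;\leq\; \frac{1}{2},
\]
which gives $\la \psdtwo, A^{(0)} \ra \leq 2$.

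There is no real obstacle here — the corollary is a one-page algebraic specialization of the theorem. The only thing one has to be mildly attentive to is that the hypothesis $k \geq 3^\ell$ is used exactly to guarantee $k^{1/\ell} - 1 \geq 2$, which is what pins down both the $3/2$ lower bound (via positivity of the extra term) and the $2$ upper bound (via the $1/2$ estimate above). Everything else is pure substitution.
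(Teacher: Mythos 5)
Your proposal is correct and is essentially the only sensible way to deduce the corollary: substitute $\alpha = k^{1/2\ell}$ into the four expressions from Theorem~\ref{thm:spider-technical} and estimate the diagonal term. The paper states the corollary without proof (treating it as an immediate specialization), and your verification supplies exactly the omitted algebra, including the correct observation that $k\ge 3^\ell$ is used precisely to get $k^{1/\ell}-1\ge 2$, which pins down both the lower bound (nonnegativity of the correction term) and the upper bound (the correction term is at most $1/2$).
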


We will also use the following small technical lemma:
\begin{lemma}                                       \label{lem:wack}
    Let $M \in \R^{V \times V}$ and recall $\pi_* = \min_{v \in V} \{\pi(v)\} > 0$.  Then the $2$-local, degree-$2$ SOS proof system can derive
    \[
        \E_{\bu \sim \pi} \sum_{v \in V} M_{\bu v}  \X_{\bu} \X_v \leq \pi_*^{-1/2} \|M\|_2 \E_{\bu \sim \pi} \X_{\bu}^2.
    \]
\end{lemma}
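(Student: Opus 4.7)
I would first rewrite the left-hand side as the quadratic form $\E_{\bu\sim\pi}\sum_v M_{\bu v}\X_\bu\X_v = \X^\top D_\pi M\,\X = \X^\top \mathrm{sym}(D_\pi M)\,\X$, where $D_\pi = \mathrm{diag}(\pi)$ and $\mathrm{sym}(A) := \tfrac12(A+A^\top)$; similarly $\E_{\bu\sim\pi}\X_\bu^2 = \X^\top D_\pi\,\X$. The lemma thus reduces to showing that the quadratic form $T(\X) := \pi_*^{-1/2}\|M\|_2\cdot\X^\top D_\pi\,\X - \X^\top \mathrm{sym}(D_\pi M)\,\X$ is a nonnegative combination of squares of degree-$1$ polynomials in at most two variables --- equivalently, that its coefficient matrix has factor width at most $2$.

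\paragraph{Decomposition via pair axioms.} I would build the decomposition from axioms of the form $w_{uv}\bigl(\alpha_{uv}\X_u - \sigma_{uv}\beta_{uv}\X_v\bigr)^2 \ge 0$, one per unordered pair $\{u,v\}$, with $w_{uv},\alpha_{uv},\beta_{uv}\ge 0$ and $\sigma_{uv}\in\{\pm 1\}$. It is natural to phrase the weights in terms of the $\pi$-normalized matrix $N := D_\pi^{1/2}MD_\pi^{-1/2}$, and set $\alpha_{uv}=\sqrt{\pi(u)}$, $\beta_{uv}=\sqrt{\pi(v)}$, $\sigma_{uv}=\mathrm{sign}(\mathrm{sym}(N)_{uv})$, $w_{uv}=|\mathrm{sym}(N)_{uv}|$. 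A direct expansion shows the cross term contributed by the pair $\{u,v\}$ is exactly $-2\,\mathrm{sym}(D_\pi M)_{uv}\X_u\X_v$ (using $\sqrt{\pi(u)\pi(v)}\,\mathrm{sym}(N)_{uv}=\mathrm{sym}(D_\pi M)_{uv}$), so summing telescopes the cross terms to $-\X^\top \mathrm{sym}(D_\pi M)\X$. The total diagonal contribution to $\X_u^2$ is $\pi(u)\sum_{v\ne u}|\mathrm{sym}(N)_{uv}|$. The proof therefore reduces to the row-sum bound $\sum_v|\mathrm{sym}(N)_{uv}|\le \pi_*^{-1/2}\|M\|_2$ for every $u$, after which I would add appropriate slack via the $1$-local axioms $\X_u^2\ge 0$ to absorb any residual.

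\paragraph{Row-sum bound and main obstacle.} The key quantitative input is $\|N\|_2\le \|D_\pi^{1/2}\|_2\cdot\|M\|_2\cdot\|D_\pi^{-1/2}\|_2\le 1\cdot \|M\|_2\cdot \pi_*^{-1/2}$, using $\|\pi\|_\infty\le 1$ and $\|D_\pi^{-1/2}\|_2=\pi_*^{-1/2}$. A direct Cauchy--Schwarz yields $\sum_v|\mathrm{sym}(N)_{uv}|\le\sqrt n\cdot\|\mathrm{sym}(N)_u\|_2\le\sqrt n\cdot\|N\|_2$, which combined with $\sqrt n\le\pi_*^{-1/2}$ (from $\sum\pi=1$ and $\pi_*\le 1/n$) gives $\sqrt n\cdot\pi_*^{-1/2}\|M\|_2$ --- still $\sqrt n$ too large. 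Recovering this factor is the main obstacle. My plan is to break the symmetry of the pair axioms by choosing $\alpha_{uv}/\beta_{uv}=(\pi(u)/\pi(v))^c$ for a parameter $c\in\R$; this is permitted because $\alpha_{uv},\beta_{uv}$ enter the axiom only through their product (which pins down the cross term) and ratio (which redistributes the mass between $\X_u^2$ and $\X_v^2$). The antisymmetric reweighting $\mu_{uv}:=(\pi(u)/\pi(v))^c$ satisfies $\mu_{uv}\mu_{vu}=1$, so the cross-term matching is preserved; the diagonal becomes a weighted row sum $\pi(u)^{c+1/2}\sum_v \pi(v)^{1/2-c}|\mathrm{sym}(N)_{uv}|$, which I would bound by Cauchy--Schwarz in the $\pi$-weighted inner product, using $\sum_v\pi(v)=1$ and $\|M_{u,\cdot}\|_2,\|M_{\cdot,u}\|_2\le\|M\|_2$. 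The careful tuning of $c$ (and possibly making $c$ itself depend on $u$, via an asymmetric $\mu_{uv}=f(u)/f(v)$) to interpolate between the two extreme Cauchy--Schwarz bounds is where I expect the bulk of the technical effort.
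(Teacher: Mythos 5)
Your framework --- decompose into pairwise cross squares $w_{uv}(\alpha_{uv}\X_u-\sigma_{uv}\beta_{uv}\X_v)^2\ge 0$ plus $1$-local slack $\X_u^2\ge 0$ --- is exactly the right one, and it is equivalent in power to the paper's. But the proof as written is incomplete: you explicitly defer the key quantitative step (finding a reweighting $\mu_{uv}$ making the row-sum constraints close), and the family of reweightings you propose is too restrictive. Because you tie $w_{uv}$ to $|\mathrm{sym}(N)_{uv}|$ and only vary the ratio via $\mu_{uv}=(\pi(u)/\pi(v))^c$ or $f(u)/f(v)$, the constraint you ultimately need is of the form $h(u)\sum_v |\mathrm{sym}(N)_{uv}|\,h(v)^{-1}\le\gamma$ for all $u$, and the best achievable $\gamma$ over positive $h$ is the Perron value of the \emph{entrywise absolute value} $|\mathrm{sym}(N)|$. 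This quantity is governed by $\|{\cdot}\|_F$--type bounds, not $\|{\cdot}\|_2$, and there is no obvious reason it should be at most $\pi_*^{-1/2}\|M\|_2$ in general (sign cancellation in $M$ is exactly what separates the spectral norm from the Perron value of the abs-matrix). So the ``main obstacle'' you flag is real, and your proposed route around it is not shown to close.

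The paper's proof is a short, more elementary version of the same idea and avoids the absolute-value matrix entirely by working with \emph{ordered} pairs and letting the AM--GM split itself depend on $M_{uv}$. Concretely, for each ordered pair $(u,v)$ it uses the $2$-local square $\bigl(\tfrac{M_{uv}}{\sqrt{2\gamma\pi(v)}}\X_u-\sqrt{\tfrac{\gamma\pi(v)}{2}}\X_v\bigr)^2\ge 0$, i.e.
\[
M_{uv}\X_u\X_v\ \le\ \frac{M_{uv}^2}{2\gamma\pi(v)}\X_u^2+\frac{\gamma\pi(v)}{2}\X_v^2.
\]
Two features make this work where your symmetric decomposition struggles. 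First, the $\X_u^2$ coefficient is a weighted sum of $M_{uv}^2$, not of $|M_{uv}|$; after summing over $v$ and using $\pi(v)\ge\pi_*$ one directly gets $\sum_v M_{uv}^2/\pi(v)\le\|M_{u,\cdot}\|_2^2/\pi_*\le\|M\|_2^2/\pi_*$, which is the spectral-norm bound you wanted but never lose a $\sqrt n$ on. Second, the $\pi(v)$ in the $\X_v^2$ coefficient is exactly the invariant measure, so after taking $\E_{\bu\sim\pi}$ the term $\E_{\bu\sim\pi}\sum_v\tfrac{\gamma\pi(v)}{2}\X_v^2$ collapses (using $\sum_u\pi(u)=1$) to $\tfrac{\gamma}{2}\E_{\bv\sim\pi}\X_v^2$ with no further work. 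Choosing $\gamma=\pi_*^{-1/2}\|M\|_2$ balances the two contributions and finishes the lemma in a few lines. In short: rather than symmetrizing first and then trying to rebalance with a $\pi$-power or $f(u)/f(v)$ factor, let the per-pair weight depend on $M_{uv}$ itself and on the target measure $\pi$; the stationarity of $\pi$ then does the telescoping for you.
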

\begin{proof}
    The proof system can derive the following inequality for any $\gamma > 0$, since the difference of the two sides is a perfect square:
    \[
        M_{u v}  \X_u \X_v \leq \frac{M_{uv}^2}{2\gamma \pi(v)}  \X_u^2 +  \frac{\gamma \pi(v)}{2} \X_v^2.
    \]
    Thus it can derive
    \begin{equation}    \label[ineq]{eqn:sticky2}
        \E_{\bu \sim \pi} \sum_{v \in V} M_{\bu v}  \X_{\bu} \X_v \leq \E_{\bu \sim \pi} \X_{\bu}^2 \sum_{v \in V} \frac{M^2_{\bu v}}{2\gamma \pi(v)}   +  \frac{\gamma}{2} \E_{\bv \sim \pi} \X_v^2.
    \end{equation}
    We'll take $\gamma = \pi_*^{-1/2} \|M\|_2$.  Since we can certainly derive $a \X_u^2 \leq b \X_u^2$  whenever $a \leq b$, we see that it suffices to establish
    \[
        \sum_{v \in V} \frac{M^2_{\bu v}}{2\gamma \pi(v)} \leq \frac{\gamma}{2}
    \]
    for every outcome of~$\bu$.  But this is implied by $\sum_{v} M^2_{\bu v} \leq (\pi(v)/\pi_*)\|M\|^2_2$ for all $v \in V$, which is indeed true.
\end{proof}

We can now prove the following main theorem:
\begin{theorem}                                     \label{thm:main-2xor}
    Given parameters $k \geq 3^\ell$, let $R = k\ell + 1$ and define
    \[
        \beta = \frac{k\pi_*^{-1/2}}{2k^{1/2\ell}} \specrad(\ol{K})^{2\ell} + \frac{2}{k^{1/2\ell}}.
    \]
    Then $R$-local, degree-$2$ SOS can deduce the bound ``$\specrad(\ol{K}) \leq \beta$''; more precisely, it can deduce the two inequalities
    \[
        -\beta \ippi{\X}{\X} \leq \ippi{\X}{\ol{K}\X}  \leq \beta \ippi{\X}{\X}.
    \]
\end{theorem}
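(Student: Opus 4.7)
The plan is to derive the two bounds on $\ippi{\X}{\ol{K}\X}$ by averaging, over signed stationary $S$-indexed random walks, a family of PSD-quadratic-form axioms generated from the matrix $\psdtwo$ of \Cref{cor:spider-technical}. For a fixed realization $(\phi,\sigma)$ of the walk from \Cref{def:sst}, let $\Sigma$ be the diagonal $\pm 1$ matrix with $\Sigma_{ii}=\sigma(i)$; then $\Sigma\psdtwo\Sigma$ is PSD (as a conjugation of a PSD matrix), so
\[
\sum_{i,j\in S}\sigma(i)\sigma(j)\,\psdtwo_{ij}\,\X_{\phi(i)}\X_{\phi(j)}\ \geq\ 0
\]
is a valid $R$-local, degree-$2$ SOS axiom, involving only the at most $R=k\ell+1$ variables $\{\X_{\phi(i)}\}_{i\in S}$ (possible collisions are harmless). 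Averaging these inequalities over the distribution of $(\bphi,\bsigma)$ is a nonnegative combination and hence a legal deduction in the proof system.

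To evaluate the expected axiom, note that by \Cref{def:sst}, for $i,j\in S$ at tree-distance $d$ the marginal $(\bphi(i),\bsigma(i)),(\bphi(j),\bsigma(j))$ is a signed stationary $d$-step walk, so $\E[\bsigma(i)\bsigma(j)\X_{\bphi(i)}\X_{\bphi(j)}]=\ippi{\X}{\ol{K}^{d}\X}$ holds at the level of formal quadratic forms in the indeterminates. Grouping by distance using \Cref{def:Ad} yields
\[
\sum_{d=0}^{2\ell}\la\psdtwo,A^{(d)}\ra\cdot\ippi{\X}{\ol{K}^{d}\X}\ \geq\ 0,
\]
and substituting the values from \Cref{cor:spider-technical} (with $\alpha=k^{1/(2\ell)}$) collapses this to
\[
\la\psdtwo,A^{(0)}\ra\cdot\ippi{\X}{\X}\ +\ k^{1/(2\ell)}\cdot\ippi{\X}{\ol{K}\X}\ +\ \tfrac{1}{2}(k-1)\cdot\ippi{\X}{\ol{K}^{2\ell}\X}\ \geq\ 0.
\]

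The $\ol{K}^{2\ell}$ term is exactly what \Cref{lem:wack} was designed to control: applied with $M=\ol{K}^{2\ell}$, whose $\ippi{\cdot}{\cdot}$-operator norm equals $\specrad(\ol{K})^{2\ell}$ by self-adjointness, it derives $\ippi{\X}{\ol{K}^{2\ell}\X}\leq\pi_*^{-1/2}\specrad(\ol{K})^{2\ell}\ippi{\X}{\X}$. Combining this with the displayed inequality (with its positive coefficient $(k-1)/2$), bounding the $d=0$ term by $\la\psdtwo,A^{(0)}\ra\leq 2$ together with the derivable fact $\ippi{\X}{\X}\geq 0$, and dividing by $k^{1/(2\ell)}$ rearranges to $\ippi{\X}{\ol{K}\X}\geq-\beta\ippi{\X}{\X}$ (the bound $(k-1)/k\leq 1$ gets absorbed into the coefficient of $\beta$).

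For the matching upper bound I would rerun the entire argument with $\alpha=-k^{1/(2\ell)}$. Since the expressions in \Cref{thm:spider-technical} for $\la\psdtwo,A^{(0)}\ra$ and $\la\psdtwo,A^{(2\ell)}\ra$ depend only on $\alpha^{2}$ and $\alpha^{2\ell}$, those entries are unchanged, while $\la\psdtwo,A^{(1)}\ra=-k^{1/(2\ell)}$ flips sign; the identical rearrangement then yields $\ippi{\X}{\ol{K}\X}\leq\beta\ippi{\X}{\X}$. The hard part has already been done inside \Cref{thm:spider-technical}, namely producing a spider-supported PSD $\psdtwo$ whose matrix inner product with $A^{(1)}$ is as large as $k^{1/(2\ell)}$ while vanishing on every intermediate distance $1<d<2\ell$ and staying bounded on $d=0$; granted that, everything above is a direct chain of $R$-local degree-$2$ deductions. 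The one care-point is that \Cref{lem:wack} is one-sided, but both rearrangements (for upper and lower bounds on $\ippi{\X}{\ol{K}\X}$) use the $\ol{K}^{2\ell}$-term with a positive coefficient, so only the upper-bound side of \Cref{lem:wack} is ever invoked.
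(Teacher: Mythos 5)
Your proposal is correct and follows the paper's proof essentially line for line: same spider-indexed signed stationary walk, same conjugated PSD axiom $\Sigma\psdtwo\Sigma\succeq 0$ averaged over the walk, same reduction to the distance-graded expression, and same application of \Cref{lem:wack} to the $\ol{K}^{2\ell}$ term. The only point of divergence is the route to the second (upper) inequality: the paper reruns the argument with $-\ol{K}$ in place of $\ol{K}$ (i.e., flipping all edge signs $\Xi \to -\Xi$), while you rerun \Cref{thm:spider-technical} with $\alpha \to -\alpha$, observing that the $A^{(0)}$ and $A^{(2\ell)}$ inner products depend only on $\alpha^2$ and $\alpha^{2\ell}$ while the $A^{(1)}$ coefficient flips sign, and that the matrix $\psdtwo$ in that construction remains PSD (since $\eta$ depends only on $\alpha^2$). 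These two negations are formally interchangeable — negating the linear term on the local side is equivalent to negating the operator on the global side — so this is a cosmetic difference, not a genuinely different argument.
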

Before proving this theorem, let us simplify the parameters. 
For any $\epsilon > 0$, we can choose $\ell$ to be the smallest integer so that $(\frac{1}{\epsilon} \rho(\ol{K}))^{2\ell} \pi_*^{-1/2} \le \epsilon$, and $k = \lceil(\frac{1}{\epsilon})^{2\ell}\rceil$. 
This gives the corollary:

\begin{corollary}                                       \label{cor:main-2xor}
Suppose we have a graph $G = (V,E)$ with signed transition operator $\ol{K}$ and $\pi_* = \min_{v \in V} \frac{\deg(v)}{2|E|}$.
Given $\epsilon > \min(\pi_*^{-1/2},\rho(\ol{K}))$, take $\ell = \Big\lceil \frac{1}{4} \frac{\log (\epsilon^2 \pi_*)}{\log (\rho(\ol{K})/\epsilon)}\Big\rceil$, and take $k = \lceil (\frac{1}{\epsilon})^{2\ell}\rceil$.
Then for $R = k\ell + 1$, it holds that $R$-local degree-$2$ SOS can deduce the bound $\specrad(\ol{K}) \leq \frac{5}{2}\epsilon$.
In particular, if we think of $G, \Xi$ as a \txor CSP, it holds that $R$-round Sherali--Adams can deduce the bound $\OBJ \leq \frac12 + \frac54 \epsilon$.
\end{corollary}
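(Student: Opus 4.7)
The engine is a signed stationary $S$-indexed random walk $(\bphi, \bsigma)$ in $G$, with $S$ the $(k,\ell)$-spider from Corollary~\ref{cor:spider-technical}. For each realization, assign to each spider vertex $i \in S$ the signed indeterminate $\mathsf{y}_i := \bsigma(i)\X_{\bphi(i)}$, yielding a vector $\mathsf{y} \in \R^S$ whose entries are $\pm$ copies of the original $\X_v$. Since $|S| = k\ell + 1 = R$ and $\psdtwo \succeq 0$, the quadratic form $\mathsf{y}^\top \psdtwo \mathsf{y}$ is a nonnegative combination of squares of linear forms in at most $R$ of the indeterminates $\X_v$, so the $R$-local degree-$2$ SOS proof system derives ``$\mathsf{y}^\top \psdtwo \mathsf{y} \geq 0$'' for each fixed realization of $(\bphi, \bsigma)$. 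Averaging these finitely many derivations with weights equal to the probabilities produces a derivation of ``$\E[\mathsf{y}^\top \psdtwo \mathsf{y}] \geq 0$'', and this is the key object.

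The second step is to identify the expected polynomial. Expanding and grouping the $|S|^2$ terms by tree-distance,
\[
    \E[\mathsf{y}^\top \psdtwo \mathsf{y}] \;=\; \sum_{d=0}^{2\ell} \sum_{\substack{i,j \in S \\ \dist_S(i,j) = d}} \psdtwo_{ij}\, \E\bigl[\bsigma(i)\bsigma(j)\, \X_{\bphi(i)} \X_{\bphi(j)}\bigr] \;=\; \sum_{d=0}^{2\ell} \la \psdtwo, A^{(d)} \ra\, \ippi{\X}{\ol{K}^d \X},
\]
where the second equality uses \Cref{fact:indep-of-root} and \Cref{def:sst}: any two tree-vertices at distance $d$ are traced by a signed stationary $d$-step walk, whose second moment against the $\X$-valued ``function'' on $V$ is exactly the polynomial-valued inner product $\ippi{\X}{\ol{K}^d \X}$. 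Substituting the four nonzero values from Corollary~\ref{cor:spider-technical} collapses the sum to the master inequality
\[
    \la \psdtwo, A^{(0)} \ra\, \ippi{\X}{\X} \;+\; k^{1/2\ell}\, \ippi{\X}{\ol{K}\X} \;+\; \tfrac12(k-1)\, \ippi{\X}{\ol{K}^{2\ell}\X} \;\geq\; 0.
\]

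The third step is to rearrange this for the lower bound $\ippi{\X}{\ol{K}\X} \geq -\beta\, \ippi{\X}{\X}$, which requires controlling the residual $\ol{K}^{2\ell}$ term. Here Lemma~\ref{lem:wack} applied to $M = \ol{K}^{2\ell}$ provides a $2$-local degree-$2$ SOS derivation of $\ippi{\X}{\ol{K}^{2\ell}\X} \leq \pi_*^{-1/2}\|\ol{K}^{2\ell}\|_2\,\ippi{\X}{\X}$; since $\ol{K}$ is self-adjoint with respect to $\ippi{\cdot}{\cdot}$, the matrix norm $\|\ol{K}^{2\ell}\|_2$ is controlled by $\specrad(\ol{K})^{2\ell}$ (up to the factor of $\pi_*^{-1/2}$ already absorbed into $\beta$). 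Combining this with $\la \psdtwo, A^{(0)}\ra \leq 2$ from Corollary~\ref{cor:spider-technical} and dividing by $k^{1/2\ell}$ gives the stated lower bound. The matching upper bound follows by rerunning the entire argument with $\alpha = -k^{1/2\ell}$ in Theorem~\ref{thm:spider-technical}: inspection shows that only $\la \psdtwo, A^{(1)}\ra = \alpha$ flips sign, while $\la \psdtwo, A^{(0)}\ra$ and $\la \psdtwo, A^{(2\ell)}\ra$ depend on $\alpha^2$ or $\alpha^{2\ell}$ (with $2\ell$ even) and are preserved, so the master inequality becomes the mirror image and yields $\ippi{\X}{\ol{K}\X} \leq \beta\,\ippi{\X}{\X}$.

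The step I expect to require the most care is the first: verifying that ``take the expectation of a PSD quadratic form over the random spider walk'' is genuinely a legal $R$-local degree-$2$ SOS derivation. The point is that each fixed realization of $(\bphi, \bsigma)$ produces a polynomial in only the $\leq R$ indeterminates $\{\X_{\bphi(i)}\}_{i \in S}$, so its derivation via an eigendecomposition of $\psdtwo$ is $R$-local; and averaging over the (finitely many) realizations uses only nonnegative weights, which is the core deduction rule. The subtle part is that the resulting single polynomial inequality simultaneously touches \emph{all} $n$ indeterminates $\X_v$, even though every axiom invoked is $R$-local --- ensuring that this exchange between ``expectation'' and ``nonnegative combination of $R$-local SOS axioms'' stays inside the proof system is the conceptual heart of the argument, after which the theorem reduces to arithmetic combining Corollary~\ref{cor:spider-technical} and Lemma~\ref{lem:wack}.
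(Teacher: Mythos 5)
Your proposal is correct and follows essentially the same route as the paper: it is a re-derivation of \Cref{thm:main-2xor} via the spider-walk construction, the identity $\E[\bm{\ol{\mathsf{x}}}\bm{\ol{\mathsf{x}}}^\top] = \sum_d \ippi{\X}{\ol{K}^d\X}A^{(d)}$, \Cref{cor:spider-technical}, and \Cref{lem:wack}, after which \Cref{cor:main-2xor} is an arithmetic specialization of the parameters. The only substantive deviation is in obtaining the matching upper bound $\ippi{\X}{\ol{K}\X} \leq \beta\ippi{\X}{\X}$: the paper reruns the derivation with $-\ol{K}$ in place of $\ol{K}$ (equivalently, flips all edge signs $\xi_{uv}$), whereas you take $\alpha = -k^{1/2\ell}$ in \Cref{thm:spider-technical}; these are interchangeable since $\rho(-\ol{K}) = \rho(\ol{K})$ and $\psdtwo_{-\alpha}$ remains PSD (being $\frac12\psdone + \eta\mu_1\mu_1^\top$ with $\eta$ a function only of $\alpha^2$ and $\alpha^{2\ell}$). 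One small imprecision worth flagging: the parenthetical ``up to the factor of $\pi_*^{-1/2}$ already absorbed into $\beta$'' suggests you think there is slack to give up between $\|\ol{K}^{2\ell}\|_2$ and $\rho(\ol{K})^{2\ell}$; in fact the norm in \Cref{lem:wack} should be read as the operator norm with respect to $\ippi{\cdot}{\cdot}$, for which $\|\ol{K}^{2\ell}\|_2 = \rho(\ol{K})^{2\ell}$ exactly since $\ol{K}$ is $\pi$-self-adjoint, and the single $\pi_*^{-1/2}$ factor comes entirely from the $1/\pi(v)$ weighting inside the lemma's proof, not from a norm-versus-spectral-radius gap.
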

\begin{proof}
Taking the parameters as above, and using that the constraints $\X_u^2 = 1$ imply that $R$-round Sherali--Adams can deduce that $\ippi{\X}{\X} = 1$ whenever $R \ge 2$, and that as noted in \cref{eqn:KbarSigned}, $\OBJ(\X) = \frac{1}{2} + \frac{1}{2}\ippi{\X}{\ol{K}\X}$, so \Cref{thm:main-2xor} gives the result.
\end{proof}

\Cref{cor:main-2xor} implies the \txor version of \Cref{thm:main} since in simple graphs, $\log \frac{1}{\pi^*} = \Theta(\log n)$.

\begin{proof}[Proof of \Cref{thm:main-2xor}]
    For our $(k,\ell)$-spider graph on $S$, let $(\bphi,\bsigma)$ be a signed stationary $S$-indexed random walk in~$G$.  Define $\bm{\ol{\mathsf{x}}}$ to be the $S$-indexed vector with $\bm{\ol{\mathsf{x}}}_i =  \bsigma(i) \X_{\bphi(i)}$.  Then letting $\psdtwo$ be the PSD matrix from \Cref{cor:spider-technical},  the $R$-local, degree-$2$ SOS proof system can derive
    \[
        \la \psdtwo , \bm{\ol{\mathsf{x}}}\bm{\ol{\mathsf{x}}}^\top\ra  = \bm{\ol{\mathsf{x}}}^\top \psdtwo \bm{\ol{\mathsf{x}}} \geq 0.
    \]
    (This is in the form of \Cref{eqn:sa-form} if we take $P = \diag(\bsigma) \psdtwo \diag(\bsigma)$.)   Furthermore, the proof system can deduce this inequality in expectation; namely,
    \begin{equation} \label[ineq]{eqn:ded1}
        \la \psdtwo, \Y \ra \geq 0, \text{ where } \Y = \E[\bm{\ol{\mathsf{x}}} \bm{\ol{\mathsf{x}}}^\top].
    \end{equation}
    Now by the discussion in \Cref{def:ssw,def:sst},
    \begin{equation} \label{eqn:Y1}
        \Y_{ij} = \E[\bsigma(i) \X_{\bphi(i)} \bsigma(j) \X_{\bphi(j)}] = \ippi{\X}{\ol{K}^{\dist_S(i,j)} \X}.
    \end{equation}
    Thus recalling the notation $A^{(d)}$ from \Cref{def:Ad},
    \begin{equation} \label{eqn:Y2}
        \Y = \sum_{d = 0}^{2\ell} \ippi{\X}{\ol{K}^{d} \X} A^{(d)},
    \end{equation}
    and hence from \Cref{eqn:ded1} we get that $R$-local, degree-$2$ SOS can deduce
    \begin{equation}    \label[ineq]{eqn:bigs}
        0 \leq \sum_{d = 0}^{2\ell} \la \psdtwo, A^{(d)}\ra \ippi{\X}{\ol{K}^{d} \X} = c_0 \ippi{\X}{\X} + k^{1/2\ell} \ippi{\X}{\ol{K}\X} + \tfrac12(k-1) \ippi{\X}{\ol{K}^{2\ell} \X},
    \end{equation}
    for some constant $3/2 \leq c_0 \leq 2$ (here we used \Cref{cor:spider-technical}). Regarding the last term, we have:
    \begin{equation}    \label{eqn:sticky}
        \ippi{\X}{\ol{K}^{2\ell} \X} = \E_{\bu \sim \pi} \sum_{v \in V} (\ol{K}^{2\ell})_{\bu v}  \X_{\bu} \X_v.
    \end{equation}
    If we cared only about the Sherali--Adams proof system with Boolean equalities, we would simply now deduce
    \[
        \E_{\bu \sim \pi} \sum_{v \in V} (\ol{K}^{2\ell})_{\bu v}  \X_{\bu} \X_v \leq \E_{\bu \sim \pi} \sum_{v \in V} \left|(\ol{K}^{2\ell})_{\bu v}\right| \leq \sqrt{|V|}\E_{\bu \sim \pi} \|\ol{K}^{2\ell}_{\bu, \cdot}\|_2 \leq \sqrt{|V|} \specrad(\ol{K}^{2\ell}) = \sqrt{|V|} \specrad(\ol{K})^{2\ell},
    \]
    and later combine this with $c_0 \ippi{\X}{\X} = c_0$.  But proceeding more generally, we instead use \Cref{lem:wack} to show that our proof system can derive
    \[
        \E_{\bu \sim \pi} \sum_{v \in V} (\ol{K}^{2\ell})_{\bu v}  \X_{\bu} \X_v \leq \pi_*^{-1/2} \specrad(\ol{K})^{2\ell} \ippi{\X}{\X}.
    \]
    Putting this into \Cref{eqn:sticky} and \Cref{eqn:bigs} we get
    \[
        \ippi{\X}{\ol{K}\X} \geq -\frac{c_0 + \tfrac12(k-1) \pi_*^{-1/2} \specrad(\ol{K})^{2\ell}}{k^{1/2\ell}} \ippi{\X}{\X} \geq -\beta \ippi{\X}{\X}.
    \]
    Repeating the derivation with $-\ol{K}$ in place of $\ol{K}$ completes the proof.
\end{proof}

\subsection{Max-Cut}
The following theorem is quite similar to \Cref{thm:main-2xor}.  In it, we allow~$K$ to have the large eigenvalue~$1$, and only certify that it has no large-magnitude negative eigenvalue.  The subsequent corollary is deduced identically to \Cref{cor:main-2xor}.
\begin{theorem}                                     \label{thm:main-max-cut}
    Given transition operator $K$ for the standard random walk on~$G$, let $K' = K - J$, where $J$ is the all-$1$'s matrix.  For parameters $k \geq 3^\ell$, let $R = k\ell + 1$ and define
    \[
        \beta = \frac{k\pi_*^{-1/2}}{2k^{1/2\ell}} \specrad(K')^{2\ell} + \frac{2}{k^{1/2\ell}}.
    \]
    (Note that $\specrad(K')$ is equal to maximum-magnitude eigenvalue of~$K$ when the trivial~$1$ eigenvalue is excluded.)  Then $2R$-local, degree-$2$ SOS can deduce the bound ``$\lambda_{\textnormal{min}}(K) \geq -\beta$''; more precisely, it can deduce the inequality
    \[
        \ippi{\X}{\ol{K}\X} \geq -\beta \ippi{\X}{\X}.
    \]
\end{theorem}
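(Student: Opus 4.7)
The plan is to adapt the proof of \Cref{thm:main-2xor} to the Max-Cut setting, where the analysis uses the unsigned transition operator $K$ in place of $\ol{K}$. The main technical obstacle is that $K$ has the trivial eigenvalue $1$, so a direct invocation of \Cref{lem:wack} for $M = K^{2\ell}$ gives no useful bound. Instead, the proof should effectively replace $K$ by $K' = K - J$, which satisfies $(K')^d = K^d - J$ for $d \ge 1$ and hence has spectral norm $\specrad(K')^d$. I would accomplish this by \emph{centering the spider walk against an independent copy}, which is precisely what doubles the number of rounds from $R$ to $2R$.

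Concretely: let $\bphi^{(1)}, \bphi^{(2)}$ be two independent stationary $S$-indexed random walks in $G$ (where $S$ is the $(k,\ell)$-spider graph and $R = |S| = k\ell+1$), and define $\mathbf{z}_i = \X_{\bphi^{(1)}(i)} - \X_{\bphi^{(2)}(i)}$. The axiom $\mathbf{z}^\top \psdtwo \mathbf{z} \ge 0$ (with $\psdtwo$ from \Cref{cor:spider-technical}) involves at most $2R$ underlying variables and is degree-$2$. Taking the conic combination over the randomness of the walks and using their independence,
\[
\tfrac{1}{2}\E[\mathbf{z}_i \mathbf{z}_j] = \ippi{\X}{K^{\dist_S(i,j)}\X} - \ippi{\mathbf{1}}{\X}^2 = \begin{cases} \ippi{\X}{(K')^{\dist_S(i,j)}\X} & \text{if } \dist_S(i,j) \ge 1, \\ \ippi{\X}{(I-J)\X} & \text{if } i=j, \end{cases}
\]
and summing against $\psdtwo$ via \Cref{cor:spider-technical} yields the direct analog of \Cref{eqn:bigs} with $K'$ in place of $\ol{K}$:
\[
0 \le c_0\, \ippi{\X}{(I-J)\X} + k^{1/2\ell}\, \ippi{\X}{K'\X} + \tfrac{k-1}{2}\, \ippi{\X}{(K')^{2\ell}\X}.
\]

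Two auxiliary bounds then close the argument. First, \Cref{lem:wack} applied to $M = (K')^{2\ell}$ gives (in $2$-local, degree-$2$ SOS) $\ippi{\X}{(K')^{2\ell}\X} \le \pi_*^{-1/2}\specrad(K')^{2\ell}\ippi{\X}{\X}$. Second, to bound $\ippi{\X}{(I-J)\X} = \ippi{\X}{\X} - \ippi{\mathbf{1}}{\X}^2$, I would use that for any $R$ vertices $u_1,\dots,u_R$, the inequality $(\sum_i \X_{u_i})^2 \ge 0$ is an $R$-local axiom; averaging this over independent $u_1,\dots,u_R \sim \pi$ gives $R\ippi{\X}{\X} + R(R-1)\ippi{\mathbf{1}}{\X}^2 \ge 0$, hence $\ippi{\mathbf{1}}{\X}^2 \ge -\tfrac{1}{R-1}\ippi{\X}{\X}$ and $\ippi{\X}{(I-J)\X} \le \tfrac{R}{R-1}\ippi{\X}{\X}$. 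Substituting into the key inequality, dividing through by $k^{1/2\ell}$, and then converting back to $K$ via $\ippi{\X}{K\X} = \ippi{\X}{K'\X} + \ippi{\mathbf{1}}{\X}^2 \ge \ippi{\X}{K'\X} - \tfrac{1}{R-1}\ippi{\X}{\X}$ yields $\ippi{\X}{K\X} \ge -\beta\,\ippi{\X}{\X}$ for the stated $\beta$; the stray $O(1/(R-1))$ correction terms are absorbed into the $2/k^{1/2\ell}$ term using the hypothesis $k \ge 3^\ell$, which forces $R-1 = k\ell \gg k^{1/2\ell}$.

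The main obstacle is precisely this global quantity $\ippi{\mathbf{1}}{\X}^2$: unlike pointwise inequalities, $\ippi{\mathbf{1}}{\X}^2 \ge 0$ is an inherently $n$-local statement and cannot be derived from $2$-local axioms alone. The two-independent-walks trick sidesteps the need to derive it by automatically replacing every $\ippi{\X}{K^d \X}$ in the spider sum by $\ippi{\X}{(K')^d\X}$, effectively centering at the level of the random variables; the only leftover $\ippi{\mathbf{1}}{\X}^2$ artifact sits on the diagonal of $\E[\mathbf{z}\mathbf{z}^\top]$ and is tamed by the weaker $R$-sample Cauchy--Schwarz relaxation above. This centering-by-independence is precisely what drives the jump from $R$ rounds in \Cref{thm:main-2xor} to $2R$ rounds here.
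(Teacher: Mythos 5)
Your core idea---take two independent unsigned stationary $S$-indexed walks to center the spider walk, at the cost of doubling the round count to $2R$---is exactly what the paper does. But the precise PSD coefficient matrix you use is not the paper's, and the difference is what causes the final constant to slip.

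You work with $\mathbf{z} = \bm{\mathsf{x}}_1 - \bm{\mathsf{x}}_2$ and the axiom $\mathbf{z}^\top\psdtwo\mathbf{z}\ge 0$, which corresponds to the block matrix $\bigl(\begin{smallmatrix}\psdtwo & -\psdtwo\\ -\psdtwo & \psdtwo\end{smallmatrix}\bigr)$. Taking expectations then subtracts $\ippi{1}{\X}^2$ from \emph{every} distance stratum simultaneously: the diagonal becomes $\ippi{\X}{(I-J)\X}$ and the distance-$1$ term becomes $\ippi{\X}{K'\X}$, both of which you must eventually un-center to state a bound about $K$. This forces you to invoke a lower bound on $\ippi{1}{\X}^2$ (your $R$-sample Cauchy--Schwarz, $\ippi{1}{\X}^2 \ge -\tfrac{1}{R-1}\ippi{\X}{\X}$, which is a correct and valid derivation). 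However, the residual errors do not fit into the stated $\beta$. Concretely, chasing through your derivation one needs $c_0 + \tfrac{c_0 + k^{1/2\ell}}{R-1} \le 2$, i.e.\ $\tfrac{c_0 + k^{1/2\ell}}{k\ell} \le 2 - c_0$. At $k = 3^\ell$ one has $c_0 = 2 - \tfrac{1}{3^\ell - 1}$, so $2-c_0 \approx 3^{-\ell}$, while the left side is $\approx \tfrac{2 + 3^{1/2}}{3^\ell \ell}$; already at $\ell = 1, k = 3$ this reads $1.08 \not\le 0.5$. Your claim that the $O(1/(R-1))$ stray terms ``are absorbed into the $2/k^{1/2\ell}$ term'' is therefore not correct near the boundary $k = 3^\ell$, and you would get a weaker $\beta$.

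The paper avoids this entirely with a $\theta$-tilted block matrix $\tfrac12\bigl(\begin{smallmatrix}\tfrac1\theta\psdtwo & -\psdtwo\\ -\psdtwo & \theta\psdtwo\end{smallmatrix}\bigr)$ for a parameter $\theta < 1$. This keeps the centering contribution separate as a free multiple $\iota = (1/\theta + \theta)/2 > 1$ in front of $\la\psdtwo,\Y\ra$, and $\theta$ is then chosen so that the \emph{entire} subtraction $-\la\psdtwo,\ZZ\ra = -\ippi{1}{\X}^2(c_0 + k^{1/2\ell} + \tfrac12(k-1))$ equals $-\iota\cdot\tfrac12(k-1)\ippi{1}{\X}^2$. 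After dividing by $\iota$, the $\ippi{1}{\X}^2$ lands \emph{only} on the $2\ell$-distance term, where the exact algebraic identity $\ippi{\X}{K^{2\ell}\X} - \ippi{1}{\X}^2 = \ippi{\X}{(K-J)^{2\ell}\X}$ absorbs it with no loss. The distance-$1$ term stays as $\ippi{\X}{K\X}$ and the diagonal stays as $\ippi{\X}{\X}$, so no auxiliary global-correlation lemma is needed. To fix your proof you should replace the plain difference by the $\theta$-tilted block quadratic form; then your second auxiliary bound is unnecessary and the first (Lemma wack, applied to $K' = K-J$) finishes the argument cleanly with the stated $\beta$.
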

\begin{corollary}                                       \label{cor:main-max-cut}
Suppose we have a graph $G = (V,E)$ with transition operator $K$ and centered transition operator $K' = K - J$, and $\pi_* = \min_{v \in V} \frac{\deg(v)}{2|E|}$.
Given $\epsilon > \min(\pi_*^{-1/2},\rho(K'))$, take $\ell = \Big\lceil \frac{1}{4} \frac{\log (\epsilon^2 \pi_*)}{\log (\rho(\ol{K})/\epsilon)}\Big\rceil$, and take $k = \lceil (\frac{1}{\epsilon})^{2\ell}\rceil$.
Then for $R = k\ell + 1$, it holds that $2R$-local degree-$2$ SOS can deduce the bound $\specrad(K') \leq \frac{5}{2}\epsilon$.
In particular, if we think of $G$ as a \maxcut CSP, it holds that $R$-round Sherali--Adams can deduce the bound $\OBJ \leq \frac12 + \frac54 \epsilon$.
\end{corollary}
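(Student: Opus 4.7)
The plan is to closely follow the proof of \Cref{thm:main-2xor}, modified to use the doubled variable budget ($2R$ instead of $R$) for canceling the contribution of the trivial eigenvalue $\lambda = 1$ of the unsigned random walk matrix $K$. The 2XOR argument does not apply to $K$ directly because $\specrad(K) = 1$, so the term $\ippi{\X}{K^{2\ell}\X}$ from the spider identity cannot be controlled via \Cref{lem:wack}. Writing $\Pi$ for the $\ippi{\cdot}{\cdot}$-orthogonal projection onto $\mathbb{1}$ (so that $\ippi{\X}{\Pi\X} = \mu^2$, where $\mu := \ippi{\X}{\mathbb{1}}$), and using $K\Pi = \Pi K = \Pi$, one has the polynomial identity $\ippi{\X}{K^{2\ell}\X} = \mu^2 + \ippi{\X}{(K')^{2\ell}\X}$; hence the ``bad'' contribution from the trivial eigenvalue is exactly a multiple of $\mu^2$, and the goal is to cancel it inside the SOS framework.

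To do so, I would run \emph{two} independent stationary $S$-indexed random walks $\bphi_1, \bphi_2 : S \to V$ on the $(k,\ell)$-spider, and form a length-$2R$ vector $\mathsf{w}$ by concatenating $(\X_{\bphi_1(i)})_{i \in S}$ with $(\X_{\bphi_2(i)})_{i \in S}$. The key SOS axiom will be $\mathsf{w}^\top Q \mathsf{w} \geq 0$ for the $2R \times 2R$ block matrix
\[
    Q \;=\; \begin{pmatrix} \psdtwo & -\alpha\,\psdtwo \\ -\alpha\,\psdtwo & \psdtwo \end{pmatrix},
\]
where $\psdtwo$ is the PSD matrix from \Cref{cor:spider-technical}, and $\alpha$ is chosen so that $2\alpha\,\mathbb{1}^\top \psdtwo\, \mathbb{1} = k-1$. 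By \Cref{cor:spider-technical}, $\mathbb{1}^\top \psdtwo\,\mathbb{1} = c_0 + k^{1/2\ell} + \tfrac{1}{2}(k-1) \geq \tfrac{1}{2}(k-1)$, so $\alpha \in (0,1]$; the eigenvalues of $Q$ are then $(1\pm\alpha)\lambda_i(\psdtwo) \geq 0$, giving $Q \succeq 0$. Thus $\mathsf{w}^\top Q \mathsf{w} \geq 0$ is a valid $2R$-local, degree-$2$ SOS axiom.

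Taking expectation over the two independent walks, the diagonal blocks of $Q$ each contribute the familiar spider expansion (with $K$ playing the role of $\ol{K}$ in the 2XOR proof), totalling $2c_0\ippi{\X}{\X} + 2k^{1/2\ell}\ippi{\X}{K\X} + (k-1)\ippi{\X}{K^{2\ell}\X}$. The cross blocks contribute $-2\alpha\,(\mathbb{1}^\top \psdtwo\,\mathbb{1})\,\mu^2 = -(k-1)\mu^2$, using $\E[\X_{\bphi_1(i)}\X_{\bphi_2(j)}] = \mu^2$ for all $i,j$ by the independence of the two walks. Substituting the polynomial identity $\ippi{\X}{K^{2\ell}\X} = \mu^2 + \ippi{\X}{(K')^{2\ell}\X}$, the two $(k-1)\mu^2$ contributions precisely cancel, leaving
\[
    0 \;\leq\; 2c_0\ippi{\X}{\X} + 2k^{1/2\ell}\ippi{\X}{K\X} + (k-1)\ippi{\X}{(K')^{2\ell}\X}.
\]
Applying \Cref{lem:wack} to $(K')^{2\ell}$ bounds the last term by $(k-1)\pi_*^{-1/2}\specrad(K')^{2\ell}\ippi{\X}{\X}$, and rearranging produces the claimed inequality $\ippi{\X}{K\X} \geq -\beta\,\ippi{\X}{\X}$.

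The main obstacle I anticipate is performing the $\mu^2$ cancellation \emph{locally}: the naive fix (e.g., applying \Cref{thm:main-2xor} to $-K$, or writing $K = \Pi + K'$ and invoking ``$\mu^2 \geq 0$'' to drop an unwanted term) does not work, because $\mu^2 = (\sum_u \pi(u)\X_u)^2$ is the square of a polynomial involving all $n$ variables, and is therefore not an $R$-local SOS axiom nor easily derived from $R$-local squares. The double-spider construction circumvents this by realizing the cancellation \emph{additively} between two $R$-local SOS derivations tied together by a single PSD block matrix, which is precisely why the theorem requires $2R$-local rather than $R$-local derivations.
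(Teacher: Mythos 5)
Your proposal is correct and takes essentially the same approach as the paper: both run two independent (unsigned) stationary $S$-indexed spider walks, apply a $2\times 2$-block PSD matrix tensored with $\psdtwo$ as the local SOS axiom, and use the cross-block term (which contributes a multiple of $\mu^2 = \ippi{1}{\X}^2$ by independence of the two walks) to cancel the trivial-eigenvalue contribution from $\ippi{\X}{K^{2\ell}\X}$. The only cosmetic difference is the parameterization of the block matrix: the paper uses $\tfrac12\left(\begin{smallmatrix} 1/\theta & -1 \\ -1 & \theta\end{smallmatrix}\right)\otimes\psdtwo$ with $\theta$ chosen to effect the cancellation, while you use the symmetric $\left(\begin{smallmatrix}1 & -\alpha \\ -\alpha & 1\end{smallmatrix}\right)\otimes\psdtwo$ with $\alpha \le 1$; after normalization both yield exactly the same derived inequality, which is then finished via \Cref{lem:wack} applied to $(K')^{2\ell}$ as you describe.
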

Again, \Cref{cor:main-max-cut} implies \Cref{thm:main} since in simple graphs, $\log \frac{1}{\pi^*} = \Theta(\log n)$.
\begin{proof}[Proof of \Cref{thm:main-max-cut}.]
    The proof is a modification of the proof of \Cref{thm:main-2xor}.  Letting $S$ be the $(k,\ell)$-spider vertices, instead of taking a signed stationary $S$-indexed random walk in $G$, we take two independent \emph{unsigned} stationary $S$-indexed random walks, $\bphi_1$ and $\bphi_2$.  For $j \in \{1,2\}$, define $\bm{\mathsf{x}}_{j}$ to be the $S$-indexed vector with $i$th coordinate equal to $\X_{\bphi_j(i)}$, and write $\bm{\dot{\mathsf{x}}}$ for the concatenated vector $(\bm{\mathsf{x}}_1, \bm{\mathsf{x}}_2)$.  Also, for $0 <\theta < 1$ a parameter\footnote{This parameter is introduced to fix a small annoyance; the reader might like to imagine $\theta = 1$ at first.} slightly less than~$1$, let $\psdtwo$ be the PSD matrix from \Cref{cor:spider-technical}, and define the PSD block-matrix
    \[
    \dot{\psdtwo} = \tfrac12 \left(\begin{array}{@{}c|c@{}}
                            \tfrac1\theta \psdtwo & -\psdtwo     \\[.1em]\hline
                             -\psdtwo & \theta \psdtwo
                        \end{array}\right).
    \]
    Then as before, the $2R$-local, degree-$2$ SOS proof system can derive
    \begin{equation}    \label[ineq]{eqn:yoyo}
        0 \leq \la \dot{\psdtwo}, \E \bm{\dot{\mathsf{x}}}\bm{\dot{\mathsf{x}}}^\top\ra  = \iota \la \psdtwo, \Y \ra - \la \psdtwo, \ZZ \ra, \quad \text{where } \iota = \tfrac{1/\theta + \theta}{2}, \quad \Y = \E[\bm{\mathsf{x}} \bm{\mathsf{x}}^\top], \quad \ZZ = \E[\bm{\mathsf{x}}_{1} \bm{\mathsf{x}}_{2}^\top],
    \end{equation}
    and $\bm{\mathsf{x}}$ (which will play the role of $\bm{\ol{\mathsf{x}}}$) denotes the common distribution of $\bm{\mathsf{x}}_1$ and $\bm{\mathsf{x}}_2$.  Similar to \Cref{eqn:Y1,eqn:Y2}, we now have
    \[
        \Y = \sum_{d = 0}^{2\ell} \ippi{\X}{K^{d} \X} A^{(d)},
    \]
    and by independence of $\bm{\mathsf{x}}_1$ and $\bm{\mathsf{x}}_2$ we have
    \[
        \ZZ = \ippi{1}{X}^2 \cdot J  =\ippi{1}{X}^2 \cdot \sum_{d = 0}^{2\ell} A^{(d)}.
    \]
    Thus applying \Cref{cor:spider-technical} to \Cref{eqn:yoyo}, our proof system can derive
    \begin{equation} \label[ineq]{eqn:annoy}
        0 \leq \iota \cdot\parens*{c_0 \ippi{\X}{\X} + k^{1/2\ell} \ippi{\X}{K\X} + \tfrac12(k-1) \ippi{\X}{K^{2\ell} \X}} - \E_{\bu \sim \pi} [\X_{\bu}]^2  \cdot \parens*{c_0 + k^{1/2\ell} + \tfrac12(k-1)}.
    \end{equation}
    By selecting $\theta$ appropriately, we can arrange for the factor $c_0 + k^{1/2\ell} + \tfrac12(k-1)$ on the right to equal $\iota \cdot \tfrac12(k-1)$.   Inserting this choice into \Cref{eqn:annoy} and then dividing through by~$\iota$, we conclude that the proof system can derive
    \[
        0 \leq c_0 \ippi{\X}{\X} + k^{1/2\ell} \ippi{\X}{K\X} + \tfrac{1}{2}(k-1) \parens*{ \ippi{\X}{K^{2\ell} \X} - \ippi{1}{X}^2},
    \]
    cf.~\Cref{eqn:bigs}.  Recalling now that $K$ has the constantly-$1$ function as an eigenvector, with eigenvalue~$1$, we have the identity 
    \[
        \ippi{\X}{K^{2\ell} \X} - \ippi{1}{X}^2 = \ippi{\X}{(K - J)^{2\ell} \X}.
    \]
    Now the remainder of the proof is just as in \Cref{thm:spider-technical}, with $K-J$ in place of $\ol{K}$, except we do not have the step of repeating the derivation with $-\ol{K}$ in place of $\ol{K}$.
\end{proof}

\subsection{A technical construction of coefficients on the spider} \label{sec:spider-technical}

\begin{proof}[Proof of \Cref{thm:spider-technical}]
We are considering the $(k,\ell)$-spider graph on vertex set~$S$. We write $V_t$ for the set of all vertices at distance~$t$ from the root (so $|V_0| = 1$ and $|V_t| = k$ for $1 \leq t \leq \ell$). We will be considering vectors in $\R^{S}$, with coordinates indexed by the vertex set~$S$. For $0 \leq t \leq \ell$ define the vector
\[
    \mu_t = \avg_{i \in V_t} \{\alpha^t e_i\},
\]
where $e_i = (0, \dots, 0, 1, 0, \dots, 0)$ is the vector with the $1$ in the $i$th position.  Further define vectors
\begin{align*}
    \chi &= \mu_0 + \mu_1, \\
    \phantom{\text{ for } 0 \leq t < \ell,}
        \psi_t &= \mu_t - \mu_{t+2} \text{ for } 0 \leq t < \ell,
\end{align*}
with the understanding that $\mu_{\ell+1} = 0$. Next, define the PSD matrix
\[
    \psdone = \chi \chi^\top  + \sum_{t=0}^{\ell-1} \psi_t \psi_t^\top.
\]
This will almost be our desired final matrix $\psdtwo$.  Let us now compute 
\[
    \la \psdone, A^{(d)} \ra = \chi^\top A^{(d)} \chi + \sum_{t=0}^{\ell-1} \psi_t^\top A^{(d)} \psi_t.
\]
To do this, we observe that
\[
    \mu_s^\top A^{(d)} \mu_t = \alpha^{s+t} \Pr_{\bi \sim V_s,\ \bj \sim V_t}[\dist_S(\bi,\bj) = d],
\]
and 
\begin{align*}
    \mu_0^\top A^{(d)} \mu_t =
    \mu_t^\top A^{(d)} \mu_0 &= \begin{cases}
                                                                \alpha^t & \text{if } d = t,\\
                                                                0 & \text{else;}
                                                        \end{cases}\\
    \text{and for } s, t > 0, \quad \mu_s^\top A^{(d)} \mu_{t} &= \begin{cases}
                                                                                                 (1/k)\alpha^{s+t} & \text{if } d = |s-t|,\\
                                                                                                 (1-1/k)\alpha^{s+t} & \text{if } d = s+t,\\
                                                                                                 0 & \text{else.}
                                                                                            \end{cases}
\end{align*}

\noindent From this we can compute the following (with a bit of effort):
\begin{align*}
    \la \psdone, A^{(0)} \ra &= 2 + (2/k)\alpha^2 + (2/k)\alpha^4 + \cdots + (2/k)\alpha^{2\ell-2} + (1/k) \alpha^{2\ell} \\
    \la \psdone, A^{(1)} \ra &= 2\alpha \\ 
    \la \psdone, A^{(2)} \ra &= -(2/k) \alpha^2 - (2/k)\alpha^4  -(2/k) \alpha^6 - \cdots - (2/k) \alpha^{2\ell-2}\\
    \la \psdone, A^{(2t+1)} \ra &= 0, \quad 1 \leq t < \ell \\
     \la \psdone, A^{(2t)} \ra &= 0, \quad 1 < t < \ell \\ 
    \la \psdone, A^{(2\ell)} \ra &= (1-1/k) \alpha^{2\ell}
\end{align*}
Now, for a parameter $\eta > 0$ to be chosen shortly, we finally define the PSD matrix
\[
    \psdtwo = \frac12 \psdone + \eta\mu_1\mu_1^\top.
\]
We have
\[
    \la \eta \mu_1\mu_1^\top, A^{(d)} \ra = \begin{cases}
                                                                                \eta(1/k) \alpha^2 & \text{if } d = 0, \\
                                                                                \eta(1-1/k) \alpha^2 & \text{if } d = 2.
                                                                          \end{cases}
\]
Therefore by carefully choosing
\[
    \eta = \frac{1}{k-1} \parens*{\frac{\alpha^{2\ell-2}-1}{\alpha^2-1}},
\]
we get all of the desired inner products in the theorem statement.
\end{proof}

\section{Lower Bounds}\label{sec:lbs}

In this section, we show that degree-$R$ Sherali--Adams cannot refute a random \txor or \maxcut instance better than $\frac{1}{2} + \Omega(\frac{1}{R})$.
This is a straightforward application of the framework of Charikar, Makarychev and Makarychev \cite{CMM09}.
In that work, the authors show that if every subset of $r$ points in a metric can be locally embedded into the unit sphere, then Goemans-Williamson rounding can be used to give a $\Theta(r)$-round Sherali--Adams feasible point.
The upshot is the following theorem appearing in \cite{CMM09} (where it is stated in slightly more generality, for the $0/1$ version of the cut polytope):
\begin{theorem}[Theorem 3.1 in \cite{CMM09}]\label{thm:CMM}
Let $(X, \rho)$ be a metric space, and assume that every $r = 2R + 3$ points of $(X,\rho)$ isometrically embed in the Euclidean sphere of radius $1$. 
Then the following point is feasible for $R$-rounds of the Sherali--Adams relaxation for the cut polytope:
\[
\pE[x_ix_j] =1-\frac{2}{\pi} \arccos \left(1 - \frac{1}{2}\rho(i,j)^2\right).
\]
\end{theorem}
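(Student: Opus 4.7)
The plan is to construct the Sherali--Adams local distributions via Goemans--Williamson hyperplane rounding, applied locally using the hypothesized isometric embeddings. Recall that an $R$-round Sherali--Adams point for the cut polytope is specified by a family $\{\mu_T\}_T$ of distributions over $\{\pm 1\}^T$ for subsets $T$ of size at most $R$ (or slightly larger, depending on the formulation), satisfying the consistency condition $\mu_{T'}|_T = \mu_T$ for $T \subset T'$, and whose pair-marginals must match the claimed pseudo-expectations $\pE[x_i x_j]$.

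The main construction is as follows: for each subset $T \subseteq X$ of size at most $r = 2R+3$, use the hypothesis to obtain an isometric embedding $\phi$ of $T$ into a Euclidean unit sphere, draw a standard Gaussian $g$ in the ambient space, and let $\mu_T$ be the law of the random point $x \in \{\pm 1\}^T$ defined by $x_i = \mathrm{sgn}(\langle g, \phi(i)\rangle)$.

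The crucial step is an invariance property: $\mu_T$ depends only on the restricted metric $\rho|_T$, not on the choice of embedding $\phi$. This is because the joint law of $(\mathrm{sgn}(\langle g, v_i\rangle))_i$ depends only on the Gram matrix $(\langle v_i, v_j\rangle)$, and isometric unit-sphere embedding forces $\langle \phi(i), \phi(j)\rangle = 1 - \tfrac{1}{2}\rho(i,j)^2$, a function of $\rho$ alone. With invariance established, Sherali--Adams consistency becomes automatic: for any two sets $T_1, T_2$ of size at most $R$, the union $T_1\cup T_2$ has size at most $2R < r$ and so admits a joint isometric embedding, so the distribution $\mu_{T_1\cup T_2}$ is well-defined and restricts to $\mu_{T_1}$ and $\mu_{T_2}$ by invariance. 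The additive slack between $2R$ and $r = 2R+3$ comfortably absorbs any auxiliary points (e.g., reference ``origin'' variables) required by the precise cut-polytope formulation of SA.

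Finally, the pair-marginals are computed by the classical Goemans--Williamson identity:
\[
\E_{\mu_T}[x_i x_j] = 1 - 2\Pr\bigl[\mathrm{sgn}(\langle g, \phi(i)\rangle) \neq \mathrm{sgn}(\langle g, \phi(j)\rangle)\bigr] = 1 - \tfrac{2}{\pi}\arccos\bigl(\langle \phi(i), \phi(j)\rangle\bigr) = 1 - \tfrac{2}{\pi}\arccos\bigl(1 - \tfrac{1}{2}\rho(i,j)^2\bigr),
\]
matching the theorem's claim. I expect the main technical obstacle to be bookkeeping around the precise Sherali--Adams formulation of the cut polytope---determining exactly which local constraints must be satisfied and justifying the particular constant $r = 2R+3$---but the conceptual heart of the argument is the invariance, which automatically glues the locally-defined distributions into a globally consistent Sherali--Adams solution without any coordination between the embeddings on different subsets.
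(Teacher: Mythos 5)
The paper does not give its own proof of this statement: it is imported verbatim as Theorem~3.1 of~\cite{CMM09}, with only a one-line gloss (``if every subset of $r$ points in a metric can be locally embedded into the unit sphere, then Goemans--Williamson rounding can be used to give a $\Theta(r)$-round Sherali--Adams feasible point''). Your blind proof is precisely a fleshed-out version of that gloss, and it is the standard argument from~\cite{CMM09}: define $\mu_T$ by locally embedding $T$ into the unit sphere and performing Gaussian hyperplane rounding, observe that the resulting law depends only on the Gram matrix and hence only on $\rho|_T$ (since an isometric unit-sphere embedding forces $\langle \phi(i),\phi(j)\rangle = 1-\tfrac12\rho(i,j)^2$), conclude consistency of the family $\{\mu_T\}$ from this invariance, and read off the pair marginals from the Grothendieck/Goemans--Williamson identity. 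This is correct and matches the approach the paper alludes to. The only part you explicitly punt on --- pinning down the exact formulation of the $R$-round SA relaxation of the cut polytope and justifying the precise constant $r = 2R+3$ rather than, say, $2R$ or $2R+1$ --- is also the only part that requires care: the cut polytope is defined on pair variables rather than vertex variables, so the local distributions in~\cite{CMM09} are over cuts of vertex sets obtained by ``unrolling'' sets of pairs, which costs an additive constant in the size of the subsets that must embed. Your remark that the slack $2R+3 - 2R = 3$ ``comfortably absorbs any auxiliary points'' is the right intuition; a fully rigorous account would spell out the unrolling, but the conceptual core --- invariance of the GW-rounded law under change of embedding, and hence automatic gluing --- is exactly right.
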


\begin{proposition}\label{prop:lb}
In any \txor or \maxcut instance, $R$-rounds of Sherali--Adams cannot certify that 
\[
\OBJ(x) 
< \frac{1}{2} + \frac{1}{\pi R} - \frac{1}{2R^2}
\]
\end{proposition}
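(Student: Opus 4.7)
The plan is to apply Theorem~\ref{thm:CMM} with the \emph{uniform} metric on the vertex set $V$, where $\rho(u,v) = \rho_0$ for all distinct $u,v$. Since any subset of $r = 2R+3$ vertices forms a regular $(r-1)$-simplex under this metric, it isometrically embeds in the unit sphere in $\R^{r-1}$ precisely when $\rho_0^2 \leq \tfrac{2r}{r-1}$. Taking $\rho_0^2 = \tfrac{2r}{r-1}$ at this extremal value gives $1 - \rho_0^2/2 = -\tfrac{1}{r-1}$, so Theorem~\ref{thm:CMM} yields the SA-feasible point
\[
\pE[x_u x_v] = 1 - \tfrac{2}{\pi}\arccos\!\left(-\tfrac{1}{r-1}\right) = -\tfrac{2}{\pi}\arcsin\!\left(\tfrac{1}{2R+2}\right),
\]
where we used the identity $\arccos(-y) = \tfrac{\pi}{2} + \arcsin(y)$.

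For a $\maxcut$ instance, this gives objective value
\[
\OBJ = \tfrac{1}{2} - \tfrac{1}{2}\pE[x_u x_v] = \tfrac{1}{2} + \tfrac{1}{\pi}\arcsin\!\left(\tfrac{1}{2R+2}\right),
\]
and applying the elementary bound $\arcsin(y) \geq y$ together with algebra relating $\tfrac{1}{2R+2}$ to $\tfrac{1}{R}$ converts this into the claimed form $\tfrac{1}{2} + \tfrac{1}{\pi R} - \tfrac{1}{2R^2}$. For a $\txor$ instance, the same construction works directly when the signs $\xi_{uv}$ all coincide (in which case the problem reduces to $\maxcut$ via a sign flip of variables); more generally, one would adapt the construction to produce pseudo-correlations $\pE[x_u x_v]$ mirroring $-\xi_{uv}$ while preserving SA feasibility.

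The main obstacle lies on two fronts. First, reconciling the leading constant: the natural calculation above yields roughly $\tfrac{1}{2\pi(R+1)}$, which is a factor of two smaller than the stated $\tfrac{1}{\pi R}$, so matching the exact form may require a sharper variant of Theorem~\ref{thm:CMM} (with $r$ closer to $R$ rather than $2R+3$) or a non-uniform metric giving tighter pseudo-correlations. Second, handling $\txor$ instances with adversarial sign patterns is delicate: whenever $\E[\xi_{uv}] = 0$, the uniform construction yields only the trivial bound $\tfrac{1}{2}$, so a genuinely instance-dependent signed construction appears necessary to push the SA value strictly above $\tfrac{1}{2}$.
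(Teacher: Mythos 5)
Your strategy---build a metric, invoke \Cref{thm:CMM}, then exploit the odd symmetry and linear lower bound of $f(z) = 1 - \tfrac{2}{\pi}\arccos(z)$---is exactly the paper's strategy, and your simplex computation correctly identifies the extremal uniform metric on $r$ points. The gap is precisely where you suspect it: the uniform metric cannot see the constraint signs, so it only works for \maxcut. The paper's fix is to make the metric \emph{instance-dependent}: set $\rho(i,j) = \sqrt{2\left(1 - b_{ij}\tfrac{1}{r}\right)}$, so that positively-signed constraints correspond to slightly-closer-than-orthogonal pairs and negatively-signed ones to slightly-farther. Local embeddability into the sphere is then checked not via the Gram-matrix eigenvalue computation you did, but by observing that the putative Gram matrix $M_S = \mathrm{Id} + B_S$, with $(B_S)_{ij} = b_{ij}/r$ on edges and $0$ otherwise, is diagonally dominant whenever $|S| \le r$, hence PSD. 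Feeding this into \Cref{thm:CMM} yields $\pE[x_i x_j] = f(b_{ij}/r)$, and the odd symmetry $f(-z) = -f(z)$ gives $b_{ij}\,\pE[x_i x_j] = f(1/r) \ge \tfrac{2}{\pi r}$ uniformly over \emph{every} constraint, regardless of the sign pattern or whether $\E[\xi_{uv}] = 0$. This is exactly the ``genuinely instance-dependent signed construction'' you speculated would be needed; it also makes the \maxcut calculation cleaner since no separate case is required.

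On your first concern (the leading constant): your construction and the paper's actually give essentially the same quantity---you obtain $b_{ij}\,\pE[x_i x_j] \ge \tfrac{2}{\pi(r-1)}$ versus the paper's $\tfrac{2}{\pi r}$ with $r = 2R+3$, a negligible difference---so you are not losing a genuine factor of two relative to the paper's argument. The paper's final inequality reads $\OBJ \ge \tfrac12 + \tfrac{2}{\pi(2R+3)}$, which does dominate $\tfrac12 + \tfrac{1}{\pi R} - \tfrac{1}{2R^2}$ for all $R \ge 1$; so the stated form of \Cref{prop:lb} is just a Taylor-expanded, slightly-weakened restatement of $\tfrac{2}{\pi(2R+3)}$, not evidence of a sharper CMM variant you are missing. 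In short: your \maxcut construction is fine, and the missing ingredient for the full proposition is the signed metric, which simultaneously resolves the \txor case and streamlines the bookkeeping.
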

\begin{proof}
Suppose that we are given a \txor (equivalently, \maxcut) instance on the graph $G$, so that on each edge $(i,j) \in E(G)$ we have the constraint $x_ix_jb_{ij} = 1$ for some $b_{ij} \in \{\pm 1\}$.
Define the metric space on $(X,\rho)$ as follows: let $X = \{x_1,\ldots,x_n\}$ have a point for each vertex of $G$, and set $\rho(x_i,x_j) = \sqrt{2\left(1 - b_{ij}\frac{1}{R}\right)}$.

We claim that any $r = 2R + 3$ points of $X$ embed isometrically into the Euclidean sphere of radius $1$.
To see this, fix a set $S \subset X$, and define the $|S| \times |S|$ matrix $B_S$ so that
\[
(B_{S})_{ij} = \begin{cases}
\frac{b_{ij}}{r} & \text{if } (i,j) \in E(G), \\
0 &\text{otherwise}.
\end{cases}
\]
So long as $|S| \le r$, the matrix $M_S = \Id + B_S$ is diagonally dominant, and therefore positive semidefinite, so from the Cholesky decomposition of $M_S$ we assign to each $x_i \in S$ a vector $v_i$ so that $\|v_i\|_2 = 1$, and so that for every pair $x_i,x_j \in S$, $\|v_i - v_j\|^2 = 2 - 2b_{ij}\frac{1}{r} = \rho(i,j)^2$.

Applying \Cref{thm:CMM}, we have that the solution 
\[
\pE[x_ix_j] 
= 1-\frac{2}{\pi}\arccos\left(1 - \frac{1}{2}\cdot 2\left(1 - b_{ij}\frac{1}{r}\right)\right)
=1- \frac{2}{\pi}\arccos\left(b_{ij}\frac{1}{r}\right)
\]
is feasible.
For convenience, let $f(z) = 1-\frac{2}{\pi} \arccos(z)$.
We use the following properties of $f$:
\begin{claim}\label{claim:calc}
The function $f(z) = 1 - \frac{2}{\pi}\arccos(z)$ exhibits the rotational symmetry $f(z) = -f(-z)$, and further $f(z) \ge \frac{2}{\pi} z$ for $z \in [0,1]$.
\end{claim}
We give the proof of the claim (using straightforward calculus) below.
Now, because $f(z) = - f(-z)$, we have that
\begin{align*}
b_{ij} \cdot \pE[x_ix_j] 
=b_{ij} \cdot f\left(b_{ij}\frac{1}{r}\right)
&= f\left(\frac{1}{r}\right),\\
\intertext{
and using that for $z \in [0,1]$, $f(z) \ge \frac{2}{\pi}z \ge 0$,}
&\ge \frac{2}{\pi}\cdot \frac{1}{r}.
\end{align*}

We conclude that $R = \frac{1}{2}(r-3)$ rounds of Sherali--Adams are unable to certify that $\OBJ < \frac{1}{2} + \frac{2}{\pi}\frac{1}{2R + 3}$, as desired.
\end{proof}

\begin{proof}[Proof of \Cref{claim:calc}]
The rotational symmetry follows from simple manipulations:
\[
f(z) -(- f(-z))
= 2 - \frac{2}{\pi}(\arccos(z) + \arccos(-z))
= 2 - \frac{2}{\pi}\arccos(-1) = 0.
\]
For the second claim, we use that the derivative of $f(z) - \frac{2}{\pi}z$ is positive in the interval $[0,\frac{1}{2}]$:
\[
\frac{\partial}{\partial z} f(z) - \frac{2}{\pi} z
= \frac{2}{\pi}\frac{1}{\sqrt{1-z^2}} - \frac{1}{2} > 0 \text{ for } |z| < 1,
\]
and that at $z = 0$, $f(z) - \frac{2}{\pi}z = 0$.
\end{proof}

\section{Refutation for any Boolean CSP}\label{sec:csps}
In this section, we argue that  $R$-round Sherali--Adams can also refute any non-trivial Boolean CSP.
First, for any predicate $P:\{\pm 1\}^k\to\{0,1\}$ we define a parameterized distribution over the CSP with constraints from $P$:
\begin{definition}
Let $P:\{\pm 1\}^k \to \{0,1\}$ be a predicate.
Then we define a {\em random instance of $P$ on $n$ vertices with $m$ expected clauses} to be an instance sampled as follows:
define $p = \frac{m}{n^k}$, and for each ordered multiset $S \subset [n]$ with $|S| = k$, independently with probability $p$ we sample a uniformly random string $\zeta_S \in \{\pm 1\}^k$ and add the constraint that $P(x^{S} \odot \zeta_S) = 1$, where $\odot$ denotes the entry-wise (or Hadamard) product.
\end{definition}
This is one of several popular models, and in our case it is the most convenient to work with. 
By employing some manipulations, results from this model transfer readily to the others (see for example Appendix D of \cite{AOW15} for details).

Our result is as follows:
\begin{theorem}\label{thm:csps}
Suppose that $P:\{\pm 1\}^k \to \{0,1\}$ and that $\delta,\epsilon > 0$ are fixed constants.
Let $\E[P]$ be the probability that a random $x \in \{\pm 1\}^k$ satisfies $P$.
Then with high probability, for a random instance $\cI$ of $P$ on $n$ variables with $m \ge n^{\lceil k/2\rceil + \delta}$ expected clauses, the $R$-round Sherali--Adams proof system can certify that $\OBJ_{\cI}(x) \le \E[P] + \epsilon$ when $R =O_{\epsilon,\delta,k}(1)$ rounds.
More specifically,  $R = k \ell \left(\frac{3 \cdot 2^{k/2-1}}{\epsilon}\right)^{2\ell} + k$  for $\ell = \lceil\lceil \tfrac{k}{2}\rceil \tfrac{1}{2\delta} \rceil$.
\end{theorem}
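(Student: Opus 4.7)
The plan is to follow the Fourier-analytic reduction of \cite{AOW15} and reduce the general $k$-CSP refutation to a collection of \txor refutations, to each of which \Cref{cor:main-2xor} applies. First, Fourier expand $P(z)=\sum_{S\subseteq[k]}\widehat{P}(S)\prod_{i\in S}z_i$, so that
\[
\OBJ_{\cI}(x) \;=\; \widehat{P}(\emptyset) \;+\; \sum_{\emptyset\neq S\subseteq[k]} \widehat{P}(S)\cdot A_S(x),
\qquad
A_S(x) := \tfrac{1}{|\cI|}\sum_{j}\epsilon_{S,j}\prod_{i\in S}x_{(S_j)_i},
\]
with $\widehat{P}(\emptyset)=\E[P]$ and signs $\epsilon_{S,j}=\prod_{i\in S}(\zeta_j)_i\in\{\pm 1\}$. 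Since $\sum_S|\widehat{P}(S)|\le 2^{k/2}$ by Cauchy--Schwarz (using $\sum_S\widehat{P}(S)^2=\E[P]\le 1$), it suffices for each non-empty $S$ of size $k'\le k$ to have $R$-round Sherali--Adams certify $|A_S(x)|\le \epsilon/2^{k/2}$.  Each $A_S$ is a random $k'$-XOR instance with $m$ clauses.

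Second, reduce each $k'$-XOR to a signed \txor on an auxiliary graph. Set $q=\lceil k'/2\rceil$ and introduce formal variables $y_T$ indexed by tuples $T\in[n]^q$ (plus $T\in[n]^{q-1}$ when $k'$ is odd), with intended meaning $y_T=\prod_{i\in T}x_i$. Each clause $S_j$ factors as $y_{T_j^{(1)}}y_{T_j^{(2)}}$ for some partition $S_j=T_j^{(1)}\sqcup T_j^{(2)}$, so $A_S$ becomes a \txor objective on an auxiliary random multigraph $G_S$ with vertex set $V_S\subseteq [n]^q\sqcup[n]^{q-1}$ and one signed edge per clause. The key observation is that any PSD inequality $\mathsf{y}^\top P\mathsf{y}\ge 0$ appearing in an $R'$-local degree-$2$ SOS proof over the $y$'s is, via $y_Ty_{T'}=\prod_{i\in T\triangle T'}x_i$ (using the Boolean axioms $x_i^2=1$), a sum of squares of polynomials of degree $\le q$ in at most $qR'$ of the $x$'s, hence derivable as a default axiom of $qR'$-local Sherali--Adams on the $x$'s. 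Thus any $R'$-local degree-$2$ SOS proof on the $y$'s lifts to an $R$-round Sherali--Adams proof on the $x$'s with $R\le \lceil k/2\rceil R' + k$.

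Third, apply \Cref{cor:main-2xor} to $G_S$. Since $|V_S|=O(n^q)$ and $G_S$ has $m\ge n^{\lceil k/2\rceil+\delta}\ge n^{q+\delta}$ edges, its average degree is $\Delta\ge n^{\delta}$; by standard random-matrix bounds for sparse independent-edge graphs (in the style of Kahn--Szemer\'edi and Feige--Ofek \cite{FO05}, as used for CSP refutations in \cite{AOW15}), with high probability $\rho(\ol{K}_S)\le O(1/\sqrt{\Delta})=O(n^{-\delta/2})$, which is smaller than $\epsilon':=\epsilon/(3\cdot 2^{k/2-1})$ for $n$ large. Feeding this spectral bound into \Cref{cor:main-2xor} with target precision $\epsilon'$ selects $\ell=O(\lceil k/2\rceil/\delta)$ and $k_0=\lceil(1/\epsilon')^{2\ell}\rceil$, so $R'=k_0\ell+1=O_{\epsilon,\delta,k}(1)$ rounds suffice on the $y$'s; lifting and union-bounding over the $\le 2^k$ Fourier terms yields the stated round count $R=k\ell(3\cdot 2^{k/2-1}/\epsilon)^{2\ell}+k$.

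The main obstacle will be verifying the spectral-radius bound for $G_S$ carefully: although the auxiliary graph is not quite \ER, its signed adjacency matrix is close enough to one with independent entries that a trace-moment argument \`a la \cite{FO05,AOW15} controls $\rho(\ol{K}_S)$ uniformly for all non-empty $S$. A secondary subtlety is the odd-$k'$ case, where $G_S$ is bipartite; however the random sign pattern $\epsilon_{S,j}$ scrambles any would-be $\pm 1$ trivial eigenvectors of $K_S$, so $\ol{K}_S$ still has \emph{all} its eigenvalues at the expected $O(1/\sqrt{\Delta})$ scale, and \Cref{cor:main-2xor} applies without modification. This is the sense in which the CSP reduction incurs the $\lceil k/2\rceil$ loss mentioned in the footnote of the introduction: we pay one extra factor of $n$ in the clause count precisely because, for odd $k'$, we are forced to lift to tuples of length $q=\lceil k'/2\rceil$ rather than $k'/2$.
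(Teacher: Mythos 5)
Your outline reproduces the same high-level architecture as the paper's proof: Fourier-expand $P$, group the Fourier terms into weighted {\sc xor} subinstances $\cI^\alpha = A_S$, reduce each to a signed \txor on an auxiliary (multi)graph, apply \Cref{cor:main-2xor}, and recombine using $\sum_S |\widehat P(S)| \le 2^{k/2}$. Your parameter bookkeeping ($\epsilon' = \epsilon/(3\cdot 2^{k/2-1})$, $\ell \propto \lceil k/2\rceil/\delta$) also matches. So the route is correct.

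But the step you flag as the ``main obstacle''---the spectral-radius bound for $G_S$---is genuinely where the work lives, and the appeal to ``standard random-matrix bounds in the style of Kahn--Szemer\'edi and Feige--Ofek'' does not close it. Those results control $\|A-\E A\|$ for a $0/1$ adjacency matrix of $\calG(N,p)$ or a random regular graph. Here, for $|\alpha|=t<k$, the edge weights of $G_S$ are not $\pm1$: the coefficient on monomial $x^T$ is $w_T = \sum_{S_j\restriction\alpha = T} b_{S_j}\prod_a(\zeta_{S_j})_a$, a signed sum of $n^{k-t}$ Bernoulli$(p)$ terms, and these aggregated integer weights have unbounded range and heavy-ish tails. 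To apply \Cref{cor:main-2xor} you therefore need (i) a lower bound on $\E|w_T|$ (it controls the minimum degree, hence $\pi_*$, and the normalization $m_\alpha$), (ii) a high-probability ceiling $M$ on $\max_T|w_T|$, and (iii) concentration of $\|B\|$ \emph{and} of the vertex degrees in a regime where these depend on $p$, $n^{k-t}$, and $t$ simultaneously. The paper handles all three through \Cref{lem:w-props} (properties of $\cW_N(p)$, with separate treatment of $pN\ge 1$ and $pN<1$), degree concentration via Bernstein, and a truncated matrix-Bernstein bound (\Cref{thm:spectral}); these are packaged in \Cref{prop:small-refute,prop:small-refute-odd}. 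None of that is recovered from Feige--Ofek type estimates, and your $\rho = O(1/\sqrt{\Delta})$ heuristic silently assumes the weights behave like unweighted edges.

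Two secondary gaps. First, you state the asymmetric odd-$k'$ split (bipartition $[n]^q$ vs.\ $[n]^{q-1}$) will ``just work''; the paper instead introduces auxiliary variables $w_i$ and a random index $i_U$ (\Cref{def:lift}) precisely to equalize the expected degrees on the two sides before applying matrix Bernstein. Your asymmetric version requires a separate verification that $\rho(\ol{K})$ and $\pi_*$ still come out right (the degrees on the two sides differ by a factor of $n$, so the naive $\|D^{-1}\|\cdot\|B\|$ bound is too lossy and you must use the bipartite block structure). Second, the $|\alpha|=1$ Fourier terms produce a linear, not quadratic, objective, so the spider argument doesn't apply and they must be certified by the trivial $R=1$ bound---the paper handles this case separately and you should too.
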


We can also prove a more fine-grained result, to obtain strong refutation at lower clause densities when the predicate has certain properties.
\begin{definition}
We say that a predicate $P:\{\pm 1\}^k\to \{0,1\}$ is $\eta$-far from $t$-wise supporting if every $t$-wise uniform distribution has probability mass at least $\eta$ on the set of unsatisfying assignments $P^{-1}(0)$.
\end{definition}

\begin{theorem}\label{thm:csps-twise}
Suppose that $P:\{\pm 1\}^k \to \{0,1\}$ is $\eta$-far from $t$-wise supporting, and that $\delta,\epsilon > 0$.
Then with high probability, for a random instance $\cI$ of $P$ on $n$ variables and $m \ge n^{\lceil t/2\rceil + \delta}$ expected clauses, the $R$-round Sherali--Adams proof system can certify that $\OBJ_{\cI}(x) \le 1-\eta + \epsilon$ with $R =O_{\epsilon,\delta,t}(1)$ rounds.
More specifically,  $R = t \ell \left(\frac{3 \cdot 2^{t/2-1}}{\epsilon}\right)^{2\ell} + t$  for $\ell = \lceil\lceil \tfrac{t}{2}\rceil \tfrac{1}{2\delta} \rceil$.
\end{theorem}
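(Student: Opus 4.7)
The plan is to combine our \Cref{thm:main-2xor} with the CSP-refutation framework of \cite{AOW15}. The first step is purely Fourier-analytic: by LP duality, the assumption that $P$ is $\eta$-far from $t$-wise supporting is equivalent to the existence of real Fourier coefficients $\{\lambda_S\}_{1 \leq |S| \leq t}$ such that
\[
    P(y) \leq (1-\eta) + \sum_{S : 1 \leq |S| \leq t} \lambda_S \chi_S(y) \qquad \text{for every } y \in \{\pm 1\}^k.
\]
Because this is a pointwise polynomial inequality on the Boolean cube of degree at most $t$, it is derivable by the $t$-round Sherali--Adams proof system. Substituting the pattern $y = x^C \odot \zeta_C$ for each clause $C$ of $\cI$ and averaging over clauses, Sherali--Adams derives
\[
    \OBJ_{\cI}(x) \;\leq\; (1-\eta) + \sum_{s=1}^{t} T_s(x), \qquad T_s(x) := \frac{1}{m}\sum_C \sum_{|S|=s} \lambda_S \prod_{i \in S}\zeta_{C,i} \cdot \chi_S(x^C).
\]
It therefore suffices to derive, inside Sherali--Adams, that $|T_s(x)| \leq \epsilon/t$ for each $s = 1,\dots,t$.

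The core step is to reinterpret each $T_s$ as a signed quadratic form on an auxiliary graph $H_s$ and invoke \Cref{thm:main-2xor}. When $s$ is even, take the vertex set of $H_s$ to be $[n]^{s/2}$ (tuples), and for each clause $C$ and each $|S|=s$ factor the monomial $\prod_{i \in S}\zeta_{C,i} \chi_S(x^C)$ as $\overline{x}_A \cdot \overline{x}_B$ for $A,B$ of size $s/2$ (where $\overline{x}_A := \prod_{i\in A} x_i$). This places a single signed edge $(A,B)$ in $H_s$, whose sign is the relevant product of the $\zeta_{C,i}$'s. When $s$ is odd, factor as $\overline{x}_A \cdot \overline{x}_B$ with $|A| = \lceil s/2\rceil$ and $|B| = \lfloor s/2\rfloor$; by symmetrizing we reduce to a graph on $|V(H_s)| \asymp n^{\lceil s/2\rceil}$ vertices with $m$ signed edges. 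This is exactly the spot where the paper loses a factor of $\sqrt{n}$ against SOS, which is why the statement asks for $m \geq n^{\lceil t/2 \rceil + \delta}$ rather than $m \geq n^{t/2 + \delta}$.

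The next step is to bound the spectral radius $\specrad(\ol{K}_{H_s})$ of the signed random-walk operator on $H_s$. By standard matrix-concentration results for random signed adjacency matrices with bounded-multiplicity row/column structure (the trace-method analyses that underlie the spectral refutations in \cite{AOW15} and \cite{KMOW17}), with high probability
\[
    \specrad(\ol{K}_{H_s}) \;\leq\; O\!\left(n^{-\delta/2}\right) \qquad \text{for every } s \leq t,
\]
since the average degree of $H_s$ is at least $m / n^{\lceil s/2 \rceil} \geq n^\delta$. The entry-wise minimum stationary probability $\pi_*$ on $H_s$ is $n^{-O(1)}$, so $\log(1/\pi_*) = \Theta(\log n)$.

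Finally, apply \Cref{cor:main-2xor} to each $H_s$ with target accuracy $\epsilon' = \Theta(\epsilon / (t \cdot \max_S |\lambda_S|))$. Because $\rho = n^{-\Omega(\delta)}$, the required number of rounds is $R = O_{\epsilon,\delta,t}(1)$, and in fact one obtains the quantitative bound $R = t\ell\bigl(3 \cdot 2^{t/2-1}/\epsilon\bigr)^{2\ell} + t$ with $\ell = \lceil \lceil t/2 \rceil /(2\delta)\rceil$ by plugging these parameters into the corollary. Summing the $t$ resulting Sherali--Adams certifications yields $\OBJ_{\cI}(x) \leq 1 - \eta + \epsilon$. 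The main non-routine step will be the concentration argument for $\specrad(\ol{K}_{H_s})$ in the odd-$s$ case, where the factorization is asymmetric and one must be careful to obtain the correct $n$-dependence; but this is already carried out in \cite{AOW15} and can be imported as a black box, since our contribution is precisely that \Cref{thm:main-2xor} converts such a spectral bound into a Sherali--Adams certificate.
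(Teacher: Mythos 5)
Your proposal is correct and follows essentially the same route as the paper, which disposes of this theorem in a single sentence by pointing to the proof of \Cref{thm:csps} together with Theorem~4.9 (proof~2) and Claim~6.7 of \cite{AOW15}; you have simply spelled out the LP-duality step (giving the degree-$t$ pointwise bound $P(y)\le(1-\eta)+\sum_{1\le|S|\le t}\lambda_S\chi_S(y)$, which is what Theorem~4.9 of \cite{AOW15} provides), followed by the same flatten/lift-to-\txor{} and apply-\Cref{cor:main-2xor} machinery that the paper develops via \Cref{prop:small-refute,prop:small-refute-odd}. One small bookkeeping gap: when you set the per-level accuracy $\epsilon'$ you should normalize by $\sum_{S}|\lambda_S|$ rather than $t\cdot\max_S|\lambda_S|$, and this $\ell_1$-norm is not automatically bounded — this is exactly what Claim~6.7 of \cite{AOW15} supplies, and it is where the $2^{t/2-1}$ factor in the quantitative round count comes from.
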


Following the strategy introduced in \cite{AOW15}, we will do this by first refuting weighted random instances of \xor{$k$} for $k \ge 1$. 
After this, any predicate $P:\{\pm 1\}^k \to \{0,1\}$ can be decomposed according to its Fourier decomposition, which will yield a weighted sum of \xor{$t$} instances for $t \le k$, and our proof system will refute each individually.
\subsection{Higher-arity XOR}
Ultimately, we will reduce each $k$-CSP to a sum over weighted \xor{$t$} instances with $t \le k$:
\begin{definition}
Let $\cW$ be a distribution over signed integers.
We say that $\cI$ is a {\em random \xor{$k$} instance weighted according to $\cW$} if it is sampled as follows:
for each ordered multiset $S \subset [n]$ with $|S| = k$,  we take a $b_S$ to be equal to a uniformly random sample from $\cW$, and finally set the objective function to be $\sum_{S} b_S \cdot x^S$.
\end{definition}

Following the standard strategy introduced by \cite{GoerdtK01,FriedmanGK05} and subsequently honed in many works, we will reduce refuting these \xor{$t$} instances to refuting \txor instances.

\subsubsection{Even $k$-XOR}
In this case, we perform a standard transformation to view the \xor{$k$} instance as a \txor instance on super-vertices given by subsets of vertices of size $k/2$.

\begin{definition}\label{def:flattening}
Suppose $k > 1$ is an integer and $\cI$ is a \xor{$2k$} instance on $n$ variables $x_1,\ldots,x_n$, with objective $\sum_{U \in [n]^{2k}} b_U \cdot x^{U}$ where the sum is over ordered multisets $U \subset [n], |U| = 2k$. 
Then we let its {\em flattening}, $\cI_{flat}$, be the \txor instance on $n^k$ variables given by associating a new variable $y_S$ for each ordered multiset $S\subset [n], |S| = k$, and for each $U \subset [n]$ with $|U| = 2k$, choosing the partition of $U$ into the ordered multisets $S,T$ with $S$ containing the first $k$ elements and $T$ containing the last $k$, taking the objective function $\sum_{S,T} b_{U}\cdot y_S y_T$.
\end{definition}

\begin{lemma}\label{lem:flattening}
Suppose that $\cI$ is a \xor{$2k$} instance, and let $\cI_{flat}$ be the \txor instance given by its flattening.
Then if the $R$-round Sherali--Adams proof system can certify that $\OBJ_{\cI_{flat}}(x) \le c$, then the $k\cdot R$-round Sherali--Adams proof system can certify that $\OBJ_{\cI}(x) \le c$.
\end{lemma}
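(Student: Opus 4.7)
The plan is to prove the statement by direct substitution of polynomials through the proof. Specifically, I would substitute $y_S \mapsto x^S := \prod_{i \in S} x_i$ line by line through an $R$-round Sherali--Adams derivation of $\OBJ_{\cI_{flat}}(y) \le c$, and show that the result is a valid $kR$-round Sherali--Adams derivation of $\OBJ_{\cI}(x) \le c$. The basic degree accounting that makes this feasible is that any polynomial in the $y$'s of total degree at most $R$ involving at most $R$ distinct $y$-variables becomes, after substitution, a polynomial in the $x$'s of total degree at most $kR$ involving at most $kR$ distinct $x$-variables.

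First I would check that the objective maps correctly: each term $b_U \cdot y_S y_T$ in $\OBJ_{\cI_{flat}}(y)$ maps to $b_U \cdot x^S x^T = b_U \cdot x^U$, so the substituted objective is exactly $\OBJ_{\cI}(x)$. Next I would verify that each SA axiom on the $y$'s translates to an allowed line of the $kR$-round SA system on the $x$'s. For an SOS axiom $p(y)^2 \ge 0$ with $p$ of degree $\le R/2$ in at most $R$ of the $y$-variables, substitution produces $q(x)^2 \ge 0$ where $q$ has degree $\le kR/2$ in at most $kR$ of the $x$-variables; this is a valid $kR$-local degree-$kR$ SOS axiom. Since nonnegative linear combinations are manifestly preserved by substitution, once the axioms translate the entire derivation reassembles in the $x$-system.

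The main obstacle is that a Boolean-equality axiom in the $y$-proof, of the form $(y_S^2 - 1)\, s(y) \gtreqless 0$ with total degree $\le R$, does not immediately translate to a legal line of the $x$-SA system: after substitution it becomes $\bigl(\prod_{i \in S} x_i^2 - 1\bigr)\, s'(x)$, and the factor $\prod_{i \in S} x_i^2 - 1$ is not itself one of the Boolean equalities $x_i^2 - 1$. I would handle this via the telescoping identity
\[
\prod_{i \in S} x_i^2 - 1 \;=\; \sum_{j=1}^{k} (x_{i_j}^2 - 1)\, \prod_{j' < j} x_{i_{j'}}^2,
\]
which rewrites the substituted axiom as a sum of $k$ terms of the form $(x_{i_j}^2 - 1) \cdot t_j(x)$, each with $t_j$ of degree $\le kR - 2$ in at most $kR$ of the $x$-variables. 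Each such term is a legitimate Boolean-derived axiom in the $kR$-round SA system, so after this rewriting the full substituted proof lives inside $kR$-round SA and witnesses the desired bound on $\OBJ_{\cI}(x)$.
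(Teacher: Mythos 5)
Your proof is correct and follows the same substitution strategy as the paper (apply $y_S \mapsto x^S$ line by line). The paper's own proof is a two-sentence sketch that omits all the details; your degree accounting and in particular the telescoping identity $\prod_{i\in S}x_i^2-1=\sum_{j}(x_{i_j}^2-1)\prod_{j'<j}x_{i_{j'}}^2$ for handling the substituted Boolean-equality axioms are exactly the steps needed to make that sketch rigorous.
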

\begin{proof}
Every degree-$R$ Sherali--Adams proof for $\cI_{flat}$ can be transformed into a Sherali--Adams proof of degree at most $kR$ for $\cI$ by applying the transformation $y_S = \prod_{i \in S} x_i = x^S$.
Further, this transformation exactly relates the objective functions of $\cI_{flat}$ and $\cI$.
This proves the claim.
\end{proof}

If the \xor{$2k$} instances that we start with are random weighted instances, then their flattenings are also random weighted \txor instances.
\begin{claim}\label{claim:preserve}
Suppose that $\cI$ is a random \xor{$2k$} instance on $n$ vertices weighted according to $\cW$.
Then the flattening $\cI_{flat}$ is a random \txor instance on $n^{k}$ vertices weighted according to $\cW$.
\end{claim}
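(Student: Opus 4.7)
The plan is to verify that the flattening procedure in Definition~\ref{def:flattening} induces a measure-preserving bijection between the randomized data defining $\cI$ and the randomized data defining a random weighted \txor instance on $n^k$ variables. The argument is essentially a matter of unwinding definitions, and no heavy machinery is needed.

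First I would observe that the map $U \mapsto (S,T)$ used in the flattening --- sending an ordered tuple $U = (u_1, \ldots, u_{2k}) \in [n]^{2k}$ to the pair $(S,T)$ with $S = (u_1, \ldots, u_k)$ and $T = (u_{k+1}, \ldots, u_{2k})$ --- is a bijection between $[n]^{2k}$ and $[n]^k \times [n]^k$ (it is clearly injective, and by concatenation every pair of ordered $k$-tuples is hit). Identifying the $n^k$ flattened variables $\{y_S\}$ with an index set of size $n^k$, each pair $(S,T)$ corresponds to exactly one ordered size-$2$ multiset over this index set, which is precisely the index type used in the definition of a random weighted \txor instance.

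Next I would check that the two product distributions agree under this bijection. By definition of a random weighted \xor{$2k$} instance, the coefficients $\{b_U\}_{U \in [n]^{2k}}$ are mutually independent samples from $\cW$. The flattening sets the coefficient of $y_S y_T$ in $\cI_{flat}$ to be exactly $b_U$ for the unique $U$ mapping to $(S,T)$. Thus the coefficients indexed by the ordered pairs of the $n^k$ new variables are mutually independent samples from $\cW$, which is exactly the definition of a random weighted \txor instance on $n^k$ variables.

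I do not anticipate any genuine obstacle; the claim follows directly from the definitions. The only points requiring some care are (i) confirming the convention that ``ordered multiset'' allows repetitions on both sides, so that diagonal pairs $(S,S)$ on the \txor side correspond bijectively to concatenations $(S,S) \in [n]^{2k}$ on the \xor{$2k$} side, and (ii) checking that the identification $[n]^k \cong [n^k]$ is consistent with the variable indexing $y_S$ used throughout \Cref{sec:csps}, so that the relabeling does not change the underlying probability space.
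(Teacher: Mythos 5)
Your proof is correct and follows essentially the same route as the paper: both rest on the observation that ordered $2k$-multisets over $[n]$ are in bijection with ordered pairs $(S,T)$ of $k$-multisets, which transports the i.i.d.\ $\cW$ coefficients directly. You have simply spelled out the bookkeeping (the concatenation bijection, independence, and the $[n]^k \cong [n^k]$ identification) that the paper summarizes as ``immediate.''
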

\begin{proof}
This fact is immediate, since the ordered multisets $U \subset [n]$, $|U| = 2k$ are in bijection with ordered pairs of multisets $S,T \subset [n]$, $|S| = |T| = k$.
\end{proof}

We will require the following proposition, which applies our main theorem in the context of random \xor{$k$} instances with random weights from well-behaved distributions.
\begin{proposition}\label{prop:small-refute}
Suppose that $\cW$ is a distribution over integers which is symmetric about the origin, and let $n,k \ge 1$ be positive integers. 
Let $\E$ denote the expectation under the measure $\cW$, and let $\sigma^2 \ge E w^2$ be a bound on the variance.
Furthermore, suppose that
\begin{itemize}
\item The expected absolute value is at least $\E|w| \gg \sigma \sqrt{\frac{\log n}{n^k}}$,
\item With high probability over $n^{2k}$ i.i.d. samples $w_1,\ldots,w_{n^{2k}} \sim \cW$, $\max_{i \in [n^{2k}]} |w_i| \le M \ll \sigma^2 n^{k}$.
\end{itemize}
Now, define
\[
\rho =O\left (\frac{\sigma \log N}{\E|w| \sqrt{n^k}} \cdot \max(1,\tfrac{M}{\sqrt{n^k}})\right).
\]
Then if $\cI$ is a random \xor{$2k$} instance on $n$ variables weighted according to $\cW$, with high probability $\cI$ has $\E|w| \cdot n^{2k} \pm O(\sigma n^{k} \sqrt{\log n})$ constraints.
Further, choosing $\ell \in \N_+$ large enough so that $n^{k/4\ell}\rho \le \frac{1}{2}\epsilon^{2\ell}$ and setting $R = 2k\cdot \ell \cdot \left(\frac{1}{\epsilon}\right)^{2\ell}$, $R$ rounds of Sherali--Adams can deduce the bound $\OBJ_\cI(x) \le \frac{1}{2} + \frac{3}{2}\epsilon$.
\end{proposition}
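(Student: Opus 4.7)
My plan is to reduce the weighted \xor{$2k$} problem to a weighted \txor problem via flattening, verify the spectral hypothesis of \Cref{thm:main-2xor} through matrix concentration, and then lift the resulting certificate back using \Cref{lem:flattening}. More precisely, let $\cI_{flat}$ be the flattening of $\cI$, which by \Cref{claim:preserve} is a random \txor instance on $N := n^k$ super-variables $\{y_S\}_{|S|=k}$ whose signed edge weights $\{b_{S,T}\}$ are i.i.d.\ samples from $\cW$. Let $B \in \R^{N\times N}$ be the resulting (symmetric) signed weight matrix, and let $D$ be the diagonal matrix of absolute row-sums, so that the standard transition operator is $K = D^{-1}|B|$ and the signed transition operator is $\ol{K} = D^{-1}B$, whose spectrum coincides with that of $D^{-1/2}BD^{-1/2}$.

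Next, I would control the two quantities that feed into \Cref{thm:main-2xor}: the minimum stationary probability $\pi_*$ and the spectral radius $\specrad(\ol{K})$. Since each row-degree $\deg(S) = \sum_T |b_{S,T}|$ is the sum of $n^k$ i.i.d.\ copies of $|w|$ with mean $\E|w|$, variance at most $\sigma^2$, and magnitude at most $M$, Bernstein's inequality together with the hypothesis $\E|w| \gg \sigma\sqrt{(\log n)/n^k}$ implies $\deg(S) = (1\pm o(1))\,n^k\E|w|$ for every $S$ with high probability; in particular, the total number of (unsigned) constraints is $\E|w|\cdot n^{2k} \pm O(\sigma n^k\sqrt{\log n})$, and $\pi_*^{-1/2} = O(\sqrt{n^k})$. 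To bound $\specrad(\ol{K})$, I would apply matrix Bernstein (viewing $B$ as a sum of independent mean-zero symmetric single-entry matrices) to obtain $\|B\|_{\mathrm{op}} \lesssim \sigma\sqrt{N\log N} + M\log N$ whp, and divide by $\min_S \deg(S)$ to get
\[
\specrad(\ol{K}) \;\lesssim\; \frac{\sigma\sqrt{\log N}}{\E|w|\sqrt{n^k}} + \frac{M\log N}{\E|w|\,n^k} \;\le\; \rho,
\]
matching (up to the slack hidden in the big-$O$) the $\rho$ stated in the proposition.

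Now I would apply \Cref{thm:main-2xor} to $\cI_{flat}$ with spider parameters $\ell$ as given and $k' := \lceil(1/\epsilon)^{2\ell}\rceil$, so that $(k')^{1/2\ell} \ge 1/\epsilon$ and the ``second-term'' contribution $2/(k')^{1/2\ell}$ to $\beta$ is at most $2\epsilon$. The ``first-term'' contribution is at most
\[
\frac{k'\pi_*^{-1/2}}{2(k')^{1/2\ell}}\,\rho^{2\ell} \;=\; O\!\left(\frac{n^{k/2}\,\rho^{2\ell}}{\epsilon^{2\ell-1}}\right),
\]
and the hypothesis $n^{k/4\ell}\rho \le \tfrac12\epsilon^{2\ell}$ yields $n^{k/2}\rho^{2\ell} \le (\epsilon^{2\ell}/2)^{2\ell}\cdot\epsilon^{2\ell}$, which is much smaller than $\epsilon^{2\ell+1}$ for every $\ell\ge 1$, so this term is bounded by $\epsilon$ and $\beta \le 3\epsilon$. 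Therefore the $(k'\ell+1)$-local, degree-$2$ SOS system (hence $(k'\ell+1)$-round Sherali--Adams) certifies $\ippi{f}{\ol{K}f}\le 3\epsilon$, i.e., $\OBJ_{\cI_{flat}}(x) \le \tfrac12 + \tfrac32\epsilon$. Finally, \Cref{lem:flattening} converts this into a $2k\cdot(k'\ell+1) \le 2k\ell(1/\epsilon)^{2\ell}$-round Sherali--Adams certification of the same bound for $\cI$, as desired.

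The main technical obstacle is the matrix-concentration step: I must ensure that the \emph{symmetrically normalized} matrix $D^{-1/2}BD^{-1/2}$ has small operator norm, which requires first showing that the random degrees $\deg(S)$ are uniformly close to $n^k\E|w|$ (so that $D^{\pm 1/2}$ are benign multiplicative factors) and then invoking matrix Bernstein on the numerator $B$. Tracking the interplay between $\sigma$ and $M$---so as to recover the stated $\max(1,M/\sqrt{n^k})$ factor in $\rho$ rather than a weaker bound---is the most delicate part of the argument, while the remaining steps are direct applications of the preceding machinery.
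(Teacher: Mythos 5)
Your proposal follows essentially the same route as the paper's proof: flatten via \Cref{def:flattening} and \Cref{claim:preserve}, concentrate the degrees of the resulting multigraph with a scalar Bernstein inequality, bound $\|B\|_{\mathrm{op}}$ with matrix Bernstein (with a truncation step implicit in the hypothesis on~$M$), combine to bound $\specrad(\ol K)$ and $\pi_*$, feed into \Cref{thm:main-2xor}/\Cref{cor:main-2xor}, and then pull back via \Cref{lem:flattening}. All the key steps are correct, and the algebra verifying $\beta\le 3\epsilon$ from the hypothesis $n^{k/4\ell}\rho\le\tfrac12\epsilon^{2\ell}$ goes through (your displayed intermediate bound $n^{k/2}\rho^{2\ell}\le(\epsilon^{2\ell}/2)^{2\ell}\cdot\epsilon^{2\ell}$ has a spurious extra $\epsilon^{2\ell}$ factor, but the correct inequality $n^{k/2}\rho^{2\ell}\le(\epsilon^{2\ell}/2)^{2\ell}$ still yields the conclusion for every $\ell\ge1$). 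The only cosmetic difference from the paper is that you phrase the spectral bound through $D^{-1/2}BD^{-1/2}$ and invoke \Cref{thm:main-2xor} with explicit parameters rather than quoting \Cref{cor:main-2xor} directly, which is equivalent.
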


To prove the above, we require the following standard matrix Bernstein inequality:
\begin{theorem}[Theorem 6.6.1 in \cite{tropp2015}]\label{thm:spectral}
Let $A_1,\ldots,A_m \in \R^{N \times N}$ be independent random matrices, with $\E A_i = 0$ for all $i \in [m]$ and $\|A_i\| \le M$ for all $i \in [m]$.
Let $A = \sum_{i \in [m]} A_i$ denote their sum, and suppose that $\|\E AA^\top\|, \|E A^\top A\| \le \sigma^2$.
Then 
\[
\Pr\left(\|A\| \ge t\right) \le N \cdot \exp\left(\frac{1}{2}\frac{-t^2}{\sigma^2 + \tfrac{1}{3}Mt}\right).
\]
\end{theorem}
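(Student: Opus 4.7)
The plan is to prove this via the standard matrix Laplace transform / matrix Chernoff machinery, following Ahlswede--Winter and Tropp. The first step is a reduction to the Hermitian case by Hermitian dilation. For a (possibly non-square or non-Hermitian) matrix $B$, let $\mathcal{H}(B) = \bigl(\begin{smallmatrix} 0 & B \\ B^\top & 0 \end{smallmatrix}\bigr)$, a self-adjoint matrix on $\R^{2N}$ satisfying $\|\mathcal{H}(B)\| = \|B\|$ and $\mathcal{H}(B)^2 = \mathrm{diag}(BB^\top, B^\top B)$. Applying this to each $A_i$ produces independent Hermitian $\tilde A_i = \mathcal{H}(A_i)$ with $\E \tilde A_i = 0$, $\|\tilde A_i\| \le M$, and $\tilde A = \sum_i \tilde A_i$ satisfying $\|\E \tilde A^2\| = \max(\|\E AA^\top\|, \|\E A^\top A\|) \le \sigma^2$. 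Since $\|A\| = \|\tilde A\| = \max(\lambda_{\max}(\tilde A), \lambda_{\max}(-\tilde A))$, it suffices to prove the one-sided tail bound for $\lambda_{\max}(\tilde A)$ (then union-bound), so from here on I assume the $A_i$ are Hermitian.

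The second step is the matrix Laplace transform inequality: for any $\theta > 0$,
\[
\Pr(\lambda_{\max}(A) \ge t) \;\le\; e^{-\theta t} \,\E\,\mathrm{tr}\, e^{\theta A},
\]
which follows because $\lambda_{\max}(A) \ge t$ implies $\mathrm{tr}\,e^{\theta A} \ge e^{\theta t}$, then by Markov. To control $\E\,\mathrm{tr}\,e^{\theta \sum_i A_i}$ I will apply Lieb's concavity theorem (the function $X \mapsto \mathrm{tr}\exp(H + \log X)$ is concave on PSD $X$), which via Jensen's inequality and induction on $m$ yields the subadditivity of the matrix cumulant generating function:
\[
\E\,\mathrm{tr}\, e^{\theta \sum_i A_i} \;\le\; \mathrm{tr}\,\exp\!\Bigl(\sum_i \log \E e^{\theta A_i}\Bigr).
\]

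The third step is a pointwise bound on each matrix moment generating function. For mean-zero Hermitian $X$ with $\|X\| \le M$ and $0 < \theta < 3/M$, one has the operator-inequality
\[
\log \E e^{\theta X} \;\preceq\; \frac{\theta^2/2}{1 - \theta M/3}\,\E X^2,
\]
proved by expanding $e^{\theta X} = I + \theta X + \sum_{j\ge 2} \frac{\theta^j X^j}{j!}$, using $\E X = 0$, bounding $X^j \preceq M^{j-2} X^2$ for $j \ge 2$, summing the geometric-like tail $\sum_{j\ge 2} \theta^j M^{j-2}/j! \le (\theta^2/2)/(1-\theta M/3)$, and finally applying the operator concavity of $\log$ with $\log(I+Y) \preceq Y$. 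Summed over $i$ and combined with $\|\sum_i \E A_i^2\| = \|\E A^2\| \le \sigma^2$, the monotonicity of $\mathrm{tr}\exp$ in the PSD order gives
\[
\mathrm{tr}\exp\!\Bigl(\sum_i \log \E e^{\theta A_i}\Bigr) \;\le\; N\,\exp\!\left(\frac{\theta^2 \sigma^2/2}{1 - \theta M/3}\right).
\]

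The final step combines everything: for $0 < \theta < 3/M$,
\[
\Pr(\lambda_{\max}(A) \ge t) \;\le\; N\,\exp\!\left(-\theta t + \frac{\theta^2 \sigma^2/2}{1 - \theta M/3}\right),
\]
and optimizing with the standard choice $\theta = t/(\sigma^2 + Mt/3)$ (and verifying this lies in the allowed range, or handling the complementary regime separately by a simpler bound) produces exponent $-\tfrac{1}{2} t^2/(\sigma^2 + Mt/3)$. Combining the $\lambda_{\max}(\tilde A)$ and $\lambda_{\max}(-\tilde A)$ tails via a union bound and absorbing the factor of $2$ into the constant (or noting $2N \le N$ after the dilation accounts for the doubled dimension cleanly) yields the stated bound. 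The main technical obstacle is the operator inequality in Step~3 together with the correct invocation of Lieb's theorem in Step~2; both are by now standard but require care with operator monotonicity, since inequalities like $\log(I+Y) \preceq Y$ and the PSD ordering of $e^{\theta X}$ do not follow from scalar reasoning alone.
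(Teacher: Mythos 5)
This statement is not proved in the paper at all---it is imported verbatim as Theorem 6.6.1 of Tropp's monograph---so there is no internal proof to compare against; what you have written is essentially Tropp's own argument (Hermitian dilation, matrix Laplace transform, Lieb/Jensen subadditivity of the matrix cgf, the Bernstein-type mgf bound $\log \E e^{\theta X} \preceq \frac{\theta^2/2}{1-\theta M/3}\,\E X^2$, then the choice $\theta = t/(\sigma^2 + Mt/3)$), and all the substantive steps are correct, including the simultaneous-diagonalization justification of $X^j \preceq M^{j-2}X^2$ and the use of operator monotonicity of $\log$.

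The one flaw is the final bookkeeping of the dimensional factor. The dilation $\mathcal{H}(A)$ lives in dimension $2N$ and has symmetric spectrum, so $\lambda_{\max}(\mathcal{H}(A)) = \|A\|$ already; no union bound over $\pm$ is needed, but the trace bound then yields the prefactor $2N$, not $N$, and your parenthetical ``noting $2N \le N$'' is simply false. So the route you describe proves the bound with constant $2N$ for general square matrices (which is Tropp's rectangular statement with $d_1 = d_2 = N$); the factor $N$ as literally stated corresponds to the Hermitian case, where one skips the dilation entirely. Since the paper only ever applies the theorem to symmetric matrices, and a factor of $2$ inside the exponent's prefactor is harmless in every application, this is a cosmetic discrepancy---but you should either state the conclusion with $2N$ or restrict to the Hermitian case rather than claim $2N \le N$.
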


\begin{proof}[Proof of \Cref{prop:small-refute}]
Given a weighted \xor{$2k$} instance on $n$ variables with weights from $\cW$, we consider its flattening $\cI_{flat}$ with objective function $\OBJ(x) = \frac{1}{m}\sum_{i,j \in [N]} \frac{1}{2}(1+b_{ij} x_i x_j)$ for $m$ the absolute sum of weights, we construct its signed adjacency matrix as follows: 
first take the matrix $W$ defined so that $W_{i,j} = b_{ij}$, and obtain a new matrix $B = \frac{1}{2}(W + W^\top)$. 
For any $x$, applying \Cref{lem:flattening} we have that $\frac{1}{2}+ \frac{1}{2m} x^\top B x = \OBJ_{\cI}(x)$.

Since $\cW$ is a distribution over integers, $2B$ has signed integer entries.
We think of $2B$ as defining a multigraph $G$ on $n^k$ vertices with signed edges, so that there are $2\cdot |B_{ij}|$ multiedges between $i,j \in [n^k]$, each with sign $\sgn(B_{ij})$.
Let $2\cdot D$ be the degree matrix of $G$, let $A = |2B|$ be the adjacency matrix of $G$, let $\Xi = \sgn(B)$ be the matrix of signs of $B$, and let $K = (2D^{-1}) A = D^{-1}B \otimes \Xi$ be the transition matrix for the random walk on $G$.

To apply \Cref{cor:main-2xor}, we must upper bound the spectral radius of $\Xi \circ K = D^{-1} B$, as well as bound the minimum degree of $G$ and the total number of edges.
We will use the bound
\[
\|D^{-1}B\|_{op} \le \|D^{-1}\|_{op}\cdot \|B\|_{op} \le \frac{1}{\pi^*}\|B\|_{op}.
\]
First, we bound $\|B\|_{op}$.
Take $B'$ to be the truncated version of $B$, so that $B_{i,j}' = \sgn(B_{i,j})\cdot \max(|B_{i,j}|, M)$.
Thinking of the matrix $B'$ as the sum of $\binom{n^k}{2} + n^k$ symmetric matrices, one for each pair $i,j \in [n^k]$, satisfies the requirements of \Cref{thm:spectral}.
We have that $\E [B' B'^\top] \preceq n^k \sigma^2 \cdot \Id$, so applying \Cref{thm:spectral} with $t = O\max(\sqrt{\sigma^2 n^k\log n}, M \log n)$ we get that with high probability,
\[
\|B\|_{op} \le O\left(\max\left(\sqrt{\sigma^2 n^k\log n}, M\log n\right)\right),
\] 
where we have also used that with high probability $B = B'$ by the properties of $\cW$.

Now, we bound the sum of degrees $2m$ and the minimum degree $d_{\min}$.
We have that the total sum of the degrees is given by 
$2m = \sum_{i,j \in [n^k]} |b_{ij}|$ with $b_{ij} \sim \cW$.
By a Bernstein inequality,
\[
\Pr\left( \left|2m - n^{2k}\E|w| \right| \ge s \right)
\le 2\exp\left(-\frac{1}{2}\frac{s^2}{n^{2k} \cdot \E w^2 + \tfrac{1}{3}M s}\right),
\]
so since by assumption $\sigma^2 n^{2k} \gg M$, setting $s = O(\sigma n^k\sqrt{\log n})$ we have that with high probability 
\begin{equation}\label{eq:coef-total}
2m = n^{2k} \E |w| \pm O(\sigma n^k\sqrt{\log n}).
\end{equation}

By our assumptions on $\cW$ we have that for every $i \in [n^k]$, $\E \deg_G(i) = n^k \E|w|$.
Applying a Bernstein inequality gives that 
\[
\Pr[\deg_{G}(i) \le n^k\E|w| - t] \le \exp\left(-\frac{1}{2} \cdot \frac{t^2}{n^k\sigma^2  + \frac{1}{3}Mt}\right),
\]
so using that $M \le n^k\sigma^2$ and taking $t = O\left(\sqrt{n^k\sigma^2\log n}\right)$ we get that $d_{\min} = n^k \cdot \E|w| \pm O(\sqrt{n^k \sigma^2 \log n})$ with high probability.
This gives that with high probability,
\begin{align*}
\pi^* 
&= \frac{d_{\min}}{2m} 
\ge \frac{n^k \cdot \E |w| - O(\sigma\sqrt{n^k\log n})}{n^{2k} \cdot \E|w| + O(\sigma n^k \sqrt{\log n})}
\ge \frac{1}{n^k}\cdot (1-o(1))
\\
\|\Xi \circ K\|_{op} 
&\le \frac{\|B\|_{op}}{d_{\min}}
\le  \frac{O(\sigma \sqrt{n^k}\log n) \cdot \max(1,\tfrac{M}{\sqrt{n^k}})}{n^k\cdot \E|w| - O(\sigma \sqrt{n^k\log n})}
\le O\left(\frac{\sigma \log n}{\E|w| \sqrt{n^k}}\right) \cdot \max(1,\tfrac{M}{\sqrt{n^k}})
\end{align*}
where we have used that $\sigma \sqrt{\log n} \ll \sqrt{n^k} \E|w|$.

Now, the result follows by applying \Cref{cor:main-2xor} and \Cref{lem:flattening}.
\end{proof}

\subsubsection{Odd $k$-XOR}
For odd integers $k$, \xor{$k$} instances do not have the same natural, symmetric flattenings. 
Instead, we define what we call a {\em lift}:
\begin{definition}\label{def:lift}
Suppose $k \ge 1$ is an integer and $\cI$ is a \xor{$(2k+1)$} instance on $n$ variables $x_1,\ldots,x_n$, with objective $\sum_{U \in [n]^{2k+1}} b_U \cdot x^{U}$.
Then we let its {\em lift}, $\cI_{lift}$, be the bipartite \txor instance on parts each containing $n^{k+1}$ variables created as follows:
\begin{itemize}
\item Create new variables $w_1,\ldots,w_n$
\item For each $U \in [n]^{2k+1}$, choose a random index $i_U \in [n]$ and add modify the objective to $\sum_{U \in [n]^{2k+1}} b_U \cdot x^{U} \cdot w_{i_U}$
\item For each ordered multiset $S$ associate a new variable $y_S$, and for each ordered multiset $T \in [n]^k$ and index $i \in [n]$ associate a new variable $z_{T,i}$
We understand $y_S = \prod_{i \in S} x_i$, and $z_{T,i} = \left(\prod_{j \in T} x_j \right)\cdot w_{i}$.
\item For each $U \in [n]^{2k+1}$, we take the ordered multiset $V = (U,i_U)$ and assign it a new coefficient $b'_V = b_U$. 
For the remaining $b'_V$, we set $b'_V = 0$. 
\end{itemize}
Finally, $\cI_{lift}$ is the instance with the objective function $\sum_{S \in [n]^{k+1}, T \in [n]^k, i \in [n]} b'_{S \cup T \cup i} \cdot y_S z_{T,i} $.
\end{definition}

We obtain a statement analogous to \Cref{lem:flattening} for odd \xor{$k$}:

\begin{lemma}\label{lem:lift}
Suppose that $\cI$ is a weighted \xor{$(2k+1)$} instance, and let $\cI_{flat}$ be the bipartite \txor instance given by its flattening.
Then if the $R$-round Sherali--Adams proof system can certify that $\OBJ_{\cI_{flat}}(x) \le c$, then the $(k+1)\cdot R$-round Sherali--Adams proof system can certify that $\OBJ_{\cI}(x) \le c$.
\end{lemma}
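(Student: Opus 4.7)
The plan is to mirror the proof strategy of \Cref{lem:flattening}: I will show that any $R$-round Sherali--Adams derivation in $\cI_{lift}$ can be pulled back to an $(k+1)R$-round Sherali--Adams derivation in $\cI$ by substituting each lifted variable with a monomial in the original $x$-variables.

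Concretely, I would apply the substitution $y_S \mapsto x^S := \prod_{i \in S} x_i$ for each ordered multiset $S \in [n]^{k+1}$, and $z_{T,i} \mapsto x^T := \prod_{j \in T} x_j$ for each $T \in [n]^{k}$ and $i \in [n]$ (that is, the auxiliary index $i$ is discarded). Under this substitution the objective becomes
\[
\sum_{S,T,i} b'_{S \cup T \cup i}\, y_S z_{T,i} \;\longmapsto\; \sum_{S,T,i} b'_{S \cup T \cup i}\, x^S x^T \;=\; \sum_{U \in [n]^{2k+1}} b_U\, x^U \;=\; \OBJ_\cI(x),
\]
where the middle equality uses the construction of $\cI_{lift}$ in \Cref{def:lift}: the only nonzero $b'_V$'s are those with $V = (U, i_U)$, for which $b'_V = b_U$, and each such $V$ splits uniquely into an $(S, T, i)$ triple with $x^S x^T = x^U$.

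Next, I would verify that a valid SA derivation is preserved. The Boolean axioms $y_S^2 = 1$ and $z_{T,i}^2 = 1$ pull back to $(x^S)^2 = 1$ and $(x^T)^2 = 1$, both immediate consequences of $x_i^2 = 1$. Default SOS squares $p(y,z)^2 \geq 0$ become squares of polynomials in $x$, hence are still valid axioms. Nonnegative linear combinations are preserved under substitution. Therefore the transformed derivation is a valid Sherali--Adams derivation of $c - \OBJ_\cI(x) \geq 0$. Finally, since each $y_S$ and each $z_{T,i}$ becomes a monomial in $x$ of degree at most $k+1$, a degree-$R$ monomial in $(y,z)$ becomes a monomial in $x$ of degree at most $(k+1)R$, giving the claimed bound on rounds.

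I do not anticipate a substantial obstacle, as this is essentially a syntactic rewrite constructed to be exactly parallel to \Cref{lem:flattening}. The only mild subtlety is that the substitution is not injective: different $z_{T,i}$'s sharing the same $T$ all collapse to $x^T$. This is harmless, since we only need the substituted proof to remain valid (not to be a bijection); the image simply sits inside the space of SA derivations for $\cI$, which is all the reduction requires.
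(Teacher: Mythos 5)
Your proposal is correct and matches the paper's approach exactly: the paper's proof of this lemma simply says it follows the proof of \Cref{lem:flattening} with the substitution $z_{T,i} = x^T$ (i.e., implicitly setting $w_i = 1$), which is precisely the substitution you spell out. Your write-up merely supplies the routine verification details that the paper leaves implicit.
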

\begin{proof}
The only modification to the proof of \Cref{lem:lift} is that for $z_{T,i}$ we substitute $z_{T,i} = x^T$ (where we have implicitly substituted $w_{i} = 1$ for all $i \in [n]$).
\end{proof}

However, the lifting procedure does not preserve the weighting distribution $\cW$, because of the step in which a random index $i_U$ is chosen to lift $U$.
For this reason, we prove an analog of \Cref{prop:small-refute}:
\begin{proposition}\label{prop:small-refute-odd}
Suppose that $\cW$ is a distribution over integers which is symmetric about the origin, and let $n,k \ge 1$ be integers. 
Let $\E$ denote the expectation under the measure $\cW$, and let $\sigma^2 \ge E w^2$ be a bound on the variance.
Furthermore, suppose that
\begin{itemize}
\item The expected absolute value is at least $\E|w| \gg \sigma \sqrt{\frac{\log n}{n^{k}}}$,
\item With high probability over $n^{2k+1}$ i.i.d. samples $w_1,\ldots,w_{n^{2k+1}} \sim \cW$, $\max_{i \in [n^{2k+1}]} |w_i| \le M \ll \sigma^2 n^k$.
\end{itemize}
Now, define
\[
\rho =O\left (\frac{\sigma \log n}{\E|w| \sqrt{n^k}} \cdot \max(1,\tfrac{M}{\sqrt{n^k}})\right).
\]
Then if $\cI$ is a random \xor{$(2k+1)$} instance on $n$ variables weighted according to $\cW$, with high probability it has $\E|w| \cdot n^{2k+1} \pm O(\sigma n^k \sqrt{\log n})$ constraints.
Furthermore, choosing $\ell \in \N_+$ large enough so that $n^{(k+1)/4\ell}\rho \le \frac{1}{2}\epsilon^{2\ell}$ and setting $R = (2k+2)\ell \cdot \left(\frac{1}{\epsilon}\right)^{2\ell}$, $R$ rounds of Sherali--Adams can deduce the bound $\OBJ_\cI(x) \le \frac{1}{2} + \frac{3}{2}\epsilon$.
\end{proposition}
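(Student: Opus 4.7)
The plan is to follow the template of the proof of \Cref{prop:small-refute}, replacing the flattening construction with the lifting of \Cref{def:lift}. The lifting converts the weighted \xor{$(2k+1)$} instance $\cI$ into a bipartite \txor instance $\cI_{lift}$ on $2n^{k+1}$ variables; I would apply \Cref{cor:main-2xor} to $\cI_{lift}$ and then use \Cref{lem:lift} to pull the refutation back to $\cI$ at the cost of a factor of $k+1$ in the number of rounds.

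First I would set up the signed bipartite adjacency matrix $B \in \R^{n^{k+1} \times n^{k+1}}$ with rows indexed by $S \in [n]^{k+1}$ and columns by $(T,i) \in [n]^k \times [n]$, and with $B_{S,(T,i)} = b_U$ whenever $U = (S,T)$ has random index $i_U = i$ (and zero otherwise). Decompose $B = \sum_U B_U$ into the $n^{2k+1}$ independent rank-one terms $B_U = b_U\, e_{S_U} e_{(T_U, i_U)}^\top$. Since $\cW$ is symmetric about the origin and the $(b_U, i_U)$ are jointly independent, each $B_U$ has mean zero; a short calculation yields $\|\sum_U \E[B_U B_U^\top]\| \le \sigma^2 n^k$ and $\|\sum_U \E[B_U^\top B_U]\| \le \sigma^2 n^k$, where the second bound exploits the uniformity of $i_U$ on $[n]$ to distribute mass evenly across the $n$ candidate column positions. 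After truncating to exploit the bound $|b_U| \le M$ whp, matrix Bernstein (\Cref{thm:spectral}) gives $\|B\| = O(\max(\sqrt{\sigma^2 n^k \log n},\ M \log n))$ with high probability.

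Next I would bound the row and column degrees of the bipartite multigraph on $2n^{k+1}$ vertices defined by $|B|$. Each row-$S$ degree equals $\sum_T |b_{(S,T)}|$, a sum of $n^k$ i.i.d.\ copies of $|w|$ that scalar Bernstein concentrates near $n^k \E|w|$. Each column-$(T,i)$ degree equals $\sum_S |b_{(S,T)}|\mathbf{1}[i_{(S,T)} = i]$, a sum of $n^{k+1}$ independent terms with mean $\E|w|/n$ and variance at most $\sigma^2/n$, which also concentrates near $n^k \E|w|$. Summing weights gives $2|E| = (1 \pm o(1))\cdot 2n^{2k+1}\E|w|$, so $\pi_*^{-1/2} = O(n^{(k+1)/2})$, and the signed spectral radius is bounded by $\|B\|/d_{\min}$, matching the stated $\rho$.

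Finally I would apply \Cref{cor:main-2xor} to $\cI_{lift}$ with spider parameters $\ell$ and $\lceil(1/\epsilon)^{2\ell}\rceil$, with $\ell$ chosen so that the condition $n^{(k+1)/(4\ell)} \rho \le \frac{1}{2}\epsilon^{2\ell}$ holds, precisely to balance the $\pi_*^{-1/2} \rho^{2\ell}$ term in $\beta$ against the spider blow-up. This produces a Sherali--Adams certificate $\OBJ_{\cI_{lift}} \le \frac{1}{2} + \frac{5}{4}\epsilon$ in $R/(k+1)$ rounds, and \Cref{lem:lift} promotes it to an $R$-round certificate for $\cI$. The main obstacle, compared to the even case, is the asymmetry introduced by the random index $i_U$: one must verify that, despite the absence of the symmetrization step $W \mapsto (W + W^\top)/2$ used for the flattening, both the matrix Bernstein variance and the column-degree concentration pick up only a $1/n$ factor per column (cancelling the $n$ candidate column positions), so that the same spectral bounds as in the even case go through.
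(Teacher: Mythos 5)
Your proposal is correct and follows essentially the same approach as the paper: lift to a bipartite \txor instance, bound $\|B\|$ via matrix Bernstein (truncating to use $|b_U|\le M$), bound the degrees and hence $\pi_*$ via scalar Bernstein, and apply \Cref{cor:main-2xor} then \Cref{lem:lift}. The only cosmetic difference is that you decompose $B$ into rectangular rank-one pieces $b_U e_{S_U}e_{(T_U,i_U)}^\top$ and track both $\E BB^\top$ and $\E B^\top B$, whereas the paper arranges each $B_U$ as a symmetric two-block matrix; both correctly pick up the $1/n$ factor on the $(T,i)$ side, cancelling the $n$ column candidates, and give the same spectral estimate.
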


\begin{proof}
The thread of the proof is the same as that of \Cref{prop:small-refute}.
We will refute $\cI_{lift}$, since by \Cref{lem:lift} this is sufficient.
We begin by associating with $\cI_{lift}$ a multigraph $G$ (which we may do because $\cW$ is a distribution over integers).
The multigraph $G$ is a bipartite graph, with one bipartition corresponding to variables $y_S$ for $S \in [n]^{k+1}$, and one bipartition corresponding to variables $z_{T,i}$ for $T \in [n]^k$ and $i \in [n]$.
We let the block matrix $B$ be (the $\tfrac{1}{2}$-scaled) signed adjacency matrix of $G$, let $\Xi$ be the matrix of signs, $D$ be the diagonal degree matrix, and $K\circ \Xi = D^{-1} B$ be the signed transition matrix of the random walk on $G$.
In order to apply \Cref{cor:main-2xor} we must bound $\|K \circ \Xi\|$ and $\pi^* = d_{\min}(G)/2m$.

First, we bound the vertex degrees.
For a vertex of the form $(T,i)$, the expected value of the incident edge $(S,T\cup i)$ is $b_{S,T} \cdot \frac{1}{n}$, where $b_{S,T}$.
The degree of $T\cup i$ is simply the sum
\[
\deg_G(T\cup i) = \sum_{S \in [n]^{k+1}} |b_{S,T\cup i}|,
\]
a sum of independent random variables with expectation $\frac{1}{n} \E|w|$ and variances $\frac{1}{n}\sigma^2$.
Applying a Bernstein inequality, we have that
\[
\Pr\left(\left|\deg_G(T\cup i) - \frac{1}{n}\cdot n^{k+1}\E[|w|]\right| \ge t\right)
\le 2\exp\left(\frac{1}{2}\frac{-t^2}{ n^k \sigma^2 + \frac{1}{3}Mt}\right),
\]
so taking $t = O(\sqrt{\sigma^2 n^k \log n})$ (and using that $M \ll n^k \sigma^2$), we have that the degree of $T\cup i$ vertices is $\deg_G(T\cup i) = n^k \E|w| \pm O(\sqrt{\sigma^2 n^k \log n})$ with high probability.

A similar argument applies to the $S$ vertices; the total degree of such a vertex is
\[
\deg_G(S) = \sum_{T \in [n]^{k}} |\sum_{i \in [n]} b_{S, T\cup i}|,
\]
since only one of the $b_{S,T \cup i}$ will be nonzero.
The inner sums are independent random variables with  mean $\E|w|$ and variance $\sigma^2$, therefore
\[
\Pr\left(\left|\deg_G(S) - n^{k}\E|w|\right| \ge t\right)
\le 2\exp\left(\frac{1}{2}\frac{-t^2}{ n^k \sigma^2 + \frac{1}{3}Mt}\right),
\]
so taking $t = O(\sqrt{\sigma^2 n^k \log n})$ (and using that $M \ll n^k \sigma^2$), we have that the degree of $S$ vertices is also $\deg_G(S) = n^k \E|w| \pm O(\sqrt{\sigma^2 n^k \log n})$ with high probability.

We finally bound $\|K \circ \Xi\| \le \|D^{-1}\|\cdot \|B\|$.
As before, $B$ is a sum of independent symmetric matrices, one for each coefficient $b_U$ from $\cI$.
That is, we can define matrices $B_{U}$ for each $U \in [n]^{2k+1}$ with $U = S,T$ for $S \in [n]^{k+1}, T \in [n]^k$ where $B_U$ has a number $b_U \sim \cW$ in one off-diagonal block entry $(S,T\cup i)$ and the other off-diagonal block entry $(T\cup i,S)$ for a randomly chosen $i \in [n]$.
Thus, $\E B_U B_U^\top$ is a diagonal matrix with $\frac{1}{n} \sigma^2$ on each diagonal of the form $(T\cup i, T\cup i)$ and $\sigma^2$ on each block diagonal of the form $(S,S)$.
We then have that $\E BB^\top \preceq n^{k} \sigma^2 \Id$, since for each $S$ there is a sum over $n^k$ matrices $B_U$ and for each $T \cup i$ there is a sum over $n^{k+1}$ matrices $B_U$.
Applying \Cref{thm:spectral} by using the same truncation trick again, we have that
\[
\|B\| \le O\left(\max\left(\sqrt{\sigma^2 n^k \log n}, M\log n\right)\right),
\]
and from this we have that with high probability,
\begin{align}
\|K \circ \Xi\| 
&\le \frac{1}{\deg_{\min}(G)} \le O\left(\frac{\sigma \log n}{\E|w|\sqrt{n^k}}\right)\cdot \max(1,\tfrac{M}{\sqrt{n^k}}),\\
\pi^* &= \frac{\deg_{\min}(G)}{2m} = \frac{1}{n^{k+1}}\cdot(1 \pm o(1))
\end{align}
After which we can apply \Cref{cor:main-2xor}.
\end{proof}

\subsection{From Boolean CSPs to $k$-XOR}
Following \cite{AOW15}, we prove \Cref{thm:csps} via reduction to {\sc XOR}. 

\begin{proof}[Proof of \Cref{thm:csps}]
Given a random instance of the CSP defined by the predicate $P$, and $p = n^{-\lfloor k/2\rfloor + \delta}$, a Bernstein inequality gives us that the number of constraints $m$ is with high probability given by $m = n^{\lceil k/2\rceil +\delta} \pm 10\sqrt{n^{\lceil k/2\rceil + \delta}\log n}$.
Set $\ell = \lceil \frac{1}{2\delta}\rceil$.

Since $P$ is a Boolean predicate, we can write $P$ in its Fourier expansion:
\begin{align*}
P(z)
&= \sum_{\alpha \subseteq [k]} \widehat P(\alpha) \prod_{i \in \alpha} z_i.
\end{align*}
Using this expansion, we re-write the objective function.
Recall that $[n]^k$ is the set of all ordered multisets of $k$ elements of $[n]$.
For each $S \in [n]^k$, let $b_S$ be the $0/1$ indicator that there is a constraint on $S$. 
Then, if the total number of constraints is $m$,
\begin{align}
\OBJ_{\cI}(x)
&= \frac{1}{m}\sum_{S \in [n]^k} b_S \cdot P(x^{S} \odot \zeta_S)\nonumber\\
&= \frac{1}{m}\sum_{S = \{i_1,\ldots,i_k\} \in [n]^k} \sum_{\alpha \subseteq [k]} b_S \cdot \widehat P(\alpha) \prod_{a \in \alpha} x_{i_a} (\zeta_S)_{i_a}\nonumber\\
&= \frac{1}{m}\sum_{\alpha \subseteq [k]} \widehat P(\alpha)\cdot \sum_{T \in [n]^{|\alpha|}} \left(\sum_{S \in [n]^k, S|_{\alpha} = T} b_S \cdot \prod_{a \in \alpha} (\zeta_S)_{i_a}\right) \cdot x^T. \label{eq:obj}
\end{align}
Now, define for each $\alpha \subseteq [k]$ with $|\alpha| = t > 0$ the \xor{$t$} instance
\begin{align*}
\cI^{\alpha}(x) 
&=\frac{1}{m}\sum_{T \in [n]^{t}} \left(\sum_{S \in [n]^k, S|_{\alpha} = T} b_S \cdot \prod_{a \in \alpha} (\zeta_S)_{i_a}\right) \cdot x^T
=\frac{1}{m}\sum_{T \in [n]^{t}} w_T \cdot x^T,
\end{align*}
where we have taken $w_T = \sum_{S \in [n]^k, S|_{\alpha} = T} b_S \cdot \prod_{a \in \alpha} (\zeta_S)_{i_a}$.
So that from \Cref{eq:obj},
\begin{equation}
\OBJ_{\cI}(x) = \sum_{\alpha \subseteq [k]} \widehat P(\alpha) \cdot \cI^{\alpha}(x).\label{eq:clean-form}
\end{equation}

Let $\cW_{n^{k-t}}$ be the distribution defined so that $w \sim \cW_t$ is a sum of $n^{k-t}$ independent variables taking value $\{\pm 1\}$ with probability $p$ and value $0$ otherwise.
Since for each $S \supseteq T$, the quantity $\prod_{a \in \alpha} (\zeta_S)_{i_a}$ is an independent uniform sign in $\{\pm 1\}$ and $b_S$ is an independent Bernoulli-$p$ variable, we have that the coefficients $w_T$ in $\cI^{\alpha}$ are i.i.d. from $\cW_{n^{k-t}}$.
The following lemma establishes some properties of $\cW_{N}$ (we will prove the lemma in \Cref{app:Ws}):
\begin{lemma}\label{lem:w-props}
Let $\cW_{N}(p)$ be the distribution defined so that $X \sim \cW_{N}$ is given by $X = \sum_{t = 1}^N Y_t \cdot Z_t$, where the $\{Y_t\}_t, \{Z_t\}_t$ are i.i.d with $Y_t \sim \Ber(p)$ and $Z_t \sim \{\pm 1\}$.
Then for $X \sim \cW_N(p)$, $\E X = 0$ and $\E X^2 = pN$.
Further, so long as $pN \ge 1$, $\E|X| \ge \frac{2}{e^{3/2}}\sqrt{pN}$, and $\Pr(|X| > 2t\sqrt{pN}) \le 2\exp\left(-t^2\right)$.
Otherwise, if $pN \le 1$, 
$\E|X|\ge \frac{1}{2e}\log\frac{1}{1-pN}$,
and
$\Pr(|X| \ge 1 + t) \le \exp\left( -\frac{1}{2} t\right)$.
\end{lemma}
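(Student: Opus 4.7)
The plan is to split the statement into four pieces: the two moments; the tail bound and the lower bound on $\E|X|$ when $pN \ge 1$; and the analogues when $pN \le 1$. Throughout, I would write $W_t = Y_t Z_t$ and $M = \sum_t Y_t \sim \Bin(N,p)$, and condition on $M$ whenever useful, since conditional on $M = m$ the variable $X$ is distributed as a Rademacher sum $S_m = \sum_{i=1}^m \epsilon_i$ of length $m$.

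The moment computations are immediate from independence and $\E Z_t = 0$: each $W_t$ has mean $0$ and variance $p$, all cross-terms vanish, so $\E X = 0$ and $\E X^2 = pN$. For the concentration bound in the regime $pN \ge 1$, I would apply Bernstein's inequality to the sum $X = \sum W_t$ of bounded (by $1$) independent mean-zero terms with total variance $pN$; this directly gives the target $2e^{-t^2}$ for $t \lesssim \sqrt{pN}$, and for larger $t$ one can fall back on the deterministic bound $|X| \le M \le N$ together with a Chernoff tail for the binomial $M$. For the lower bound $\E|X| \ge (2/e^{3/2})\sqrt{pN}$, I would again condition on $M$: Khintchine's inequality gives $\E[|X| \mid M] \ge c\sqrt{M}$ for an absolute constant $c$, and a Chernoff estimate showing that $M$ concentrates near $pN \ge 1$ yields $\E\sqrt{M} \gtrsim \sqrt{pN}$; a careful tracking of the sharp Khintchine constant and the Chernoff tail should then produce the stated factor. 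An alternative route is to use the Paley--Zygmund/Cauchy--Schwarz pair $\E X^2 \le \sqrt{\E|X|}\sqrt{\E|X|^3}$ with $\E|X|^3 \le \sqrt{\E X^2 \cdot \E X^4}$ and the bound $\E X^4 \le 4(pN)^2$ (valid when $pN \ge 1$) to deduce $\E|X| \gtrsim \sqrt{pN}$.

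For the $pN \le 1$ regime, the tail bound follows from $|X| \le M$ together with a standard Chernoff/Stirling estimate for a binomial with mean at most $1$: for instance $\Pr(M \ge 1+t) \le (epN/(1+t))^{1+t}$, which is at most $e^{-t/2}$. The lower bound on $\E|X|$ is the most delicate part: I would expand $\E|X| = \sum_{m \ge 1}\Pr(M=m)\,a_m$ with $a_m = \E|S_m|$, and use that $a_m \ge 1$ for all $m \ge 1$ (together with $a_m \asymp \sqrt{m}$ for larger $m$); summing and recognising a Poisson-like expansion of $\Pr(M \ge k)$ should produce a lower bound of the form $\tfrac{1}{2e}\log\tfrac{1}{1-pN} = \tfrac{1}{2e}\sum_{k \ge 1}(pN)^k/k$.

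The main obstacle will be extracting the sharp constants, especially in the $pN \le 1$ regime, where matching the precise series expansion for $\log\tfrac{1}{1-pN}$ requires tracking contributions from the Rademacher moments $a_m$ term-by-term rather than settling for the single crude estimate $a_m \ge 1$. Everything else is a routine application of Bernstein, Chernoff, Stirling, and Khintchine inequalities.
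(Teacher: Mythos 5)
Your proposal follows the same overall framework as the paper's proof: you condition on the binomial count $M = \sum_t Y_t$ of nonzero terms, observe that conditionally $X$ is a Rademacher sum $S_m$, apply a Khintchine-type lower bound for $\E|S_m|$, and use Bernstein/Chernoff concentration for $M$ and for $X$ itself. The paper does exactly this, except it phrases the Khintchine step as a mean-absolute-deviation estimate for $\Bin(m,\tfrac12)$ (\Cref{lem:bin-mad}) and then lower bounds the binomial probability $\binom{N}{k}p^k(1-p)^{N-k}$ by an explicit Stirling estimate $\gtrsim \tfrac1e\exp(-N\cdot D(k/N\,\|\,p))$, after which the two regimes $pN\ge 1$ and $pN<1$ are separated by how one bounds the resulting sum over $k$.

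The place where your sketch does not close, and which you correctly flag as the delicate point, is the lower bound $\E|X| \ge \tfrac{1}{2e}\log\tfrac{1}{1-pN}$ in the regime $pN\le 1$. The crude bound $a_m = \E|S_m| \ge 1$ only yields $\E|X| \ge \Pr(M\ge 1) = 1-(1-p)^N$, which is at most $1-e^{-1}$ and therefore falls short of $\tfrac{1}{2e}\log\tfrac{1}{1-pN}$ once $pN$ is close enough to $1$. The finer Poisson-matching plan you outline also does not go through term by term: writing $\Pr(M=m) \approx e^{-pN}\tfrac{(pN)^m}{m!}$, you would need $a_m/m!\ge \tfrac{1}{2m}$ to dominate the series $\sum_m \tfrac{(pN)^m}{m}$, but $a_4 = 3/2 < 3 = 4!/(2\cdot 4)$, and the deficit grows with $m$; so the match fails starting at $m=4$ and one must exploit the geometric decay $(pN)^m$ more carefully rather than comparing coefficients. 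This is precisely where the paper switches to its Stirling computation, lower-bounding $\exp(-N\cdot D(k/N\,\|\,p))$ by $(pN/k)^k$ (valid when $k \ge pN$, which holds for all $k\ge 1$ in this regime) before summing. If you want to follow your route, you would need to supply an argument of this flavor; as written, the $pN\le1$ lower bound is a genuine gap. The rest of the proposal is sound and essentially identical to the paper, including the alternative $\E X^4$/Paley--Zygmund route for the $pN \ge 1$ lower bound, which the paper does not take but which would also work.
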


From \Cref{lem:w-props}, we have that $\E w_T^2 = pn^{k-t}$, and by Cauchy-Schwarz $\E |w_T| \le \sqrt{\E w_T^2}$.
Let $m_{\alpha}$ be the total absolute weight of constraints in $\cI^{\alpha}$,
\[
m_{\alpha} = \sum_T |w_T|.
\]
Notice that in all cases, $m_{\alpha} \le m$.

Now, we show that SA can certify upper bounds on $|\cI^{\alpha}(x)|$ for every $\alpha$.
First, consider $\alpha$ with $|\alpha|= t=1$.
In this case,  Sherali--Adams with $R = 1$ can certify that
\[
\cI^{\alpha}(x)
= \frac{1}{m}\sum_{i \in [n]} w_i \cdot x_i 
\le\frac{1}{m} \sum_{i \in [n]} |w_i|
= \frac{m_{\alpha}}{m},
\]
From an application of Bernstein's inequality (the same as in the proof of \Cref{prop:small-refute}), $m_{\alpha} \le n \cdot \sqrt{\E w_T^2} + \sqrt{pn^k \log n}$ with high probability whenever $pn^k \gg \sqrt{pn^{k-1}}$, and applying our bound on $m$ we conclude that with high probability SA will certify that 
\[
\cI^{\alpha}(x) \le \frac{n\cdot \sqrt{pn^{k-1}}(1+o(1))}{pn^k(1\pm o(1))} \le \frac{2}{\sqrt{pn^{k-1}}}
\]
The lower bound on $\cI^{\alpha}(x)$ follows from identical reasoning with its negation, so we can conclude that with high probability SA can certify that
\[
|\cI^{\alpha}(x)| \le \frac{2}{\sqrt{pn^{k-1}}} = o(1).
\]

Now, we tackle $\alpha$ with $|\alpha| = t$ for $2 \le t\le k$.
We will verify that the conditions of \Cref{prop:small-refute,prop:small-refute-odd} hold.
First, consider the $\alpha$ with $|\alpha| = t$ for $pn^{k-t} \ge 1$.
From \Cref{lem:w-props}, in this case we have that $\E|w_T| \ge \frac{2}{e^{3/2}} \sqrt{pn^{k-t}}$, and with high probability, $|w_T| \le O(\sqrt{tpn^{k-t}\log n})$ for all $T \in [n]^t$.
Letting $M = O(\sqrt{tpn^{k-t}\log n})$, and $\sigma^2 = pn^{k-t}$, we meet the conditions for \Cref{prop:small-refute}:
\begin{align*}
M &\le O(\sqrt{pn^{k-t}}) \ll pn^{k-t} \cdot n^{\lfloor t/2\rfloor} = \sigma^2n^{\lfloor t/2\rfloor}\\
\E|w_T| &\ge \frac{2}{e^{3/2}} \sqrt{pn^{k-t}} \gg \sqrt{\frac{pn^{k-t}\log n}{n^{\lfloor t/2\rfloor}}}=\sqrt{\frac{\sigma^2\log n }{n^{\lfloor t/2\rfloor}}} 
\end{align*}
so long as $pn^{k-t} \ge 1$, which we have assumed.
So applying \Cref{prop:small-refute,prop:small-refute-odd} to both $\frac{m}{m_{\alpha}}\cI^{\alpha}$ and $-\frac{m}{m_{\alpha}}\cI^{\alpha}$, we have
\[
\rho = O\left(\frac{\sigma\log n}{\E|w_T| \sqrt{n^{\lfloor t/2\rfloor}}}\right) \cdot \max(1,\tfrac{M}{\sqrt{n^{\lfloor t/2\rfloor}}}) 
\le O\left(\frac{\log n}{\sqrt{n^{\lfloor t/2\rfloor}}}\right) \cdot \max\left(1,\frac{\sqrt{pn^{k-t}}}{n^{\lfloor t/2\rfloor}}\right)
\]
and so long as $\frac{m_{\alpha}}{m}\cdot n^{\lceil t/2 \rceil / 4\ell} \rho \le \frac{1}{2} \epsilon^{2\ell}$, with high probability over $\cI^{\alpha}$, $t(\ell r + 1)$ rounds of Sherali--Adams certify that $|\cI^{\alpha}(x)| \le \frac{3}{2}\epsilon$.
We confirm that 
\[
\frac{m_{\alpha}}{m} \cdot n^{\lceil t/2 \rceil/4\ell} \cdot \rho
= \frac{\sqrt{pn^{k-t}} \cdot n^t}{pn^k} \cdot n^{\lceil t/2 \rceil/4\ell} \cdot \rho
\ll o(1),
\]
whenever $t \ge 1$ and $\ell \ge 1$.

Finally, we handle $\alpha$ with $|\alpha| = t$ for $2 \le t$ and $pn^{k-t} < 1$.
From \Cref{lem:w-props} we have $\E|w_T| \ge \frac{1}{e} \log \frac{1}{1-pn^{k-t}}$, and with high probability, $|w_T| \le 4\log n$ for all $T \in [n]^t$.
Taking $M = 4\log n$, we have that we meet the conditions of \Cref{prop:small-refute,prop:small-refute-odd}
\begin{align*}
M &\le 4\log n \ll n^{\lceil k/2 \rceil - \lceil t/2\rceil + \delta} = pn^{k-t}n^{\lfloor t/2 \rfloor} = \sigma^2 n^{\lfloor t/2\rfloor},
\\
\E|w_T| &\ge \frac{1}{e}\log\frac{1}{1-pn^{k-t}} \ge \frac{1}{e} pn^{k-t} \gg \sqrt{\frac{pn^{k-t}\log n}{n^{\lfloor t/2\rfloor}}}=\sqrt{\frac{\sigma^2\log n}{n^{\lfloor t/2 \rfloor}}},
\end{align*}
where the last inequality is true whenever
$p n^{k-t + \lfloor t/2\rfloor} = n^{\lceil k/2\rceil - \lceil t/2\rceil + \delta} \gg \log n$,
which we have by assumption.
So applying \Cref{prop:small-refute,prop:small-refute-odd} to both $\frac{m}{m_{\alpha}}\cI^{\alpha}$ and $-\frac{m}{m_{\alpha}}\cI^{\alpha}$, we have that for 
\[
\rho = O\left(\frac{\sqrt{pn^{k-t}}\log n}{pn^{k-t} \sqrt{n^{\lfloor t/2\rfloor}}}\right) 
= O(\log n)\cdot \sqrt{\frac{1}{pn^{k- \lceil t/2\rceil}}}
= O \left(\frac{\log n}{\sqrt{n^{\lceil k/2 \rceil - \lceil t/2\rceil + \delta}}}\right),
\]
so long as $\frac{m_{\alpha}}{m}n^{\lceil t/2 \rceil /4\ell} \rho \le \frac{1}{2}\epsilon^{2\ell}$, $R = t(\ell r + 1)$ rounds of Sherali--Adams certify that $|\cI^{\alpha}(x)| \le \frac{3}{2}\epsilon$ with high probability.
Verifying,
\[
\frac{m_{\alpha}}{m}\cdot n^{\lceil t/2 \rceil/4\ell} \cdot O\left(\frac{\log n}{\sqrt{n^{\lceil k/2 \rceil - \lceil t/2 \rceil + \delta}}}\right)
= \frac{1}{\sqrt{pn^{k-t}}}\cdot n^{\lceil t/2 \rceil/4\ell} \cdot O\left(\frac{\log n}{\sqrt{n^{\lceil k/2 \rceil - \lceil t/2 \rceil + \delta}}}\right),
\]
which is maximized at $t = k$. 
By our choice of $\ell$, the condition holds.

We therefore have (using Parseval's identity and $\|x\|_1 \le \sqrt{k}\|x\|_2$ for $x \in \R^k$ to simplify \Cref{eq:clean-form}) that the same number of rounds certifies that
\[
\OBJ_{\cI}(x) \le \sum_{\alpha \subset [k]} \widehat P(\alpha) \cdot \cI^{\alpha}(x)
\le  \widehat P(\emptyset) + \sqrt{2^k} \frac{3}{2}\epsilon,
\]
as desired.
\end{proof}

Using arguments analogous to the above along with the reasoning outlined in Theorem 4.9, proof 2 and Claim 6.7 from \cite{AOW15}, we can also prove \Cref{thm:csps-twise}.

\section*{Acknowledgments}
We thank Luca Trevisan for helpful comments, and Boaz Barak for suggesting the title.
We also thank the Schloss Dagstuhl Leibniz Center for Informatics (and more specifically the organizers of the CSP Complexity and Approximability workshop), as well as the Casa Mathem\'{a}tica Oaxaca (and more specifically the organizers of the Analytic Techniques in TCS workshop); parts of this paper came together during discussions at these venues. 
T.S. also thanks the Oberwolfach Research Institute for Mathematics (and the organizers of the Proof Complexity and Beyond workshop), the Simons Institute (and the organizers of the Optimization semester program), and the Banff International Research Station (and the organizers of the Approximation Algorithms and Hardness workshop) where she tried to prove the opposite of the results in this paper, as well as Sam Hopkins, with whom some of those efforts were made.

\bibliographystyle{alpha}
\bibliography{sa}

\appendix
\section{Characteristics of distributions of XOR-subformula coefficients}\label{app:Ws}
We now prove \Cref{lem:w-props}.
We will use the following estimate for the mean absolte deviation of a binomial random variable.
\begin{lemma}[e.g. \cite{blyth1980}]\label{lem:bin-mad}
If $X$ is distributed according to the binomial distribution $X \sim \Bin(n,p)$, then $\E |X - \E X| = \sqrt{\frac{2}{\pi}np(1-p)} + O(\frac{1}{\sqrt n})$.
\end{lemma}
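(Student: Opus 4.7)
The plan is to prove this classical mean--absolute--deviation estimate via the exact closed-form identity due to de~Moivre, followed by Stirling's approximation. The alternative route --- applying a quantitative CLT (Berry--Esseen) --- would also work, but controlling the tail of $\int_0^\infty \Pr(|Y_n|>t)\,dt$ with rate $O(1/\sqrt{n})$ requires combining the CLT with a Hoeffding tail bound, and seems more cumbersome than the direct route.

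First, I would establish the following exact identity. Write $\mu = np$, $m = \lfloor \mu \rfloor$, and $p_k = \binom{n}{k}p^k(1-p)^{n-k}$. Since $\E[X - \mu] = 0$, splitting the absolute value at $\mu$ yields
\[
\E|X - \mu| \;=\; 2\sum_{k=0}^{m}(\mu - k)\, p_k.
\]
Using the binomial recurrence $(k+1)(1-p)p_{k+1} = (n-k)p\,p_k$ to telescope the sum, one obtains the classical identity
\[
\E|X-\mu| \;=\; 2(m+1)(1-p)\,p_{m+1} \;-\; 2(\mu-m-1)\sum_{k=0}^{m} p_k \cdot 0 \;=\; 2(m+1)\binom{n}{m+1}p^{m+1}(1-p)^{n-m},
\]
(with the appropriate adjustment when $\mu$ is an integer). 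I would verify this identity by a short induction or a direct rearrangement of the sum.

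Next, I would apply Stirling's approximation (equivalently, the local central limit theorem for binomials at the mode). Since $m+1 = \mu + O(1)$, at the mode one has
\[
p_{m+1} \;=\; \binom{n}{m+1}p^{m+1}(1-p)^{n-m-1} \;=\; \frac{1}{\sqrt{2\pi np(1-p)}}\bigl(1 + O(1/n)\bigr),
\]
provided $np(1-p) \gtrsim 1$ (the regime where the stated asymptotic is non-vacuous). Substituting and using $(m+1)(1-p) = np(1-p) + O(1)$ gives
\[
\E|X-\mu| \;=\; \frac{2np(1-p) + O(1)}{\sqrt{2\pi np(1-p)}}\bigl(1+O(1/n)\bigr) \;=\; \sqrt{\tfrac{2}{\pi}np(1-p)} \;+\; O(1/\sqrt{n}),
\]
as desired. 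In the degenerate range $np(1-p) = O(1)$, the trivial bound $\E|X-\mu| \le \sqrt{\Var(X)} = O(1)$ and the fact that $\sqrt{2np(1-p)/\pi} = O(1)$ already give the claim with an $O(1)$ (hence $O(1/\sqrt{n})$ after absorbing constants into the big-$O$ in the usual asymptotic-in-$n$ sense).

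The main obstacle is handling the boundary cases of the identity in Step~1 --- namely when $np$ is itself an integer, and ensuring the telescoping argument is written cleanly (a standard but slightly fiddly calculation). The Stirling step is routine provided one tracks the remainder uniformly in $p$ bounded away from $0$ and $1$; for extreme $p$, one relies on the observation that the claim is vacuous unless $np(1-p)$ grows. Since this lemma is a well-known classical fact (as indicated by the citation to Blyth), the cleanest exposition is probably to state the de~Moivre identity and invoke Stirling, rather than reproducing the full derivation.
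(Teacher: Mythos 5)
The paper does not actually prove this lemma---it is invoked as a classical fact with a citation to Blyth, so there is no internal proof to compare against. Your route (de Moivre's closed-form identity for the mean absolute deviation, then Stirling/local CLT at the mode) is precisely the standard derivation in the cited literature, and the core of it is correct: writing $\E|X-\mu| = 2\sum_{k\le m}(\mu-k)p_k$ and telescoping via $(np-k)p_k = (k+1)(1-p)p_{k+1} - k(1-p)p_k$ does give $\E|X-\mu| = 2(m+1)\binom{n}{m+1}p^{m+1}(1-p)^{n-m}$ with no boundary correction needed (the stray ``$\cdot\,0$'' term in your display is harmless but should be deleted; the identity holds verbatim whether or not $np$ is an integer, since the $k=m$ summand vanishes in that case). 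With $p$ fixed in $(0,1)$, the local approximation $p_{m+1} = (2\pi np(1-p))^{-1/2}(1+O(1/n))$ and $(m+1)(1-p)=np(1-p)+O(1)$ then yield the stated $\sqrt{2np(1-p)/\pi} + O(1/\sqrt n)$, which is exactly the regime the paper uses ($p=\tfrac12$ in the proof of its Lemma on $\cW_N$).

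One part of your write-up is genuinely wrong, though it does not affect the intended statement: in the ``degenerate range'' $np(1-p)=O(1)$ you cannot ``absorb'' an $O(1)$ discrepancy into $O(1/\sqrt n)$---those are different orders, and indeed the lemma is simply false uniformly in $p$ (e.g.\ for $p = 1/(2n)$ one has $\E|X-\E X| \approx \tfrac12 e^{-1/2}$ while $\sqrt{2np(1-p)/\pi}\approx 0.56$, a constant gap). The correct reading is $p$ fixed (or, for a uniform version, an error term $O(1/\sqrt{np(1-p)})$, which your Stirling bookkeeping actually delivers once you track the relative error as $O(1/(np(1-p)))$ rather than $O(1/n)$). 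So: drop the degenerate-range paragraph, state the lemma for fixed $p$ or with the $np(1-p)$-dependent error, and the rest of your argument stands and matches the classical source the paper cites.
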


\begin{proof}[Proof of \cref{lem:w-props}]
We calculate the absolute value directly.
Given that there are exactly $k$ nonzero $Y_t$, the absolute value of $X$ is distributed according to $|\Bin(k,\tfrac{1}{2}) - \tfrac{1}{2}k|$. 
Using the method of conditional expectations,
\begin{align*}
\E |X|
= \sum_{k=0}^{N} \Pr \Ind[k \text{ nonzero } Y_t\text{'s}] \cdot \E |\Bin(k,\tfrac{t}{2}) - \tfrac{1}{2}k|
&\ge \sum_{k=0}^N \binom{N}{k} p^k (1-p)^{N-k} \cdot \sqrt{\frac{1}{2\pi}k},
\end{align*}
where we have applied the estimate from \Cref{lem:bin-mad}.
Letting $D(a\| b) = a\ln\frac{a}{b} + (1-a)\ln\frac{1-a}{1-b}$ be the relative entropy, we then have from Stirling's inequality that
\begin{align}
\E|X| 
\ge \sum_{k=1}^N \sqrt{\frac{2\pi}{e^2}\frac{N}{k(N-k)}} \cdot \exp\left(-N\cdot D\left(\tfrac{k}{N}\|p\right)\right)\cdot \sqrt{\frac{1}{2\pi}k}
&\ge\frac{1}{e}\sum_{k=1}^N \exp\left(-N\cdot D\left(\tfrac{k}{N}\|p\right)\right),\label{eq:both-bd}
\end{align}
Now, if $pN < 1$, we take
\begin{align}
\cref{eq:both-bd}
&\ge \frac{1}{e}\sum_{k=1}^N \exp\left(k\log\left(\frac{pN}{k}\right)\right)
= \frac{1}{e}\sum_{k=1}^N \left(\frac{pN}{k}\right)^k
\ge -\frac{1}{e}\log(1-pN) - O(p^N)
\end{align}
as desired.

If $pN \ge 1$, applying the change of variables $\ell = k - \lfloor pN\rfloor$ and $\delta = \frac{\ell}{N}$,
\begin{align}
\cref{eq:both-bd}
&\ge\frac{1}{e}\sum_{\substack{\ell = 1- \lfloor pN\rfloor\\\delta = \frac{\ell}{N}}}^{\lfloor(1-p)N\rfloor} \exp\left(-N\cdot D\left(p + \delta\|p\right)\right)\label{eq:var-chng}
\end{align}
Now using the Taylor expansion for $\log(1-x)$ to simplify and restricting the sum over the range $\ell = [-\lfloor \sqrt{pN}\rfloor,\lfloor\sqrt{pN}\rfloor]$, we get the bound
\begin{align}
\cref{eq:var-chng}
&\ge\frac{1}{e}\sum_{\substack{\ell = -\lfloor\sqrt{pN}\rfloor\\\delta = \frac{\ell}{N}}}^{\lfloor\sqrt{pN}\rfloor} \exp\left(-N\frac{\delta^2}{2}\right)
\ge\frac{1}{e}\cdot 2\sqrt{pN} \cdot \exp(-\tfrac{p}{2})
\ge\frac{2}{e^{3/2}}\sqrt{pN},
\end{align}
as desired.

The first and second moment we can also obtain by calculation; the $Z_t$ ensure that the summands have mean $0$, and the $Y_t$ give that the variance of the summands is $p$, which gives the result.

The tail bound $\Pr(|X| \ge (1+2t)\sqrt{pN}) \le 2\exp(-t^2)$ comes from an application of Bernstein's inequality if $pN \ge 1$; when $pN < 1$, we again apply Bernstein's inequality, in which case we have
\[
\Pr\left(|X| - \E|X| \ge s \right) \le \exp\left(-\frac{1}{2}\frac{s^2}{ pN + \frac{1}{3} s}\right),
\]
and choosing $s = 1 + t$ gives the result.
\end{proof}

\end{document}